\DeclareSymbolFont{bbgreek}{U}{bbold}{m}{n}
\DeclareMathSymbol{\bbSigma}{\mathbb}{bbgreek}{'6}
\newtheorem{theorem}{Theorem}
\newtheorem{lemma}{Lemma}
\newtheorem{definition}{Definition}
\newtheorem{example}{Example}
\newtheorem{property}{Property}
\newcommand\bbR{\mathbb{R}}
\newcommand\bbN{\mathbb{N}}
\newcommand\bxi{\boldsymbol{\xi}}
\newcommand\bx{\boldsymbol{x}}
\newcommand\bu{\boldsymbol{u}}
\newcommand\bv{\boldsymbol{v}}
\newcommand\bn{\boldsymbol{n}}
\newcommand\br{\boldsymbol{r}}
\newcommand\bz{\boldsymbol{z}}
\newcommand\bA{{\bf{A}}}
\newcommand\bB{{\bf{B}}}
\newcommand\bF{\boldsymbol{F}}
\newcommand\bPhi{\boldsymbol{\Phi}}
\newcommand\dd{\,\mathrm{d}}
\newcommand\He{\mathit{He}}
\newcommand\bw{\boldsymbol{w}}
\newcommand\RM{{\cal R}_{M,D}}
\newcommand\tal{\tilde{\alpha}}
\newcommand\cS{{\cal{S}}}
\newcommand\rC[2]{{\rm{C}}_{{#1},{#2}}}
\newcommand\rmC{{\rm{C}}}
\newcommand\bAM{{{\bf{A}}_M}}
\newcommand\bhAM{{\hat{\bf{A}}_M}}
\newcommand\btAM{{\tilde{\bf{A}}_M}}
\newcommand\pd[2]{\dfrac{\partial {#1}}{\partial {#2}}}
\newcommand\mN{{\mathcal N}_{D}}
\numberwithin{equation}{section}
\newcommand\comment[1]{}
\theoremstyle{remark} \newtheorem{remark}{Remark}}
\title{Globally Hyperbolic Regularization of Grad's Moment System}
\author{Zhenning Cai\thanks{School of Mathematical Sciences, Peking
    University, Beijing, China, email: {\tt caizn@pku.edu.cn}.},~~
  Yuwei Fan\thanks{School of Mathematical Sciences, Peking University,
    Beijing, China, email: {\tt ywfan@pku.edu.cn}.},~~ Ruo
  Li\thanks{CAPT, LMAM \& School of Mathematical Sciences, Peking
    University, Beijing, China, email: {\tt rli@math.pku.edu.cn}.}}
\begin{document}
\maketitle
\begin{abstract}
  In this paper, we propose a globally hyperbolic regularization to
  the general Grad's moment system in multi-dimensional spaces.
  Systems with moments up to an arbitrary order are studied. The
  characteristic speeds of the regularized moment system can be
  analytically given and only depend on the macroscopic velocity and
  the temperature. The structure of the eigenvalues and eigenvectors
  of the coefficient matrix is fully clarified. The regularization
  together with the properties of the resulting moment systems is
  consistent with the simple one-dimensional case discussed in
  \cite{Fan}. Besides, all characteristic waves are proven to be
  genuinely nonlinear or linearly degenerate, and the studies on the
  properties of rarefaction waves, contact discontinuities and shock
  waves are included.

\vspace*{4mm}
\noindent {\bf Keywords:} Grad's moment system; regularization; global
hyperbolicity; characteristic wave
\end{abstract}

\section{Introduction}
The kinetic gas theory, which is based on the Boltzmann equation, is
one of the fundamental tools in modelling non-equilibrium
processes. Nevertheless, in most cases, a direct numerical
discretization of the Boltzmann equation leads to unacceptable
computational costs. In 1940s, Grad \cite{Grad} proposed the moment
approximation of the distribution function, trying to establish a
series of intermediate models between the fluid dynamics and the
kinetic theory. However, due to a number of defects in Grad's
13-moment equations, such as the appearance of unphysical subshocks,
nonexistence of an entropy function, and lack of global hyperbolicity,
not much attention is paid to the moment method in the last century.

In the recent twenty years, as the investigation into the moment
method becomes deeper, various ``regularizations'' are proposed to
challenge the traditional accusations on the moment method. A list of
relevant publications can be found in the references of
\cite{TorrilhonEditorial}. Recently, we are interested in the large
moment system together with its numerical methods \cite{NRxx,
  NRxx_new, Cai, Li}, and it is found that the lack of the
well-posedness due to the loss of global hyperbolicity is a major
obstacle in our simulations, especially for large Mach number gas
flows \cite{NRxx_new}. Torrilhon \cite{Torrilhon2010} provided a
13-moment hyperbolic moment system based on multi-variate Pearson-IV
distributions, but it seems unlikely to extend the same technique to
systems with large number of moments. As discussed in
\cite{TorrilhonEditorial}, Levermore \cite{Levermore} gave a partial
answer to the question of hyperbolicity of large moment system based
on a maximal entropy distribution function. However, the analytical
forms of Levermore's equations cannot be obtained once the number of
moments is greater than $10$. While exploring the method ensuring the
hyperbolicity of the moment system, we discovered \cite{Fan} that the
structure of the characteristic polynomial of Grad's moment equations
with one-dimensional microscopic velocity is rather simple; thus a
globally hyperbolic regularization can be achieved by simply adding
two terms to the equation of the highest order moment.

In this paper, the results in \cite{Fan} are extended to the
multi-dimensional space. For multi-dimensional moment systems, the
regularization method is consistent with the one-dimensional case.
Due to the complexity of the moment systems, this paper is mainly
devoted to a rigorous proof of the hyperbolicity of regularized moment
system for any space dimensions and an arbitrary order of moments. The
result is obtained by firstly restricting the spatial variable in
the one-dimensional space, and then it is generalized to the
multi-dimensional space using the rotation invariance of the
regularized system. For the case of one-dimensional spatial variable,
the structure of the coefficient matrix is similar as the Hessenberg
matrix, which enables us to calculate the eigenvectors for a given
eigenvalue. Then, the hyperbolicity of the moment system follows by
counting the number of linearly independent eigenvectors. At the same
time, the expressions of all characteristic speeds are obtained, each
of which is a sum of the macroscopic velocity and the square root of
the temperature scaled by a zero of the Hermite polynomial. Besides,
we prove that each characteristic filed of the hyperbolic moment
systems is either genuinely nonlinear or linearly degenerate, and some
properties of the rarefaction waves, the contact discontinuities and
the shock waves are investigated.

The rest of this paper is arranged as follows: in Section
\ref{sec:NRxx_method}, a brief review on the moment methods of
Boltzmann equation and the results in \cite{Fan} are presented.
Section \ref{sec:hyperbolicms} gives the globally hyperbolic
regularization for moment system with one-dimensional spatial variable
and multi-dimensional microscopic velocities. And in Section
\ref{sec:hyperbolicitymd}, the result for full multi-dimensional
moment system is proved. The study on the characteristic waves is
carried out in Section \ref{sec:rp}.


\section{Preliminaries} \label{sec:NRxx_method}

In this section, a concise introduction of the Boltzmann equation is
presented. And then some results of the work on the moment method in
\cite{NRxx_new, Fan} are briefly reviewed.

\subsection{Moment methods for Boltzmann equation}
\label{sec:boltzmann_equations}
Let the motion of particles be depicted by the distribution function
$f(t, \bx, \bxi)$ governed by the Boltzmann transport equation
\begin{equation} \label{eq:Boltzmann}
\pd{f}{t} +
  \sum_{j=1}^D\xi_j \pd{f}{x_j} = Q(f,f),
\qquad t \in \bbR^+, \quad \bx, \bxi \in \bbR^D,
\end{equation}
where $t$ denotes the time, $\bx = (x_1, \cdots, x_D)$ and $\bxi =
(\xi_1, \cdots, \xi_D)$ stand for the spatial coordinates and the
microscopic velocity, respectively. The right hand side $Q(f,f)$ is
the collision term describing the interaction between particles. In
this paper, we are focusing on the transportation part, thus the
collisionless Boltzmann equation with vanished $Q(f,f)$ is considered.

The moment method proposed by Grad \cite{Grad} approximates the
distribution function by a finite set of moments. To achieve this, we
expand $f$ into the Hermite series as in \cite{NRxx}:
\begin{equation} \label{eq:expansion}
f(t,\bx,\bxi) = \sum_{\alpha \in \bbN^D}f_{\alpha}(t,\bx)
  \mathcal{H}_{\theta(t,\bx),\alpha}
  \left( \frac{\bxi - \bu(t,\bx)}{\sqrt{\theta(t,\bx)}} \right),
\end{equation}
where $\alpha = (\alpha_1, \cdots, \alpha_D)$ is a $D$-dimensional
multi-index, and the basis functions are defined as
\begin{equation} \label{eq:cal_H}
\mathcal{H}_{\theta,\alpha}(\bz) =
  \prod_{d=1}^D \frac{1}{\sqrt{2\pi}}\theta^{-\frac{\alpha_d+1}{2}}
  \He_{\alpha_d}(z_d) \exp \left( -\frac{z_d^2}{2} \right),
\quad \bz = (z_1, \cdots, z_D) \in \bbR^D,
\end{equation}
where $\He_k$ is the $k$-th degree Hermite polynomial:
\begin{equation} \label{eq:He}
\He_k(x) = (-1)^k \exp \left( \frac{x^2}{2} \right)
  \frac{\dd^k}{\dd x^k} \exp \left( -\frac{x^2}{2} \right),
  \quad k \in\bbN.
\end{equation}
In \eqref{eq:expansion}, $\bu(t, \bx) = (u_1(t, \bx), \cdots, u_D(t,
\bx))$ and $\theta(t, \bx)$ denote the macroscopic velocity and
temperature, respectively, and they are related to $f$ by
\begin{equation}\label{eq:def_rhoutheta}
\begin{aligned}
\rho(t,\bx) &= \int_{\bbR^D} f(t, \bx, \bxi) \dd \bxi,\\
\rho(t,\bx) \bu(t,\bx) &= \int_{\bbR^D}\bxi f(t,\bx,\bxi)\dd\bxi,\\
\rho(t,\bx) |\bu(t,\bx)|^2 + D \rho(t,\bx)\theta(t,\bx)
  &= \int_{\bbR^D} |\bxi|^2 f(t,\bx,\bxi)\dd\bxi,
\end{aligned}
\end{equation}
where $\rho$ stands for the density of the gas. The following
relations can be deduced from the orthogonality of Hermite
polynomials:
\begin{equation}
f_0 = \rho, \quad f_{e_j} = 0, \quad \sum_{d=1}^D f_{2e_d} = 0,
  \qquad j = 1, \cdots, D,
\end{equation}
where $e_j$ is the $D$-dimensional multi-index with its $j$-th
component to be the only nonzero one and equals to $1$.

The moment system has been deduced in \cite{NRxx_new}, and here we
directly present the result therein:
\begin{equation} \label{eq:momentseqs1}
\begin{split}
& \left( \frac{\partial f_{\alpha}}{\partial t} +
  \sum_{d=1}^D \frac{\partial u_d}{\partial t} f_{\alpha-e_{d}}
  + \frac{1}{2} \frac{\partial \theta}{\partial t}
    \sum_{d =1}^D f_{\alpha-2e_d}
\right) \\
& \qquad + \sum_{j=1}^D \left(
  \theta \frac{\partial f_{\alpha-e_j}}{\partial x_j} +
  u_j \frac{\partial f_{\alpha}}{\partial x_j} +
  (\alpha_j+1)
    \frac{\partial f_{\alpha+e_j}}{\partial x_j}
\right) \\
& \qquad {} + \sum_{j=1}^D \sum_{d=1}^D
\frac{\partial u_d}{\partial x_{j}} \left(
  \theta f_{\alpha-e_d-e_{j}} + u_{j} f_{\alpha-e_d}
  + (\alpha_{j}+1) f_{\alpha-e_d+e_{j}}
\right) \\
& \qquad {} + \frac{1}{2} \sum_{j=1}^D \sum_{d=1}^D
  \frac{\partial \theta}{\partial x_{j}}
  \left( \theta f_{\alpha-2e_d-e_{j}} +
  u_{j} f_{\alpha-2e_d} +
  (\alpha_{j}+1) f_{\alpha-2e_d+e_{j}}
\right) = 0, \quad \forall \alpha \in \bbN^D.
\end{split}
\end{equation}
In this equation, $f_{\beta}$ is taken as zero if any components of
$\beta$ is negative. Some special choices of $\alpha$ lead to the
classic hydrodynamic equations:
\begin{subequations}\label{eq:conservation_laws}
\begin{align}
&\pd{\rho}{t} + \sum_{j=1}^D\left(u_j\pd{\rho}{x_j} + 
  \rho \pd{u_j}{x_j}\right) = 0, \\
\label{eq:u}
& \rho \pd{u_i}{t} + \sum_{j=1}^D\left(\rho u_j \pd{u_i}{x_j} +
  \pd{p_{e_i+e_j}}{x_j}\right) = 0, \quad i=1,\cdots,D,\\
\label{eq:theta}
&\frac{D}{2}\rho \pd{\theta}{t}
    + \sum_{j=1}^D\left( \frac{D}{2} \rho u_j \pd{\theta}{x_j}
  + \pd{q_j}{x_j}\right) 
  + \sum_{i=1}^D\sum_{j=1}^Dp_{e_i+e_j} \pd{u_i}{x_j} = 0,
\end{align}
\end{subequations}
where $p_{e_i+e_j}$ is the pressure tensor\footnote{In some
  literatures, the pressure tensor is denoted as $p_{ij}$, $i,j =
  1,\cdots, D$. Here the special subscript $e_i + e_j$ is used to
  match the form of general moments $f_{e_i+e_j}$ for convenience in
  later use.} and $q_j$ is the heat flux. They are defined as
\begin{subequations}
\begin{align}
\label{eq:basic_para_pij}
p_{e_i+e_j} & = \int_{\bbR^D} (\xi_i - u_i) (\xi_j - u_j) f\dd\bxi
  = \delta_{ij} \rho \theta + (1 + \delta_{ij}) f_{e_i + e_j}, \\
q_j & = \frac{1}{2} \int_{\bbR} |\bxi - \bu|^2 (\xi_j-u_j) f \dd \bxi
  = 2 f_{3e_j} + \sum_{d=1}^D f_{e_j + 2e_d},
\end{align}
\end{subequations}
where $\delta$ is Kronecker's delta symbol.  We refer the readers to
\cite{NRxx_new} for the detailed derivation of
\eqref{eq:conservation_laws}.

Since \eqref{eq:momentseqs1} forms an infinite set of moment equations
which are not suitable for practical use, the moment closure is in
need. The simplest way is to select an integer $M \geqslant 3$ and
force $f_{\alpha} = 0$ if $|\alpha| > M$, and the result is the
Grad-type system with $\binom{M+D}{D}$ moments.

\subsection{Regularization with 1D velocity space}
It is well known that the lack of global hyperbolicity is one of the
major defects of Grad's moment equations. For the thirteen moment
case, the hyperbolicity region has been analytically obtained in
\cite{Muller}. The construction of globally hyperbolic moment systems
is very meaningful to the robustness of fluid simulation using moment
approximation. In this direction, a general method by Levermore in
\cite{Levermore} on the construction of symmetric hyperbolic moment
systems is proposed. Later, Torillhon \cite{Torrilhon2010} raises a
clever idea to enlarge the hyperbolicity region of the 13-moment
system by using Pearson-IV-distributions. In \cite{Fan}, we have
studied the general 1D moment systems and found a way to make globally
hyperbolic regularization based on the characteristic speed
correction. Here we are going to give a brief review on the results
therein.

When $D = 1$, the multi-index $\alpha$ becomes a natural number.
Substituting \eqref{eq:u} and \eqref{eq:theta} into
\eqref{eq:momentseqs1}, we can eliminate the time derivative of the
velocity and temperature. Thus the $M$-th order Grad's moment system
can be written in the form of a quasi-linear system
\begin{equation} \label{eq:quasilinear_1D}
\frac{\partial \bw}{\partial t} +
  {\bf A}(\bw) \frac{\partial \bw}{\partial x} = 0,
\end{equation}
where ${\bf A}$ is a matrix dependent on $\bw$, and
\begin{equation}
\bw = (\rho, u, \theta, f_3, \cdots, f_M).
\end{equation}
In \cite{Fan}, we have obtained the following results:
\begin{enumerate}
\item The characteristic polynomial of ${\bf A}(\bw)$ is
\begin{equation}
\begin{array}{rcl}
|\lambda {\bf I} - {\bf A}| &=& 
  \theta^{\frac{M+1}{2}} \He_{M+1} \left(
    \dfrac{\lambda - u}{\sqrt{\theta}}
  \right) \\
&-& \dfrac{(M+1)!}{2\rho} \left[
    \left( (\lambda - u)^2 - \theta \right) f_{M-1}
    + 2(\lambda - u) f_M
  \right].
\end{array}
\end{equation}
\item By adding the regularization term based on characteristic speed correction
\begin{equation}\label{eqRM1D}
\mathcal{R}_M = \frac{M+1}{2} \left(
  2 f_M \pd{u}{x} + f_{M-1} \pd{\theta}{x}\right)
\end{equation}
to the right hand side of the last equation of
\eqref{eq:quasilinear_1D}, the system is turned to be globally
hyperbolic and the eigenvalues of the regularized moment system are
\begin{equation}
u + \rC{j}{M+1} \sqrt{\theta}, \quad j = 1,\cdots,M+1,
\end{equation}
where $\rC{j}{k}$ is the $j$-th root of the Hermite polynomial
$\He_{k}(x)$, noticing that $\He_k(x), k\in\bbN$ has $k$ different
zeros, which read $\rC{1}{k}, \cdots, \rC{k}{k}$ and satisfy
$\rC{1}{k}< \cdots < \rC{k}{k}$.
\end{enumerate}
The second result gives a practical implementation of a globally
hyperbolic regularization.

\subsection{Reformulation of the moment system}\label{sec:momentsystem}
In order to facilitate the studying of the moment system when $D\ge2$,
we rewrite \eqref{eq:momentseqs1} in another form. Let $p =
\rho\theta$, then $p = \frac{1}{D}\sum_{d=1}^Dp_{2e_d}$, and we
have
\begin{equation}\label{eq:pdtheta}
\frac{\partial\theta}{\partial x_j} =
-\frac{\theta}{\rho}\pd{\rho}{x_j}
+\frac{1}{D\rho}\sum_{d=1}^D\pd{p_{2e_d}}{x_j},\quad j=1,\cdots,D.
\end{equation}
By substituting \eqref{eq:conservation_laws} and \eqref{eq:pdtheta}
into \eqref{eq:momentseqs1}, the following equation is obtained with
some simplification:
\begin{equation}\label{eq:momentseqs2}
\begin{split}
&\pd{f_{\alpha}}{t}+\sum_{j=1}^D\left(
    \theta\pd{f_{\alpha-e_j}}{x_j}+u_j\pd{f_{\alpha}}{x_j}
    +(\alpha_j+1)\pd{f_{\alpha+e_j}}{x_j}\right) 
        +\sum_{j=1}^D\left(-\frac{\theta}{2\rho}
          C_{\theta,\alpha}^{(j)}\right)\pd{\rho}{x_j}\\
&\qquad +\sum_{j=1}^D\sum_{d=1}^D\pd{u_d}{x_j}\left(
    \theta f_{\alpha-e_d-e_j} + (\alpha_j+1)f_{\alpha-e_d+e_j}
        -\frac{C_{\alpha}}{D\rho}p_{e_j+e_d}\right)\\
&\qquad +\sum_{j=1}^D\sum_{d=1}^D\left(
        \left(-\frac{f_{\alpha-e_d}}{\rho}\right)\pd{p_{e_j+e_d}}{x_j}
        +\frac{C_{\theta,\alpha}^{(j)}}{2D\rho}\pd{p_{2e_d}}{x_j}\right)
        +\left(-\frac{C_{\alpha}}{D\rho}\right)\sum_{j=1}^D\pd{q_j}{x_j}
=0,
\end{split}
\end{equation}
where $C_{\alpha}$ and $C_{\theta,\alpha}^{(j)}$ are defined as
\begin{subequations}\label{eq:c_alpha}
\begin{align}
C_{\alpha}  &= \sum_{k=1}^Df_{\alpha-2e_k},\\
C_{\theta,\alpha}^{(j)} &= \sum_{k=1}^D\left(\theta
            f_{\alpha-2e_k-e_j}+(\alpha_j+1)f_{\alpha-2e_k+e_j}\right).
\end{align}
\end{subequations}
Then collecting \eqref{eq:conservation_laws}, \eqref{eq:momentseqs2}
and \eqref{eq:basic_para_pij}, we get
\begin{equation}\label{eq:moment_pi}
\begin{split}
& \pd{p_{2e_i}/2}{t} + \sum_{j=1}^D u_j\pd{p_{2e_i}/2}{x_j}
  +\sum_{j=1}^D \left( \frac{1}{2}+\delta_{ij} \right)
  \rho\theta\pd{u_j}{x_j} \\
& \quad + \sum_{j=1}^D\sum_{d=1}^D
  (2\delta_{ij}+1)f_{2e_i-e_d+e_j}\pd{u_d}{x_j} 
  +\sum_{j=1}^D(2\delta_{ij}+1)\pd{f_{2e_i+e_j}}{x_j} = 0,
  \quad i = 1,\cdots,D.
\end{split}
\end{equation}
The \eqref{eq:conservation_laws} together with \eqref{eq:moment_pi}
and \eqref{eq:momentseqs2} form a moment system with infinite number
of equations, which is equivalent to \eqref{eq:momentseqs1}.


\section{System in 1D Spatial Space} \label{sec:hyperbolicms} 

In order to derive the regularization term to achieve the
hyperbolicity of the moment systems as in Section
\ref{sec:momentsystem}, we first consider in this section the special
case with homogeneous dependence of the distribution function on
spatial coordinate $\bx$ except for $x_1$ direction. Since the
velocity space is multi-dimensional, the result in this section is
essential different from \cite{Fan}. The general case in
multi-dimensional spatial space is studied in the next section based
on the results herein and the Galilean invariance of the
regularization.

In 1D spatial space, the distribution function $f(t, x_1, \bxi)$
satisfies
\begin{equation}\label{eq:1Dboltzmann}
\pd{f}{t}+\xi_1\pd{f}{x_1}=0, 
 \qquad t \in \bbR^+, \quad x_1\in\bbR,\quad \bxi \in \bbR^D.
\end{equation}
The moment system in Section \ref{sec:momentsystem} degenerates to a
simpler form. The conservation of mass, momentum and energy
\eqref{eq:conservation_laws} turn into
\begin{subequations}\label{eq:conservation_laws1D}
\begin{align}
&\pd{\rho}{t} + u_1\pd{\rho}{x_1} + 
  \rho \pd{u_1}{x_1} = 0, \\
\label{eq:u1d}
& \rho \pd{u_i}{t} + \rho u_1 \pd{u_i}{x_1} +
  \pd{p_{e_1+e_i}}{x_1} = 0, \quad i=1,\cdots,D,\\
\label{eq:theta1d}
&\frac{D}{2}\rho \pd{\theta}{t}
    +  \frac{D}{2} \rho u_1 \pd{\theta}{x_1}
  + \pd{q_1}{x_1} 
  + \sum_{i=1}^Dp_{e_1+e_i} \pd{u_i}{x_1} = 0.
\end{align}
\end{subequations}
The moment equations \eqref{eq:momentseqs2} become
\begin{equation}\label{eq:momentseqs1d}
\begin{split}
&\pd{f_{\alpha}}{t}+
    \theta\pd{f_{\alpha-e_1}}{x_1}+u_1\pd{f_{\alpha}}{x_1}
    +(\alpha_1+1)\pd{f_{\alpha+e_1}}{x_1} 
        -\frac{\theta}{2\rho}
          C_{\theta,\alpha}^{(1)}\pd{\rho}{x_1}\\
&\qquad +\sum_{d=1}^D\pd{u_d}{x_1}\left(
    \theta f_{\alpha-e_d-e_1} + (\alpha_1+1)f_{\alpha-e_d+e_1}
        -\frac{C_{\alpha}}{D\rho}p_{e_1+e_d}\right)\\
&\qquad +\sum_{d=1}^D\left(
        -\frac{f_{\alpha-e_d}}{\rho}\pd{p_{e_1+e_d}}{x_1}
        +\frac{C_{\theta,\alpha}^{(1)}}{2D\rho}\pd{p_{2e_d}}{x_1}\right)
        -\frac{C_{\alpha}}{D\rho}\pd{q_1}{x_1}
=0,
\end{split}
\end{equation}
where $C_{\alpha}$ and $C_{\theta,\alpha}^{(1)}$ are defined in
\eqref{eq:c_alpha}. The governing equations of $p_{2e_i}$
\eqref{eq:moment_pi} turn into: for $i =1,\cdots,D$,
\begin{equation}\label{eq:moment_pi1d}
\begin{split}
\pd{p_{2e_i}/2}{t} & +  u_1\pd{p_{2e_i}/2}{x_1}
+(\frac{1}{2}+\delta_{i1})\rho\theta\pd{u_1}{x_1} \\
&+ \sum_{d=1}^D(2\delta_{i1}+1)f_{2e_i-e_d+e_1}\pd{u_d}{x_1} 
+(2\delta_{i1}+1)\pd{f_{2e_i+e_1}}{x_1} = 0.
\end{split}
\end{equation}

Analogously to the moment system in Section
\ref{sec:boltzmann_equations}, let $f_{\alpha}=0$, $|\alpha|>M$ for
$M\ge 3$, and then \eqref{eq:conservation_laws1D} and
\eqref{eq:momentseqs1d}, together with \eqref{eq:moment_pi1d} form a
closed moment system corresponding to \eqref{eq:1Dboltzmann}.

\newcommand\seq[2]{#1\!:\!#2}
To facilitate the reading below in studying the moment system, some
notations are introduced as follows:
\begin{subequations}
\begin{align}
\text{if }\boldsymbol{a}=(a_1,\cdots,a_n)\in\bbR^{n},\text{ then}
    &~~\boldsymbol{a}(\seq{i}{j})=(a_i,\cdots,a_j), \\
\text{if }~\bA=(a_{ij})_{n\times n}\in\bbR^{n\times n},\text{ then}
    &~~\bA(i,\,\seq{j}{k})=(a_{i,j},\cdots,a_{i,k}), \\
    &~~\bA(\seq{i}{l},\,\seq{j}{k})=\begin{pmatrix}
            a_{i,j}& a_{i,j+1} & \cdots& a_{i,k}\\
            a_{i+1,j}& a_{i+1,j+1} & \cdots& a_{i+1,k}\\
            \vdots &\vdots&\ddots&\vdots\\
            a_{l,j}& a_{l,j+1} & \cdots& a_{l,k}
            \end{pmatrix},
\end{align}
\end{subequations}
\begin{subequations}
\begin{align}
\mathcal{D}&=\{1,2,\cdots,D\},   &
    \tilde{\alpha} &= (0, \alpha_2,\cdots,\alpha_D),     \\
\hat{\alpha} & = (\alpha_2,\cdots,\alpha_D)\in\bbN^{D-1}, &
    \alpha! &= \prod_{i=1}^D\alpha_i!, ~~~|\alpha| = \sum_{i=1}^D
    \alpha_i,\label{eq:def_hatalpha} \\
\cS_{D,M} &= \{\alpha\in\bbN^D\mid |\alpha| \le M\},     &
    \cS_{D,M}(\hat{\alpha}) &= \{\beta\in\cS_{D,M}\mid \hat{\beta} =
        \hat{\alpha}\}.\label{eq:def_SDM}
\end{align}
\end{subequations}
We permute the elements of $\cS_{D,M}$ by lexicographic order. Then
for any $\alpha\in \cS_{D,M}$,
\begin{equation}\label{eq:def_mN}
\mN(\alpha) = \sum_{i=1}^D\binom{\sum_{k=D-i+1}^D\alpha_k + i-1}{i}+1
\end{equation}
holds, where $\mN(\alpha)$ is the ordinal number of $\alpha$ in
$\cS_{D,M}$, and the cardinal number of set $\cS_{D,M}$ is $N =
\mN(Me_D) = \binom{M+D}{D}$, which is total number of moments if a
truncation with $|\alpha|\le M$ is introduced. In addition, it is
clear that for each $\alpha, \beta \in \cS_{D,M}$ and $\hat{\alpha}
\neq \hat{\beta}$,
\begin{equation}
\cS_{D,M}(\hat{\alpha})\bigcap \cS_{D,M}(\hat{\beta}) =\emptyset,\quad
\cS_{D,M} = \bigcup_{\alpha\in\cS_{D,M}}\cS_{D,M}(\hat{\alpha}),
\end{equation}
simultaneously hold.

\subsection{Structure of coefficient matrix}
Similar to the 1D case, a truncation with $|\alpha|\le M, M\ge 3$ is
applied. Let $\bw \in\bbR^N$ and for each $i,j\in\mathcal{D},$ and
$i\neq j$,
\begin{subequations}\label{eq:basic_moments}
\begin{align}
w_1 &= \rho,    &   w_{\mN(e_i)} &= u_i,\\
w_{\mN(2e_i)} &= \frac{p_{2e_i}}{2},    &
    w_{\mN(e_i+e_j)} &= p_{e_i+e_j}, \\
w_{\mN(\alpha)} &= f_{\alpha}, \quad 3\le|\alpha|\le M.
\end{align}
\end{subequations}
Combining \eqref{eq:conservation_laws1D} with \eqref{eq:moment_pi1d}
and \eqref{eq:momentseqs1d}, we obtain
\begin{equation} \label{eq:1d}
\pd{\bw}{t} + \bAM\pd{\bw}{x_1} = 0,
\end{equation}
where $\bAM$ depends on \eqref{eq:conservation_laws1D},
\eqref{eq:moment_pi1d} and \eqref{eq:momentseqs1d}.

Clearly, all the matrix $\bAM$ for any $D\in\bbN^+$, $3\le M\in\bbN$
are well-defined though quite complex. Here we first give some simple
examples and conclude a few basic properties of the matrix $\bAM$.

\begin{example}\label{ex:matrix}
  If $D=2$, the ordinal number of $\alpha$ in $\cS_{D,M}$ is
  $\mN(\alpha) = \frac{(\alpha_1 + \alpha_2 + 1)(\alpha_1 +
    \alpha_2)}{2} + \alpha_2 + 1$. The permutation of $\bw$ is showed
  in Fig. \ref{fig:order_w}.  As the simplest case, the matrix $\bA_3$
  is
\begin{equation}
\bA_3=
\left[ \begin {array}{cccccccccc} 
u_1&\rho&0&0&0&0&0&0&0&0\\
\noalign{\medskip}0&u_1&0&2{\rho}^{-1}&0&0&0&0&0&0\\
\noalign{\medskip}0&0&u_1&0&{\rho}^{-1}&0&0&0&0&0\\
\noalign{\medskip}0&\frac{3}{2}p_{{1}}&0&u_1&0&0&3&0&0&0\\
\noalign{\medskip}0&2f_{{1,1}}&p_{{1}}&0&u_1&0&0&2&0&0\\
\noalign{\medskip}0&p_{{2}}&f_{{1,1}}&0&0&u_1&0&0&1&0\\
\noalign{\medskip}{\frac {\rho{\theta}^{2}-2\theta p_{{1}}}{2\rho}}&
    4f_{{3,0}}&0&\theta&0&{\frac {2f_{2,0}}{\rho}}&u_1&0&0&0\\
\noalign{\medskip}-{\frac {3\theta f_{{1,1}}}{2\rho}}&3f_{{2,1}}&
    3f_{{3,0}}&-{\frac {f_{{1,1}}}{2\rho}}&{\frac {\rho\theta-f_{2,0}}{\rho}}&
    {\frac {3f_{{1,1}}}{2\rho}}&0&u_1&0&0\\
\noalign{\medskip}-\frac{1}{2}{\theta}^{2}&2\,f_{{1,2}}&2f_{{2,1}}&
    {\frac {2f_{2,0}}{\rho}}&-{\frac {f_{{1,1}}}{\rho}}&\theta&0&0&u_1&0\\
\noalign{\medskip}-{\frac {\theta f_{{1,1}}}{2\rho}}&f_{{0,3}}&
    f_{{1,2}}&{\frac {f_{{1,1}}}{2\rho}}&{\frac {f_{2,0}}{\rho}}&
    {\frac {f_{{1,1}}}{2\rho}}&0&0&0&u_1\end {array} \right],
\end{equation}
whereas $p_1 = p_{2e_1}, f_{m,n}=f_{me_1+ne_2}$. If $M>3$, for any
$\alpha\in\bbN^2$, and $3<|\alpha|\le M$,
\begin{subequations}\label{eq:AMD2}
\begin{align}
&\bAM(\seq{1}{10}, \, \seq{1}{10}) = \bA_3,\\
&\bAM(\mN(\alpha), \mN(\alpha)) =  u_1,\\
&\bAM(\mN(\alpha), \mN(\alpha-e_1)) =  \theta, 
    \quad\text{if }\alpha_1>0, \label{eq:AMD2:3} \\
&\bAM(\mN(\alpha), \mN(\alpha+e_1)) =  \alpha_1+1, 
    \quad\text{if }|\alpha|< M,\\ 
\begin{split}
&\bAM(\mN(\alpha),\,\seq{1}{9})
=(-\frac{\theta}{2\rho}C_{\theta,\alpha}^{(1)}, ~~
    \theta
    f_{\alpha-2e_1}+(\alpha_1+1)f_{\alpha}-\frac{C_{\alpha}}{2\rho}p_{2e_1},\\
   & \qquad\qquad\qquad\qquad\qquad\theta
    f_{\alpha-e_1-e_2}+(\alpha_1+1)f_{\alpha+e_1-e_2}-\frac{C_{\alpha}}{2\rho}p_{e_1+e_2},\\
    &\qquad\qquad\qquad\qquad\qquad-2\frac{f_{\alpha-e_1}}{\rho}+\frac{C_{\theta,\alpha}^{(1)}}{2\rho},~
    -\frac{f_{\alpha-e_2}}{\rho}, ~\frac{C_{\theta,\alpha}^{(1)}}{2\rho},
    -\frac{3C_{\alpha}}{2\rho},~ 0,~ -\frac{C_{\alpha}}{2\rho}), 
\end{split} \label{eq:AMD2:5}
\end{align}
\end{subequations}
where $C_{\alpha}$ and $C_{\theta,\alpha}^{(1)}$ are defined in
equation \eqref{eq:c_alpha}. We remark that
\begin{itemize}
\item an entry $\bAM(i,j)$, if not defined above, is taken as zero;
\item for $|\alpha|=4$, some $\bAM(i,j)$ may be double defined in
  \eqref{eq:AMD2:3} and \eqref{eq:AMD2:5}, the value of which is the
  sum of the both expressions.
\end{itemize}
\begin{figure}[h]\label{fig:sp_ori}
\centering
  \includegraphics[width=0.75\textwidth]{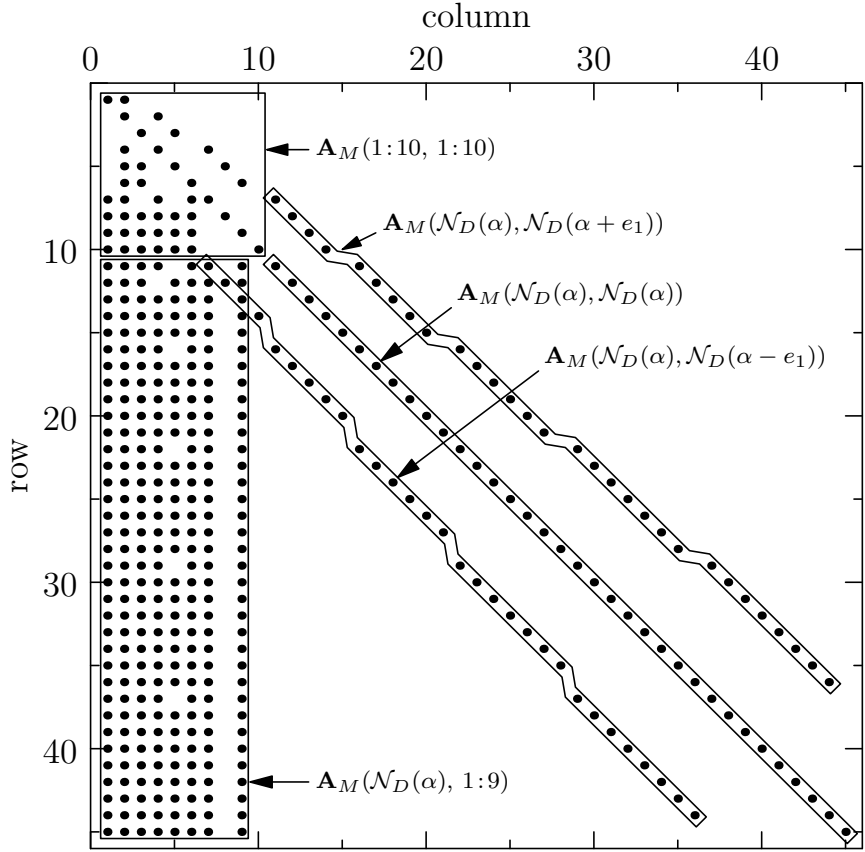}
  \caption{The sparse matrix pattern of $\bAM$ with $M=8, D=2$. Its
      nonzero entries are defined as in \eqref{eq:AMD2}.}
\end{figure}
\end{example}

Fig. \ref{fig:sp_ori} gives the sparse matrix pattern of $\bAM$ with
$M=8$ and $D=2$. It is observed that there is no more than one nonzero
component of $\bAM(i, \seq{i+1}{N})$, for each $i=1$, $\cdots$,
$N$. Precisely, there is a unique nonzero component as $1 \le i \le
\mN((M-1)e_D)$ and $\bAM(i, \seq{i+1}{N})={\bf 0}$ as
$i>\mN((M-1)e_D)$.  Noticing that the column index of the nonzero
entries in the upper triangular part of $\bAM$ on different rows are
different from each other, this makes one recall the form of lower
Hessenberg matrix, of which the only nonzero entries in the upper
triangular part on the $i$-th row is located at position $(i,
i+1)$. This property of Hessenberg matrix makes it very convenient for
one to calculate its eigenvectors once eigenvalues are given using a
row by row sequential procedure. Here $\bAM$ is essentially the same
as lower Hessenberg matrix on this point, and we notice that its lower
triangular part is sparse, hence we are provided the approach to
calculate the eigenvalues $\bAM$ together with the corresponding
eigenvectors using the same technique.

Furthermore, \eqref{eq:AMD2} shows the diagonal entries of the matrix
$\bAM$ are all $u_1$, and the entries of the matrix
$\bAM-u_1\boldsymbol{\rm I}$ are independent of $u_i,i\in\mathcal{D}$,
where $\boldsymbol{\rm I}$ is the $N\times N$ identity matrix.
In fact, \eqref{eq:1d} can be written as
\begin{equation}\label{eq:1d_material der}
\frac{{\rm D} \bw}{{\rm D} t}+\left(\bAM-u_1{\bf I}\right)
        \pd{\bw}{x_1}=0,
\end{equation}
where $\dfrac{{\rm D}~}{{\rm D}t}$ is material derivative defined as
\[
\frac{{\rm D}~}{{\rm D}t}=\pd{}{t}+u_1\pd{}{x_1}.
\]
Hence, that $\bAM-u_1\boldsymbol{\rm I}$ are independent of
$u_i,i\in\mathcal{D}$ indicates the moment system is translation
invariant. On the other hand, eigenvalues of $\bAM$ can be written in
the form $u_1+a$, where $a$ is indeterminate and independent of
$u_i,~i\in\mathcal{D}$, and eigenvectors of it is independent of
$u_i,~i\in\mathcal{D}$, too.

\begin{figure}[h]
\centering
\subfigure[The permutation of $\bw$]{
  \label{fig:order_w}
  \includegraphics[width=0.45\textwidth]{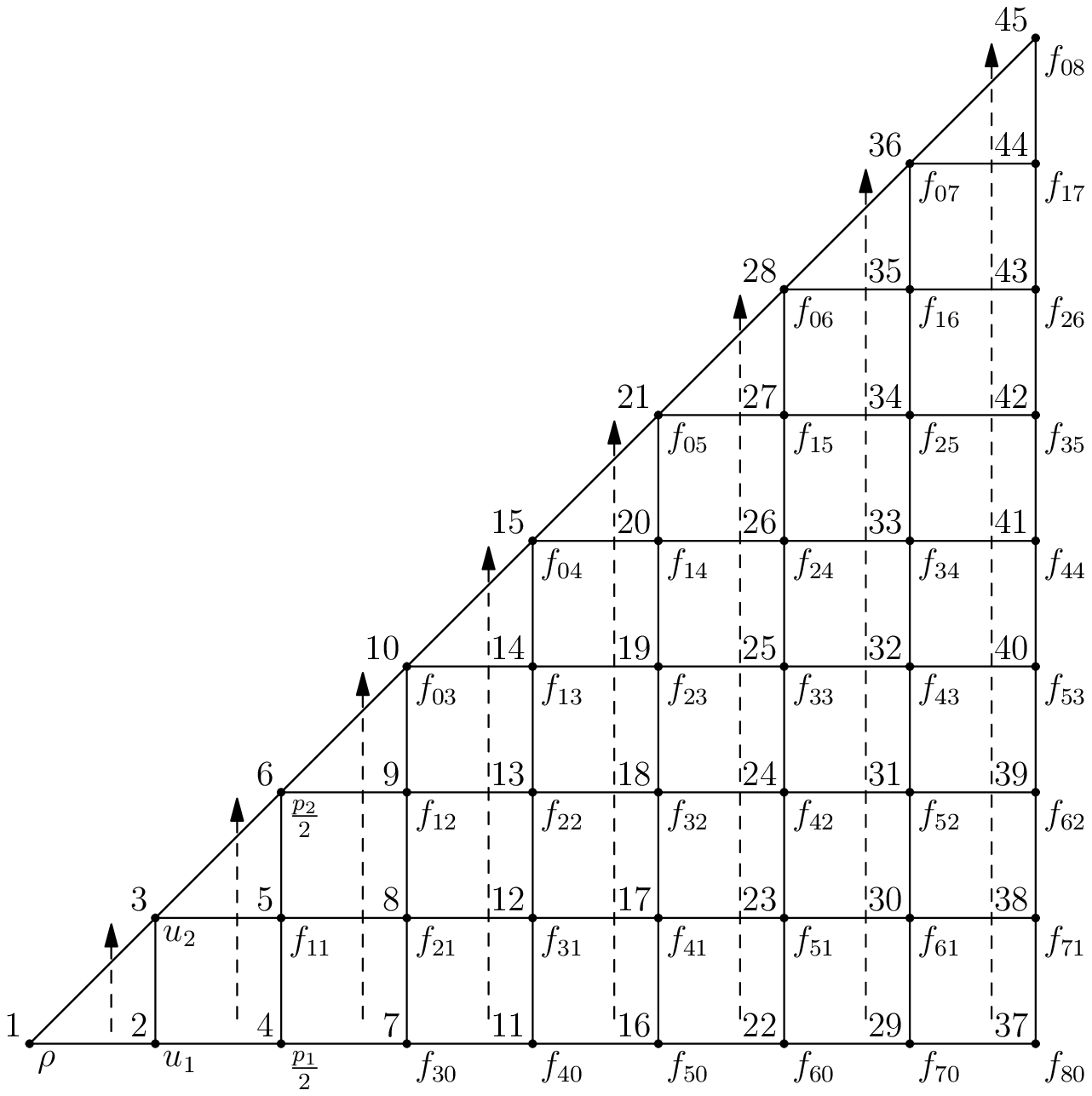}
}
\subfigure[A permutation of $\bw'$ defined in example
\ref{ex:permutation}]{
  \label{fig:order_w2}
  \includegraphics[width=0.45\textwidth]{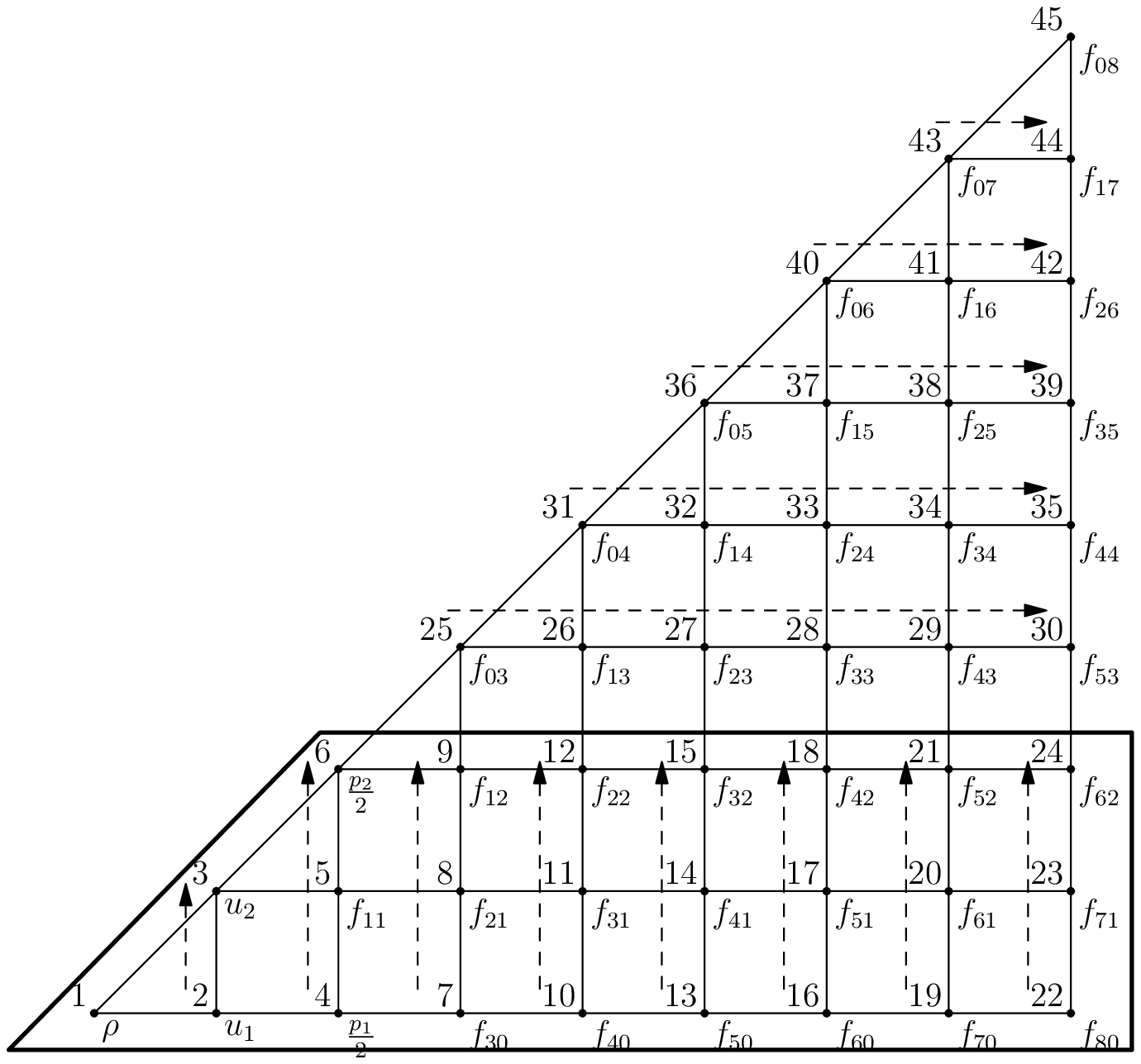}
}
\caption{The permutation of moments while $D=2, M=8$. Each node stands
  for one moment. The marks in the lower right of the node shows the
  expression of the moment, while the number in the upper left
  represents the ordinal number in $\bw$ or $\bw'$. The dashed arrows
  depict the path of the corresponding permutation. The left one is
  the permutation of $\bw$, and the right one is a permutation of
  $\bw'$ defined in example \ref{ex:permutation}.}
\end{figure}

\begin{example}
  Considering the case $D=2$, we can write out the matrix $\bAM$
  according to example \ref{ex:matrix}, for any $3\le M\in\bbN$. If we
  let $f_{\alpha}=0$ for all $\alpha\in\bbN^2$, and $|\alpha|\le M$
  except $f_0$ and $f_{Me_1}$, direct calculation gives the
  characteristic polynomial of $\bAM$ as
\begin{equation}
\begin{split}
&|\lambda {\bf I}-\bAM|= \left(\prod_{i=1}^{M-1}\He_i\left(\frac{\lambda-u_1}{\sqrt{\theta}}\right)\theta^{i/2}\right)\\
&\qquad ~ \times \left(\He_M\left(\frac{\lambda-u_1}{\sqrt{\theta}}\right)\theta^{M/2}+(-1)^{M-1}M!f_{Me_1}\right)\\
&\qquad ~ \times \left(\He_{M+1}\left(\frac{\lambda-u_1}{\sqrt{\theta}}\right)\theta^{(M+1)/2}+(-1)^{M-1}(M+1)!f_{Me_1}(\lambda-u_1)\right).
\end{split}
\end{equation}
The matrix $\bAM$, obviously, has complex eigenvalues, for some
$f_{Me_1}$.
\end{example}

Analogously, involved calculations with the help of computer algebraic
system show that the matrix $\bAM$ has complex eigenvalues for some
admissible $\bw$, for any $D \geq 3$. This reveals that $\bAM$ is not
diagonalisable with real eigenvalues for some $\bw$ as $D=1$ in
\cite{Fan}.
\begin{remark}
  Since moments $f_{\alpha}$ are related to $f(t,\bx,\bxi)$ by
  \eqref{eq:expansion}, the moments $f_{\alpha}$ can not be arbitrary
  number if the distribution function is to be kept
  positive. Particularly, $\rho$ and $\theta$ given by
  \eqref{eq:def_rhoutheta} clearly satisfy
  \begin{equation}\label{eq:def_admissible}
    \rho>0,\quad \theta>0.
  \end{equation}
  Though \eqref{eq:def_admissible} is not enough to provide us a
  positive $f(t,\bx,\bxi)$, the discussion in this paper requires no
  further constraints on the other moments. Hence, in this paper the
  admissible $\bw$ stands for $\bw$ which satisfies
  \eqref{eq:def_admissible}.
\end{remark}

\begin{example}\label{ex:permutation}
  Actually, it can be observed that the matrix $\bAM$ is reducible if
  we rearrange $\bw$ as $\bw'$ using another permutation rule. In case
  of $D=2$, the rule reads:
  \begin{enumerate}
  \item The moments with $\alpha_2 \leq 2$ are arranged at first using
    the lexicographic order;
  \item The rest moments are arranged then using the lexicographic order
    based on index transformed as $(\alpha_2, \alpha_1)$.
  \end{enumerate}
  Clearly, $\bw$ and $\bw'$ are related by a permutation matrix ${\bf
      P}$
  that $\bw'={\bf P}\bw$. The Fig. \ref{fig:order_w2} gives a schematic
  diagram of the permutation rule for $\bw'$ with $M=8$.  Let
  ${\bf A}_M'={\bf P}\bAM {\bf P}^{-1}$, then
\begin{equation}
\pd{\bw'}{t}+{\bf A}_M'\pd{\bw'}{x_1}=0,
\end{equation}
holds. Fig. \ref{fig:sp_pmt} gives the sparse matrix pattern of
${\bf A}_M'$. It is obvious that ${\bf A}_M'$ is reducible (see
e.g. \cite{Demmel} for the definition), and can be reduced into $M-1$
blocks. $\cS_{D,M}(\hat{e}_1)$ $\bigcup$ $\cS_{D,M}(\hat{e}_2)$ $\bigcup$
$\cS_{D,M}(2\hat{e}_2)$ is one of the blocks, and
$\cS_M^D(\hat{\alpha})$, for each $\hat{\alpha}\in\bbN$ and
$\hat{\alpha}\neq \hat{e}_1, \hat{e}_2, 2\hat{e}_2$ is another block.
\end{example}

\begin{figure}[h]\label{fig:sp_pmt}
\centering
  \includegraphics[width=0.65\textwidth]{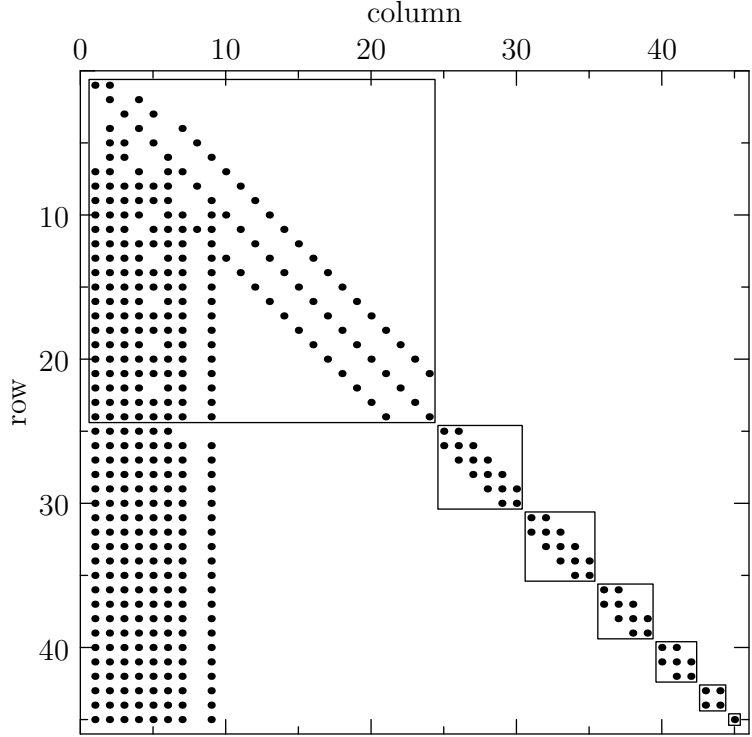}
  \caption{The sparse matrix pattern of ${\bf A}_M'$ with $M=8, D=2$. 
      ${\bf A}_M'$ is reducible.}
\end{figure}

These examples show some very useful properties of the matrix $\bAM$
as follows:
\begin{property}\label{pro:properties_A}
Matrix $\bAM$ satisfies the following properties:
\renewcommand\theenumi{\ref{pro:properties_A}(\arabic{enumi})}
\makeatother
\begin{enumerate}
\item\label{enuit:nozero} 
  For each $\alpha\in\bbN^D$, $|\alpha|\le M$, let $i=\mN(\alpha)$,
  then there are no more than one entries of $\bAM(i,\,\seq{i+1}{N})$
  nonzero. In fact, there is a unique nonzero component as
  $|\alpha|<M$ and $\bAM(i,\,\seq{i+1}{N})=0$ as
  $|\alpha|=M$.
\item\label{enuit:diag} 
  The diagonal entries of the matrix $\bAM$ are all $u_1$, and entries
  of the matrix $\bAM- u_1{\boldsymbol{\rm I}}$ are independent of
  $u_i,i\in\mathcal{D}$.
\item\label{enuit:diagonalisable} 
  $\bAM(\bw)$ may be not diagonalisable with real eigenvalues for some
  admissible $\bw$.
\item\label{enuit:reducible} 
  $\bAM$ is reducible, and can be reduced into $
  \binom{M+D-1}{D-1}-2(D-1)$ blocks.  $\cS_{D,M}(\hat{e}_1)$ $\bigcup$
  $\left(\bigcup\limits_{k=2}^D\cS_{D,M}(\hat{e}_k)\right)$ $\bigcup$
  $\left(\bigcup\limits_{k=2}^D\cS_{D,M}(2\hat{e}_k)\right)$ is one of
  the blocks, and $\cS_{D,M}(\hat{\alpha})$, for each
  $\hat{\alpha}\in\bbN^{D-1}$ and $\hat{\alpha}\neq \hat{e}_1,
  \hat{e}_k, 2\hat{e}_k$, $k=2,\cdots,D$ is one of the blocks.
\end{enumerate}
\end{property}

\subsection{Globally hyperbolic regularization}
For 1D case, the regularization as \eqref{eqRM1D} was proposed such
that the moment system turns out to be globally hyperbolic (see
\cite{Fan} for details). Actually, the regularization therein can be
extend to multiple dimensional systems. For $D\in\bbN^+$, let us start
from the definition as below:
\begin{definition}
For any $|\alpha|=M$, let
\begin{equation}\label{defRM}
\RM(\alpha) \triangleq \sum_{j=1}^D \RM^j(\alpha),
\end{equation}
where
\begin{equation}\label{defRMj}
\RM^j(\alpha) =
    \sum_{d=1}^Df_{\alpha-e_d+e_j}\pd{u_d}{x_j} + 
    \frac{1}{2}\left(\sum_{d=1}^Df_{\alpha-2e_d+e_j}\right)\pd{\theta}{x_j}.
\end{equation}
As in 1D case \cite{Fan}, $\RM(\alpha)$ is the regularization terms
based on the characteristic speed correction.
\end{definition}
With \eqref{eq:pdtheta}, we have that
\[
\RM^j(\alpha) = \sum_{d=1}^Df_{\alpha-e_d+e_j}\pd{u_d}{x_j} + 
\left(\sum_{d=1}^Df_{\alpha-2e_d+e_j}\right)
\left(\sum_{i=1}^D\frac{1}{D\rho}\frac{\partial p_{2e_i}/2}{\partial
    x_j} -\frac{\theta}{2\rho}\frac{\partial \rho}{\partial
    x_j}\right).
\]
For the case that the dependence of $f$ on $\bx$ is only on $x_1$, we
have that
\[ \RM^j(\alpha) = 0,\quad {\rm for~~} j = 2, \cdots, D. \] This leads
to $\RM(\alpha) = \RM^1(\alpha)$. The regularized system is obtained
by subtracting $\RM(\alpha)$ from the governing equation of $f_\alpha$
in \eqref{eq:1d}, for $|\alpha| = M$.
\begin{definition} \label{def:bhAM}
$\bhAM$ is called the regularized matrix
of the matrix $\bAM$,  if it satisfies that for any admissible
$\bw$, 
\begin{equation}\label{eq:definition_bhAM}
\bhAM\pd{\bw}{x_1} = \bAM\pd{\bw}{x_1} - \sum_{|\alpha|=M}\RM^1(\alpha)I_{\mN(\alpha)},
\end{equation}
where $I_k$ is the $k$-th column of the $N \times N$ identity matrix.
\end{definition}
The regularization terms only change a few entries of the lower
triangular part of $\bAM$, with the order of the corresponding moments
equals to $M$, so that the Properties \ref{enuit:nozero},
\ref{enuit:diag}, \ref{enuit:reducible} of $\bAM$ are also valid for
$\bhAM$, while the Property \ref{enuit:diagonalisable} is changed to
be the diagonalisability of $\bhAM$ over the real field $\bbR$.
Actually, we have the following theorem:
\begin{theorem}\label{thm:diagonalisable}
The regularized moment system 
\begin{equation}\label{eq:hyperbolicsystem1d}
\pd{\bw}{t} + \bhAM\pd{\bw}{x_1}=0
\end{equation}
is hyperbolic for any admissible $\bw$.
\end{theorem}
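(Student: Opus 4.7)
The plan is to establish hyperbolicity through a three-stage reduction: translation normalization, block decomposition, and row-by-row eigenvector construction within each block via its lower Hessenberg structure.

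\textbf{Step 1: Velocity normalization.} The regularization $\RM^1(\alpha)$ in Definition \ref{def:bhAM} modifies $\bhAM$ only through rows with $|\alpha|=M$, and those modifications depend only on $\rho$, $\theta$ and higher moments, not on $\bu$. Combined with Property \ref{enuit:diag} for $\bAM$, this implies $\bhAM-u_1{\bf I}$ is independent of $\bu$. Consequently the eigenvalues of $\bhAM$ take the form $u_1+\lambda_0$ with $\lambda_0$ and the eigenvectors independent of $\bu$, so it suffices to prove diagonalizability at the special point $\bu=\boldsymbol{0}$.

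\textbf{Step 2: Block decomposition.} By Property \ref{enuit:reducible}, after the permutation of Example \ref{ex:permutation} the matrix $\bhAM$ splits into one \emph{central} block corresponding to $\cS_{D,M}(\hat{e}_1)\cup\bigcup_{k=2}^D\cS_{D,M}(\hat{e}_k)\cup\bigcup_{k=2}^D\cS_{D,M}(2\hat{e}_k)$ and \emph{peripheral} blocks $\cS_{D,M}(\hat{\alpha})$ for the remaining transverse indices. Since $\RM^1(\alpha)$ only touches entries already inside these blocks, $\bhAM$ inherits the same reducibility structure. Hyperbolicity of $\bhAM$ thus reduces to hyperbolicity of every diagonal block individually.

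\textbf{Step 3: Hessenberg eigenvector construction.} Within any block, order rows by increasing $\alpha_1$. Property \ref{enuit:nozero} gives a unique nonzero superdiagonal entry $(\alpha_1+1)$ in row $\mN(\alpha)$ for $|\alpha|<M$, and the regularization only alters rows with $|\alpha|=M$ in their lower-triangular portion — so each block is lower Hessenberg. For a prospective eigenvalue $\lambda$, the equation $(\bhAM-\lambda{\bf I})\bv=0$ can be solved componentwise from smallest $\alpha_1$ upward: each row with $\alpha_1<M-|\hat{\alpha}|$ determines $v_{\mN(\alpha+e_1)}$ uniquely from previously fixed entries. An induction mirroring the 1D calculation of \cite{Fan} identifies these components with (a scalar multiple of) $\theta^{\alpha_1/2}\He_{\alpha_1}(\lambda/\sqrt{\theta})$, and the closing condition from the final row $|\alpha|=M$ collapses — after the cancellations effected by $\RM^1(\alpha)$ — to a Hermite polynomial identity of the form $\He_{M+1-|\hat{\alpha}|}(\lambda/\sqrt{\theta})=0$ up to a positive prefactor. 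The zeros $\rC{j}{M+1-|\hat{\alpha}|}\sqrt{\theta}$ are real and simple, and distinct eigenvalues within a block produce linearly independent eigenvectors; summing dimensions across blocks yields $N$ real eigenvalues with $N$ independent eigenvectors, which is diagonalizability over $\bbR$.

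The principal obstacle will be the central block. Its first rows couple $\rho$, $u_i$ and $p_{2e_k}/2$ through the conservation laws \eqref{eq:conservation_laws1D} and the pressure tensor $p_{e_i+e_j}=\delta_{ij}\rho\theta+(1+\delta_{ij})f_{e_i+e_j}$, which disrupts the clean Hermite recurrence at low orders; Example \ref{ex:matrix} already displays the required cancellations at $M=3$. One must execute the recursion explicitly on these initial rows and then verify that from some small $\alpha_1$ onward the pattern matches the generic Hermite recursion used in the peripheral blocks. The content of the regularization is precisely that the correction terms $\RM^1(\alpha)$ in the last rows annihilate the residue left by the conservation-law coupling, making the closing condition a pure Hermite identity — exactly as $\mathcal{R}_M$ does in the 1D analysis of \cite{Fan}.
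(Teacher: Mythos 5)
Your Steps 1 and 3 are in the spirit of the paper's actual argument (translation invariance, a Hessenberg-type row-by-row construction whose components are Hermite polynomials, and the regularization cancelling the residue in the rows with $|\alpha|=M$), but Step 2 contains a genuine gap. Property \ref{enuit:reducible} only gives a block lower \emph{triangular} form after the permutation, not a block diagonal one: the row of any moment $f_\alpha$ in a peripheral family $\cS_{D,M}(\hat{\alpha})$ contains, in general, entries in the columns of $\rho$, $u_d$, $p_{e_1+e_d}$, $p_{2e_d}$ and $q_1$ (see \eqref{eq:momentseqs1d}), i.e.\ in the columns of the central block. For a block triangular matrix, diagonalizability of the diagonal blocks does not imply diagonalizability of the whole matrix (the $2\times 2$ matrix with rows $(1,0)$ and $(1,1)$ is the standard counterexample) unless the spectra of coupled blocks are disjoint, and here they are not: the block of the family $\hat{\alpha}$ has eigenvalues $u_1+\rC{i}{k}\sqrt{\theta}$ with $k=M+1-|\hat{\alpha}|$, so $\lambda=u_1$ (the common zero of all odd-degree Hermite polynomials) is shared by many blocks coupled to the central one, and the $p_{2e_k}$-families inside the central block have exactly the same spectrum (zeros of $\He_{M-1}$) as the peripheral families with $|\hat{\alpha}|=2$. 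Hence ``prove each diagonal block diagonalizable and sum the dimensions'' does not establish hyperbolicity of $\bhAM$.

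What is missing is precisely the cross-block compatibility around which the paper's proof is organized. For an eigenvalue owned by the central block, the eigenvector generally has nonzero components in every peripheral family (these enter through the $G(\alpha)$ terms of Lemma \ref{lem:eigenvectors}), and the closing condition at $|\alpha|=M$ in such a family is \emph{not} a Hermite identity as your Step 3 asserts: it is the linear relation \eqref{eq:eigenvectors_all}, which fixes that family's free parameter in terms of the central components, and its solvability for all families simultaneously is exactly what the nonsingularity of the matrix $\bB$ in Lemma \ref{lem:numberofeigenvectors} (via the strictly diagonally dominant block $\bB_{22}$) provides. One must also verify that when an eigenvalue is shared between the central block and a peripheral block the two constructions still yield linearly independent eigenvectors, which is part of the paper's count of exactly $N$ independent eigenvectors. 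Without carrying your recursion over the full matrix and handling these coincident eigenvalues across coupled blocks, the proposed argument does not prove diagonalizability of $\bhAM$.
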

The definition of the hyperbolicity shows that this theorem is
equivalent to the diagonalisability of $\bhAM$ with real eigenvalues
for any admissible $\bw$. Before proving this result, we first make
some simplifications and give several useful lemmas.

Let us denote that
\begin{subequations}\label{eq:def_Lambda}
\begin{align}
\boldsymbol{d} &= (d_j)_{N\times 1}, & & d_1 = \rho^{-1},\\
d_{j+1} &= \theta^{-1/2},\quad j=\mathcal{D}, & & d_{\mN(\alpha)} =
    \rho^{-1}\theta^{-|\alpha|/2}, \quad 2\le|\alpha|\le M,\\
\boldsymbol{\Lambda} &= \mathrm{diag}\left\{d_1,d_2,\cdots,
    d_N\right\},\\
\hat{p}_{e_i+e_k} &= \frac{p_{e_i+e_k}}{\rho\theta},\quad i,k=\mathcal{D},& 
    &g_{\alpha} = \frac{f_{\alpha}}{\rho\theta^{|\alpha|}}.
\end{align}
\end{subequations}
By the virtue of Property \ref{enuit:diag}, we let
\begin{equation}\label{eq:dimensionless}
\bhAM = u_1\boldsymbol{\mathrm{I}} +
    \sqrt{\theta}\boldsymbol{\Lambda}^{-1}\btAM\boldsymbol{\Lambda}.
\end{equation}
Then, we can obtain properties of $\bhAM$ by studying the matrix
$\btAM$. Since $\btAM$ is related to $\bhAM - u_1 \boldsymbol{\rm{I}}$
by a similarity transformation by a diagonal matrix, Property
\ref{enuit:nozero} holds for $\btAM$. Hence, it is convenient to
calculate the eigenvectors of matrix $\btAM$.
Firstly, we denote by some symbols as
\begin{align}
\label{eq:def_v}
r_{w_{\mN(\alpha)}} & = v_{\mathcal{N}_{D-1}(\hat{\alpha})},  &&\text{if $\alpha_1 = 0$},  \\
r_{u_1} & = \lambda r_{\rho},       & 
    &r_{p_{2e_1}/2}  = \frac{\lambda^2}{2}r_{\rho}, \\
r_{f_{e_i}}&=r_{f_{\alpha}}=0,   &   & \text{if at least one
    $\alpha_j<0, j\in\mathcal{D}$}\label{eq:rf0}\\ 
r_{p_{e_1+e_k}} & = \lambda r_{u_k},  && k
\in\mathcal{D}\backslash\{1\},\label{eq:bre1k}
\end{align}
\begin{align}
r_{f_{2e_i}}&=r_{p_{2e_i}/2}-\sum_{d=1}^D\frac{r_{p_{2e_d}/2}}{D},\\
r_{w_{\mN(\alpha)}} &= \frac{\He_{\alpha_1}(\lambda)}{\alpha_1!}
    \big(r_{f_{\tal}}+G(\tal)\big) - G(\alpha),\quad 
    \text{if } \alpha_1\neq 0,~ |\alpha|\ge 3.
\end{align}
where $v_{\mathcal{N}_{D-1}(\hat{\alpha})}$ and $\lambda$ are
indeterminate parameters, and
\begin{equation}\label{eq:G}
G(\alpha) = \sum_{d=1}^D g_{\alpha-e_d}r_{u_d} +
    \left(\sum_{d=1}^D\frac{r_{p_{2e_d}/2}}{D}- \frac{r_{\rho}}{2}\right)\sum_{k=1}^D
    g_{\alpha-2e_k}.
\end{equation}
For better readability, here we adopt the notations [e.g. $w_k$,
$f_{e_i},~f_{2e_i}$] as the subscript of $r$. This does not mean the
subscripts are taken as the value of them, but only taken literally as
the notations themselves. We collect these $r_{w_k}$, with $w_k$ as a
component of $\bw$ and $k=1$, $\cdots$, $N$, to produce a vector 
$\br\in\bbR^N$ as
\begin{equation}\label{eq:def_br}
\br = (r_{w_1}, r_{w_2}, \cdots, r_{w_N}),
\end{equation}
where $N$ is total number of moments. Here it is clear that $\br$ is
prescribed once $v_{\mathcal{N}_{D-1}(\hat{\alpha})}$ and $\lambda$
are all given. With particular setup of these parameters, $\lambda$
and $\br$ is turned out to be a pair of eigenvalue and eigenvector of
$\btAM$. Precisely, we have the following lemma:
\begin{lemma}\label{lem:eigenvectors}
  $\br \neq 0$ is the right eigenvector of the matrix
  $\btAM$ for the eigenvalue $\lambda$ if
\begin{equation}\label{eq:eigenvectorscond}
  \frac{\He_{\alpha_1+1}(\lambda)}{(\alpha_1+1)!}
  \big(r_{w_{\tal}}+G(\tal)\big)=0
\end{equation}
holds, for all $|\alpha|=M$.
\end{lemma}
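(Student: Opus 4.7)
The strategy is to verify $\btAM \br = \lambda \br$ row by row, exploiting the Hessenberg-like sparsity that $\btAM$ inherits, through the similarity \eqref{eq:dimensionless}, from Property \ref{enuit:nozero}. In each row $\mN(\alpha)$ the strict upper triangle contains a unique nonzero entry at column $\mN(\alpha+e_1)$ when $|\alpha| < M$, and is empty when $|\alpha| = M$; the latter is exactly the effect of subtracting the regularization terms $\RM^1(\alpha)$ in \eqref{eq:definition_bhAM}. Hence each row with $|\alpha| < M$ acts as a forward recursion that determines $r_{w_{\mN(\alpha+e_1)}}$ from lower-indexed data, while rows with $|\alpha| = M$ collapse into algebraic closure constraints on the already-constructed $\br$.

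First I would verify the eigenvector equation on the ``hydrodynamic'' rows indexed by $\rho$, $u_i$, $p_{2e_i}/2$ and $p_{e_i+e_k}$ by direct substitution of \eqref{eq:def_v}--\eqref{eq:bre1k} into the dimensionless forms of \eqref{eq:conservation_laws1D} and \eqref{eq:moment_pi1d}. The assignments $r_{u_1}=\lambda r_\rho$, $r_{p_{2e_1}/2} = \tfrac{\lambda^2}{2}r_\rho$, $r_{p_{e_1+e_k}} = \lambda r_{u_k}$, $r_{f_{e_i}} = 0$, and $r_{f_{2e_i}} = r_{p_{2e_i}/2} - \tfrac{1}{D}\sum_d r_{p_{2e_d}/2}$ are tailored precisely to solve those rows, so this step reduces to elementary algebra.

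The core of the argument is an induction on $\alpha_1$, with $\hat\alpha$ fixed, for the rows indexed by $|\alpha| \geq 3$. The base case $\alpha_1 = 0$ is absorbed into the free parameter $v_{\mN_{D-1}(\hat\alpha)}$. For $\alpha_1 \geq 1$ with $|\alpha| \leq M$, I would substitute the ansatz
\[
r_{w_{\mN(\alpha)}} = \frac{\He_{\alpha_1}(\lambda)}{\alpha_1!}\bigl(r_{f_{\tal}}+G(\tal)\bigr) - G(\alpha)
\]
into the row equation at $\mN(\alpha-e_1)$ and apply the Hermite three-term recurrence $\He_{n+1}(x) = x\He_n(x) - n\He_{n-1}(x)$: the upper-triangular entry at column $\mN(\alpha)$ together with the lower $\theta$-entry at column $\mN(\alpha-2e_1)$ promotes the coefficient $\He_{\alpha_1}/\alpha_1!$ to $\He_{\alpha_1+1}/(\alpha_1+1)!$, while the remaining lower-triangular entries listed in \eqref{eq:AMD2:5} must assemble into the shift from $G(\alpha)$ to $G(\alpha+e_1)$ prescribed by \eqref{eq:G}. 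When $|\alpha| = M$ the upper-triangular slot is empty, and the same computation yields exactly the constraint \eqref{eq:eigenvectorscond}. Non-triviality $\br \neq 0$ is immediate, since a nonzero choice of any of $r_\rho$, $r_{u_k}$ ($k \geq 2$), or $v_{\mN_{D-1}(\hat\alpha)}$ already makes some coordinate of $\br$ nonzero.

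The \emph{main obstacle} is the bookkeeping in the inductive step: showing that the lower-triangular multipliers in row $\mN(\alpha-e_1)$ — the $-\tfrac{\theta}{2\rho} C^{(1)}_{\theta,\alpha-e_1}$ factor on $r_\rho$, the mixed velocity-gradient coefficients involving $f$-moments and pressure tensors, the pressure-gradient factors $-f_{\alpha-e_1-e_d}/\rho$ and $C^{(1)}_{\theta,\alpha-e_1}/(2D\rho)$, and the heat-flux factor $-C_{\alpha-e_1}/(D\rho)$ — collapse, after the Hermite recurrence is applied, onto precisely the algebraic shift encoded by \eqref{eq:G} and \eqref{eq:c_alpha}. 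Once this identity is established on the rows with $|\alpha| < M$, the constraint at $|\alpha| = M$ follows at once, since the regularization is engineered so as to leave the lower triangle of $\btAM$ exactly as it was in $\btAM$ restricted to $|\alpha| < M$.
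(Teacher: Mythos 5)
Your proposal is correct and follows essentially the same route as the paper's own proof: a row-by-row verification of $\btAM\br=\lambda\br$, with the hydrodynamic rows checked by direct substitution, the generic rows with $3\le|\alpha|<M$ split into a Hermite-recurrence part and a lower-triangular remainder that must equal $G(\alpha-e_1)+(\alpha_1+1)G(\alpha+e_1)-\lambda G(\alpha)$ (the paper's $X_1$/$X_2$ decomposition), and the rows with $|\alpha|=M$ collapsing to the closure constraint \eqref{eq:eigenvectorscond}. The only inaccuracy is attributional: the empty strict upper triangle of the $|\alpha|=M$ rows is already a feature of the Grad truncation (Property \ref{enuit:nozero}), while the regularization instead modifies the lower-triangular entries of those rows, so the bookkeeping you defer---both the $X_2$ identity and the check that the modified $M$-rows amount precisely to requiring $r_{f_{\alpha+e_1}}+G(\alpha+e_1)=0$---is exactly the computation the paper carries out.
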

\begin{proof}
Let $i=\mN(\alpha)$, with $|\alpha|\le M$, then we need only to
verify
\begin{equation}\label{eq:eig_basis}
\btAM(i, 1:N)\cdot\br=\lambda r_{w_i}
\end{equation}
always valid. Since $\bAM$ is determined by
\eqref{eq:conservation_laws1D}, \eqref{eq:momentseqs1d} and
\eqref{eq:moment_pi1d}, and $\bhAM$ and $\btAM$ are defined as
\eqref{eq:definition_bhAM} and \eqref{eq:dimensionless}, respectively,
we can write any entries of $\btAM$. Now let us verify the equation
\eqref{eq:eig_basis} case by case:
\begin{itemize}
\item{For $\alpha=0$},
\begin{equation}\label{eq:eigalpha0}
\btAM(i,1:N)\cdot\br=1\cdot r_{u_1}=\lambda r_{\rho}.
\end{equation}
\item{For $\alpha=e_1$},
\begin{equation}\label{eq:eigalphae1}
\btAM(i,1:N)\cdot\br=2\cdot r_{p_{2e_1}/2}=\lambda^2r_{\rho}=\lambda
    r_{u_1}.
\end{equation}
\item{For $\alpha=e_k$, $k=2$, $\cdots$, $D$,}
\begin{equation}\label{eq:eigalphaek}
\btAM(i,1:N)\cdot\br=1\cdot r_{p_{e_1+e_k}/2}=\lambda r_{u_k}.
\end{equation}
\item{For $\alpha=2e_1$},
\begin{equation}\label{eq:eigalpha2e1}
\begin{split}
\btAM(i,1:N)\cdot\br&=\frac{3}{2}r_{u_1}+\sum_{d=1}^D3g_{\alpha+e_1-e_d}r_{u_d}+3r_{f_{3e_1}}\\
                    &=\frac{3}{2}r_{u_1}+3g_{2e_1}r_{u_1}+
                        3\left(\frac{\He_3(\lambda)}{6}r_{\rho}-g_{2e_1}r_{u_1}\right)\\
                    &=\frac{\lambda^3}{2}r_{\rho}=\lambda
                    r_{p_{2e_1}/2}.
\end{split}
\end{equation}\label{eq:eigalpha2ek}
\item{For $\alpha=2e_k$, $k=2,\cdots,D$}
\begin{equation}
\begin{split}
&~~~~\btAM(i,1:N)\cdot\br \\
&=\frac{1}{2}r_{u_1}+\sum_{d=1}^Dg_{\alpha+e_1-e_d}r_{u_d}+r_{f_{e_1+2e_k}}\\
&=\frac{1}{2}r_{u_1}+g_{\alpha}r_{u_1}+g_{e_1+e_k}r_{u_k}+
  \left(\lambda\big(r_{p_{2e_k}/2}-\frac{1}{2}r_{\rho}\big)-g_{\alpha}r_{u_1}-g_{e_1+e_k}r_{u_k}\right)\\
&=\lambda r_{p_{2e_k}/2}.
\end{split}
\end{equation}
\item{For $\alpha=e_1+e_k$, $k=2,\cdots,D$}
\begin{equation}\label{eq:eigalphae1k}
\begin{split}
&~~~~\btAM(i,1:N)\cdot\br\\
&=\frac{1}{2}r_{u_1}+\sum_{d=1}^Dg_{\alpha+e_1-e_d}r_{u_d}+r_{f_{e_1+2e_k}}\\
&=\frac{1}{2}r_{u_1}+g_{\alpha}r_{u_1}+g_{e_1+e_k}r_{u_k}+
  \left(\lambda\big(r_{p_{2e_k}/2}-\frac{1}{2}r_{\rho}\big)-g_{\alpha}r_{u_1}-g_{e_1+e_k}r_{u_k}\right)\\
&=\lambda r_{p_{2e_k}/2}.
\end{split}
\end{equation}
\item{For $3\le|\alpha|<M$, $\alpha_1>0$}
\begin{equation}\label{eq:eigalphaelse}
\begin{split}
\btAM(i,1:N)\cdot\br&=
    1\cdot r_{f_{\alpha-e_1}} +(\alpha_1+1)r_{f_{\alpha+e_1}}
    -\frac{1}{2} \tilde{C}_{\theta,\alpha}^{(1)}r_{\rho}\\
&\qquad +\sum_{d=1}^D\left(
    g_{\alpha-e_d-e_1} + (\alpha_1+1)g_{\alpha-e_d+e_1}
        -\frac{\tilde{C}_{\alpha}}{D}\hat{p}_{e_1+e_d}\right)r_{u_d}\\
&\qquad +\sum_{d=1}^D\left(
        -g_{\alpha-e_d}r_{p_{e_1+e_d}}
        +\frac{\tilde{C}_{\theta,\alpha}^{(1)}}{D}r_{p_{2e_d}/2}\right)
        -\frac{\tilde{C}_{\alpha}}{D}r_{q_1}\\
&\triangleq X_1+X_2,
\end{split}
\end{equation}
whereas
\begin{align*}
\tilde{C}_{\alpha}&=\sum_{k=1}^Dg_{\alpha-2e_k},\\
\tilde{C}_{\theta,\alpha}^{(1)}&=\sum_{k=1}^D\left(g_{\alpha-2e_k-e_1}+(\alpha_1+1)g_{\alpha-2e_k+e_1}\right),\\
r_{q_1}&=3r_{f_{3e_1}}+\sum_{k=2}^Dr_{f_{e_1+2e_k}}
    =-\frac{D}{2}\lambda r_{\rho}+\sum_{k=1}^D\left(\lambda
                r_{p_{2e_k}/2}-\hat{p}_{e_1+e_k}r_{u_k}\right),\\
X_1&= r_{f_{\alpha-e_1}} +(\alpha_1+1)r_{f_{\alpha+e_1}},\quad X_2
\text{ is the rest terms.}
\end{align*}
Substituting \eqref{eq:G} into $X_1$ yields
\begin{equation}\label{eq:eigalpha_x1}
\begin{split}
X_1&=\frac{\He_{\alpha_1-1}(\lambda)}{(\alpha_1-1)!}\big(r_{f_{\tal}}+G(\tal)\big)-G(\alpha-e_1) \\
&~~~+(\alpha_1+1)\left(\frac{\He_{\alpha_1+1}(\lambda)}{(\alpha_1+1)!}\big(r_{f_{\tal}}+G(\tal)\big)-G(\alpha+e_1)\right)\\
   &=\lambda\frac{\He_{\alpha_1}(\lambda)}{\alpha_1!}\big(r_{f_{\tal}}+G(\tal)\big)
    -G(\alpha-e_1)-(\alpha_1+1)G(\alpha+e_1).
\end{split}
\end{equation}
For $X_2$, using \eqref{eq:bre1k}, we get
\begin{equation}\label{eq:eigalpha_x2}
\begin{split}
X_2&=
    -\frac{r_{\rho}}{2}\sum_{k=1}^D\left(g_{\alpha-e_1-2e_k}+(\alpha_1+1)g_{\alpha+e_1-2e_k}-\lambda
                        g_{\alpha-2e_k}\right)\\
    &\qquad+\sum_{d=1}^Dr_{u_d}\left(g_{\alpha-e_1-e_d}+(\alpha_1+1)g_{\alpha+e_1-e_d}-\lambda g_{\alpha-e_d}\right)\\
    &\qquad+\left(\sum_{d=1}^D\frac{r_{p_{2e_d}/2}}{D}\right)
        \sum_{k=1}^D\left(g_{\alpha-e_1-2e_k}+(\alpha_1+1)g_{\alpha+e_1-2e_k}-\lambda
                        g_{\alpha-2e_k}\right).\\
\end{split}
\end{equation}
Now we calculate $G(\alpha-e_1)+(\alpha_1+1)G(\alpha+e_1)-\lambda
G(\alpha)$. Some simplification gives
\begin{equation}\label{eq:eigalpha_x2_2}
\begin{split}
&~~~~G(\alpha-e_1)+(\alpha_1+1)G(\alpha+e_1)-\lambda G(\alpha)\\
    &=\sum_{d=1}^Dr_{u_d}\left(g_{\alpha-e_1-e_d}+(\alpha_1+1)g_{\alpha+e_1-e_d}-\lambda g_{\alpha-e_d}\right)\\
    &~~~~+\left(\sum_{d=1}^D\frac{r_{p_{2e_d}/2}}{D}-\frac{r_{\rho}}{2}\right)
        \sum_{k=1}^D\left(g_{\alpha-e_1-2e_k}+(\alpha_1+1)g_{\alpha+e_1-2e_k}-\lambda
                        g_{\alpha-2e_k}\right)\\
&=X_2.
\end{split}
\end{equation}
\eqref{eq:eigalpha_x1}, \eqref{eq:eigalpha_x2} and
\eqref{eq:eigalpha_x2_2} show
\begin{equation}
X_1+X_2=\lambda\frac{\He_{\alpha_1}(\lambda)}{\alpha_1!}\big(r_{f_{\tal}}+G(\tal)\big)
        -\lambda G(\alpha)=\lambda r_{w_i}.
\end{equation}

\item For $3\le|\alpha|<M$, $\alpha_1=0$ or $\alpha=e_k+e_j$, $j>k>1$,
    it is the case that to let $r_{f_{\alpha-e_1}}=0$ in
    \eqref{eq:eigalphaelse}, which is actually part of \eqref{eq:rf0}.
    Hence, \eqref{eq:eig_basis} is valid in this case.
\item For $|\alpha|=M$, if $\alpha_1>0$, then
\eqref{eq:definition_bhAM} and \eqref{eq:dimensionless} show this case
equals to let
\begin{equation} \label{eq:case_M}
r_{f_{\alpha+e_1}}+
    \sum_{d=1}^Dg_{\alpha-e_d+e_1}r_{u_d} +
    \left(\sum_{d=1}^Dg_{\alpha-2e_d+e_1}\right)
    \left(\sum_{i=1}^D\frac{1}{D}r_{p_{2e_i}/2}
    -\frac{1}{2}r_{\rho}\right)=0
\end{equation}    
in \eqref{eq:eigalphaelse}.
Since 
\[
G(\alpha+e_1)=\sum_{d=1}^Dg_{\alpha-e_d+e_1}r_{u_d} +
    \left(\sum_{d=1}^Dg_{\alpha-2e_d+e_1}\right)
    \left(\sum_{i=1}^D\frac{1}{D}r_{p_{2e_i}/2}
    -\frac{1}{2}r_{\rho}\right),
\]
we need only to prove 
\[
r_{f_{\alpha+e_1}}+G(\alpha+e_1)=0.
\]
Actually, it is what \eqref{eq:eigenvectorscond} tells.

If $\alpha_1=0$, 
\eqref{eq:definition_bhAM} and \eqref{eq:dimensionless} show this case
equals to let
$r_{f_{\alpha-e_1}}=0$ and \eqref{eq:case_M} valid in
\eqref{eq:eigalphaelse}. The former is part of \eqref{eq:rf0}, while
the latter is proved above. Hence,
\eqref{eq:eig_basis} is valid in this case.
\end{itemize}
Collecting all the cases above, we conclude \eqref{eq:eig_basis} is
valid for arbitrary $\alpha$. The lemma is proved.
\end{proof}
For any $\alpha$, let $\beta=\alpha+ke_1, k\in\bbN$, then
$\tal=\tilde{\beta}$ holds. Therefore, $r_{w_{\tal}}, |\alpha|\le M$
is equivalent to $r_{w_{\tal}}, |\alpha|=M$.  Hence, in the lemma
\ref{lem:eigenvectors}, parameters $r_{\tal}, |\alpha| = M$ and
$\lambda$, are all indeterminate. Let
$N_v=\mathcal{N}_{D-1}(M\hat{e}_{D})$, and \[\bv =
(v_1,v_2,\cdots,v_{N_v})\in \bbR^{N_v}.\] Since $\br$ is determined by
$\bv$ and $\lambda$, by studying the space of the parameters $\bv$ and
$\lambda$, we can fully clarify the structure of the eigenvectors of
$\btAM$. We have the following lemma that
\begin{lemma}\label{lem:numberofeigenvectors}
$\btAM$ has $N$ linearly independent eigenvectors.
\end{lemma}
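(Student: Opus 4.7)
The plan is to invoke Lemma~\ref{lem:eigenvectors} and build an eigenvector of $\btAM$ for each pair $(\hat{\alpha},j)$ with $\hat{\alpha}\in\bbN^{D-1}$, $|\hat{\alpha}|\leq M$, and $1\leq j\leq M-|\hat{\alpha}|+1$, taking the eigenvalue to be $\lambda=\rC{j}{M-|\hat{\alpha}|+1}$, the $j$-th root of $\He_{M-|\hat{\alpha}|+1}$. A direct Vandermonde-type count gives
\[
\sum_{\hat{\alpha}:\,|\hat{\alpha}|\leq M}(M-|\hat{\alpha}|+1)=\binom{M+D}{D}=N,
\]
so the construction will produce exactly $N$ candidate eigenvectors, matching the required count. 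The closure condition \eqref{eq:eigenvectorscond} is then used to select the free components of $\bv$ compatibly with the chosen $\lambda$.

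For a fixed $\lambda$ of this form, I would set $S_\lambda=\{\hat{\beta}:|\hat{\beta}|\leq M,\ \He_{M-|\hat{\beta}|+1}(\lambda)=0\}$. The closure condition splits the components of $\bv$: those indexed by $\hat{\beta}\in S_\lambda$ are free, while those with $\hat{\beta}\notin S_\lambda$ are forced to equal $-G((0,\hat{\beta}))$. Since $G(\tilde{\beta})$ depends only on $r_\rho$, $r_{u_k}$, and $r_{p_{2e_k}/2}$, which are in turn determined by the ``special'' components $v_{\mathcal{N}_{D-1}(0)}$, $v_{\mathcal{N}_{D-1}(\hat{e}_k)}$, $v_{\mathcal{N}_{D-1}(2\hat{e}_k)}$ for $k\geq 2$ together with $\lambda$, the constrained non-special components can be read off uniquely once the special ones are known. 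A short direct calculation yields $G(0)=0$, $G(e_k)=r_{u_k}$, and $G(2e_k)=\frac{1}{D}\sum_{d}r_{p_{2e_d}/2}-\frac{r_\rho}{2}$, and the self-consistent linear system for the constrained special components then has an explicit closed-form solution: $v_{\mathcal{N}_{D-1}(0)}=0$ when $0\notin S_\lambda$, $v_{\mathcal{N}_{D-1}(\hat{e}_k)}=0$ when $\hat{e}_k\notin S_\lambda$, and the $v_{\mathcal{N}_{D-1}(2\hat{e}_k)}$ share a common value determined by the free special components and $\lambda$ when $2\hat{e}_k\notin S_\lambda$. Hence the $\lambda$-eigenspace has dimension exactly $|S_\lambda|$, with a natural basis $\{\br^{(\hat{\alpha})}\}_{\hat{\alpha}\in S_\lambda}$ obtained by setting $v_{\mathcal{N}_{D-1}(\hat{\alpha})}=1$ and the other free coordinates to zero.

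Summing over distinct eigenvalues then yields $\sum_\lambda |S_\lambda|=N$, exactly matching the number of $(\hat{\alpha},j)$ pairs. Eigenvectors belonging to different eigenvalues are automatically linearly independent, while for each fixed $\lambda$ the basis constructed above is visibly independent because it reduces to the standard basis in the free coordinates. The main obstacle is the bookkeeping for the special constrained components: since $\lambda=0$ is a simultaneous root of every $\He_n$ with $n$ of one fixed parity, multiple special indices can lie in $S_\lambda$ at once, and $r_{u_1}=\lambda r_\rho$ together with $r_{p_{2e_1}/2}=\lambda^2 r_\rho/2$ couple the special block to $\lambda$; one must verify that the small linear system above is always solvable and that each resulting eigenvector is nonzero, which is where the explicit identities for $G$ at the special indices are indispensable.
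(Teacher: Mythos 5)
Your proposal is correct and follows essentially the same route as the paper's proof: both exploit Lemma~\ref{lem:eigenvectors} to reduce the closure condition to a linear system for $\bv$ whose rows split into those killed by a vanishing Hermite factor (free coordinates) and those forcing $v_{\mathcal{N}_{D-1}(\hat{\beta})}=-G(\tilde{\beta})$, use the block-triangular/diagonally-dominant structure (your ``small self-consistent system'' for the $2\hat{e}_k$ components is exactly the paper's $\bB_{22}=\boldsymbol{\rm I}+\tfrac{1}{D}{\bf\Omega}$) to solve it, and conclude with the same count $\sum_{k=1}^{M+1}k\binom{D-1+M-k}{M+1-k}=N$. The only difference is cosmetic: the paper packages the constraints as $\boldsymbol{z}_\lambda\circ\bB\bv=0$ and takes $\bv^{(j)}=\hat{\bB}\bB^{-1}I_j$, whereas you solve the same triangular system by forward substitution with the other free coordinates set to zero, yielding a different but equally valid basis of each eigenspace.
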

\begin{proof}
For $|\alpha|=M$, \eqref{eq:eigenvectorscond} can be written as
\begin{itemize}
\item if $\tal = 0$, then $\mathcal{N}_{D-1}(\hat{\alpha})=1$ and  
\begin{equation}
v_1\He_{M+1}(\lambda) = 0,\label{eq:eigenvectors_rho}
\end{equation}

\item if $\tal = e_k$, $k\in\mathcal{D}\backslash\{1\}$, then $\mathcal{N}_{D-1}(\hat{\alpha})=k$ and  
\begin{equation}
v_{k}\He_{M}(\lambda) = 0, \quad \label{eq:eigenvectors_u}
\end{equation}

\item if $\tal=2e_k$, $k\in\mathcal{D}\backslash\{1\}$, then
\begin{equation}
\label{eq:eigenvectors_p}
\begin{split}
0 & =\left(v_{\mathcal{N}_{D-1}(2\hat{e}_k)}+
    \sum_{d=1}^D\frac{r_{w_{\mN(2e_d)}}}{D}-\frac{v_1}{2}\right)\He_{M-1}(\lambda)\\
&=\left(v_{\mathcal{N}_{D-1}(2\hat{e}_k)}+
    \sum_{d=2}^D\frac{v_{\mathcal{N}_{D-1}(2\hat{e}_d)}}{D} + \frac{\lambda^2}{2D}v_1-\frac{v_1}{2}\right)\He_{M-1}(\lambda),
\end{split}
\end{equation}

\item if $\tal=e_k+e_l$, $k\neq l$, and
$k,l\in\mathcal{D}\backslash\{1\}$, then 
\begin{equation}\label{eq:eigenvectors_eij}
v_{\mathcal{N}_{D-1}(\hat{\alpha})} \He_{M-1}(\lambda)=0,
\end{equation}
\item otherwise ($3\le|\tal|\le M$),
\begin{equation}
\label{eq:eigenvectors_all}
\left(v_{\mathcal{N}_{D-1}(\hat{\alpha})} + 
    \sum_{d=2}^Dg_{\tal-e_d}v_{d} +
    \sum_{i=1}^Dg_{\tal-2e_i}\left(
    \sum_{d=2}^D\frac{v_{\mathcal{N}_{D-1}(2e_d)}}{D} +
        \frac{\lambda^2}{2D}v_1-\frac{v_1}{2}\right)\right)\He_{\alpha_1+1}(\lambda)=0.
\end{equation}
\end{itemize}
Let
\begin{equation}\label{eq:definition_zlambda}
\begin{array}{rcl}
\boldsymbol{z}_{\lambda} =
(\He_{M+1}(\lambda),~\cdots,&\underbrace{\He_k(\lambda),~\cdots,~\He_k(\lambda)},&
            \cdots,~\He_{1}(\lambda)),\\ [4mm]
          &\displaystyle{\scriptstyle \binom{D-1+M-k}{M+1-k}} \rm{~entries}&
\end{array}
\end{equation}
where the $\mathcal{N}_{D-1}(\hat{\alpha})$-th component of
$\boldsymbol{z}_\lambda$ is $\He_{\alpha_1+1}(\lambda)$, and
the cardinal number of set
\[ \#\{\alpha\mid|\alpha|=M,~\alpha_1=k-1\} = \binom{D-1+M-k}{M+1-k}. \]
Equations \eqref{eq:eigenvectors_rho}, \eqref{eq:eigenvectors_u},
\eqref{eq:eigenvectors_p}, \eqref{eq:eigenvectors_eij} and
\eqref{eq:eigenvectors_all} can be collected as
\begin{equation}\label{eq:zbv}
\boldsymbol{z}_{\lambda}\circ\bB\bv=0,
\end{equation}
where $\boldsymbol{c}=\boldsymbol{a}\circ\boldsymbol{b}$, stands for
$c_i = a_i b_i$, $i=1$, $\cdots$, $n$, and $\bB$ is a $(N_v+1) \times
(N_v+1)$ real matrix. Precisely, the formation of $\bB$ is as
\begin{equation}\label{eq:eigenvectors_B}
\begin{split}
\bB&=\left[\begin{array}{ccc}
            \boldsymbol{\rm I}  & 0                   & \quad 0\quad \\
            \quad \bB_{21}\quad & \quad \bB_{22}\quad & \quad 0\quad \\
            \quad \bB_{31}\quad & \quad \bB_{32}\quad & \quad \boldsymbol{\rm I}\quad
        \end{array}\right]
    \begin{array}{l}
    \scriptstyle{\longleftarrow ~~~D \rm{~row}}\\
    \scriptstyle{\longleftarrow ~~~D(D-1)/2 \rm{~row}}\\
    \scriptstyle{\longleftarrow ~~~N_v-D(D+1)/2 \rm{~row}}
    \end{array}\\
&\quad~~~~~~~\uparrow \quad ~~~~~\uparrow \quad ~~~~ \uparrow\\
&\quad~~~~{\scriptstyle D \rm{~col}}\quad \scriptstyle{\frac{D(D-1)}{2} \rm{~col}}~~
    \scriptstyle{N_v-\frac{D(D+1)}{2} \rm{~col}}
\end{split}
\end{equation}
where $\boldsymbol{\rm I}$ is identity matrix, whose dimension is
context depended. The first $D$ rows of $\bB$ are arising from
\eqref{eq:eigenvectors_rho} and \eqref{eq:eigenvectors_u}, the
following $D(D-1)/2$ rows are arising from \eqref{eq:eigenvectors_p}
and \eqref{eq:eigenvectors_eij}, and the rest $N_v-D(D+1)/2$ rows are
arising from \eqref{eq:eigenvectors_all}.

The properties of $\bB$ are here further clarified. We denote the entry
of $\bB$ located at $(i,j)$ position as $b_{ij}$, whereas $i,j = 0$,
$\cdots$, $N_v$. Noticing that entries of $\bB_{21}$ are from
\eqref{eq:eigenvectors_p} and \eqref{eq:eigenvectors_eij}, we have
\begin{equation}\label{eq:riemann_b21}
\begin{aligned}
b_{ij}=\left\{\begin{array}{ll}
    \dfrac{\lambda^2}{2D}-\dfrac{1}{2},   &   \text{if }j=1, \text{ and
        } i=\mathcal{N}_{D-1}(2\hat{e}_l)\text{ for some }
        l=2,\cdots,D,\\ [2mm]
    0,      &   \text{else.}
    \end{array}\right.
\end{aligned}
\end{equation}
And \eqref{eq:eigenvectors_p} and \eqref{eq:eigenvectors_eij} make that
$\bB_{22}$ is a summation as 
\begin{equation}\label{eq:eigenvectors_b22}
 \bB_{22} = \boldsymbol{\rm I} + \frac{1}{D} {\bf\Omega}, 
\end{equation}
where ${\bf\Omega} \in \bbR^{\frac{D(D-1)}{2} \times
  \frac{D(D-1)}{2}}$ is a matrix with its entries $\omega_{ij}$ as
\begin{equation*}
\omega_{ij}=\left\{\begin{array}{ll}
                1,  &   \text{if }i=\mathcal{N}_{D-1}(2\hat{e}_k)-D,
                ~j=\mathcal{N}_{D-1}(2\hat{e}_l)-D \text{ for some
                }k,l\in\mathcal{D}\backslash\{1\},\\
                0,  &   \text{else}.
            \end{array}\right.
\end{equation*}
Since there are at most $D-1$ nonzero entries in each row of
${\bf\Omega}$, it is clear that ${\bf B}_{22}$ is strictly diagonally
dominant thus nonsingular, and then $\bB$ is nonsingular. With the
entry value of ${\bf \Omega}$ as given above, one can check ${\bf
  \Omega}^2=(D-1){\bf \Omega}$ holds. Hence we can get the inverse of
$\bB_{22}$, which reads:
\begin{equation}\label{eq:riemann_b22}
\bB_{22}^{-1}={\bf I}-\frac{1}{2D-1}{\bf\Omega}.
\end{equation}
Meanwhile, since $\bB$ is a nonsingular block lower triangular matrix, 
we can get its inverse as
\begin{equation}\label{eq:riemann_Binv}
\begin{split}
\bB^{-1}&=\left[\begin{array}{ccc}
            \boldsymbol{\rm I}     & 0             & \quad 0 \quad\\
            -\bB_{22}^{-1}\bB_{21} & \bB_{22}^{-1} & \quad 0 \quad\\
            *                      & *             & \quad \boldsymbol{\rm I}\quad
        \end{array}\right]
    \begin{array}{l}
    \scriptstyle{\longleftarrow ~~~D \rm{~row}}\\
    \scriptstyle{\longleftarrow ~~~D(D-1)/2 \rm{~row}}\\
    \scriptstyle{\longleftarrow ~~~N_v-D(D+1)/2 \rm{~row}}
    \end{array}\\
&\qquad~~~~~~\uparrow \quad ~~~~~~\uparrow \quad ~~~ \uparrow\\
&\qquad~~~~{\scriptstyle D \rm{~col}}\quad \scriptstyle{\frac{D(D-1)}{2} \rm{~col}}~~
    \scriptstyle{N_v-\frac{D(D+1)}{2} \rm{~col}}
\end{split}
\end{equation}
Let 
\begin{equation}
\hat{\bB}={\rm diag}\{\,{\bf I}, ~\bB_{22},~ {\bf I}\,\},
\end{equation}
be the diagonal blocks of $\bB$. The inverse of $\hat{\bB}$ is
$\hat{\bB}^{-1}={\rm diag}\{{\bf I},~\bB_{22}^{-1},~{\bf I}\}$, and we
have
\begin{equation}\label{eq:hbb_bbinv}
\hat{\bB} \bB^{-1}=\left[\begin{array}{ccc}
            \boldsymbol{\rm I} & 0       & \quad 0 \quad \\
            -\bB_{21}          & {\bf I} & \quad 0 \quad \\
            *                  & *       & \quad \boldsymbol{\rm I}\quad
        \end{array}\right].
\end{equation}
Since $\bB$ is nonsingular, for an arbitrary $j \in \{ 1, \cdots, N_v
\}$, we let
\begin{equation}\label{eq:eigenvector_value2}
\bv^{(j)}=\hat{\bB}\bB^{-1}I_j,
\end{equation}
where $I_j$ is the $j$-th column of the $N_v \times N_v$ identity
matrix. Actually, $\boldsymbol{v}^{(j)}$ is the $j$-th column of
$\hat{\bB}\bB^{-1}$. Notice in \eqref{eq:definition_zlambda} for any
$\alpha$ that $|\alpha|=M$, $\mathcal{N}_{D-1}(\hat{\alpha})$-th
component of $\boldsymbol{z}_\lambda$ is $\He_{\alpha_1+1}(\lambda)$.
For the $\alpha$ satisfying
\[
|\alpha|=M \text{ and } j=\mathcal{N}_{D-1}(\hat{\alpha}),
    ~k=\alpha_1+1,
\]
 we choose $\lambda$ such that 
\[
\He_{k}(\lambda) = 0.
\] 
Then we have that the $j$-th component of $\bz_\lambda$ vanishes
\[z_{\lambda,j} = \He_{k}(\lambda) = 0.\] This
makes \eqref{eq:zbv} valid that 
\begin{equation}\label{eq:eigenvector_value}
\bB\hat{\bB}^{-1}\boldsymbol{v}^{(j)}=\rho I_j,\quad z_{\lambda,j}=0, \quad
j=1,\cdots,N_v.
\end{equation}

Since $\br$ is depended only on $\bv$ and $\lambda$, we denote
$\br_{\hat{\alpha},i}$ to be the vector prescribed by the given
$\bv^{(j)}$, $j=\mathcal{N}_{D-1}(\hat{\alpha})$, and $\lambda =
\rC{i}{k}$, when $k=\alpha_1+1$, for arbitrary $|\alpha| = M$, $i=1$,
$\cdots$, $k$. It is clear that $\rC{i}{k}$ and $\br_{\hat{\alpha},i}$
are a pair of eigenvalue and eigenvector of $\btAM$ that
\[ \btAM \br_{\hat{\alpha},i} = \rC{i}{k} \br_{\hat{\alpha},i}. \]
The eigenvectors of $\btAM$ can be divided into a cluster of classes,
each of which is: for arbitrary $|\alpha|=M$,
\[
\{\br_{\hat{\alpha},i}\mid i=1,\cdots,k,~k=\alpha_1+1\}.
\]
This fact essentially stems from the reducibility of the matrix
$\btAM$.

Notice that
\begin{enumerate}
\item The components of $\bv$ are a subset of $\br$'s components,
  linearly independent $\bv^{(j)}$'s determine linearly independent
  $\br$'s;
\item Eigenvectors belongs to different eigenvalues are orthogonal and
  the $k$ zeros of Hermite polynomial $He_k(\lambda)$ are different.
\end{enumerate}
We have that $\br_{\hat{\alpha},i}$, $i=1,\cdots,k$, when $|\alpha|=M$
and $k=\alpha_1+1$ are linearly independent and the matrix $\btAM$ has
\begin{equation}
\sum_{k=1}^{M+1}k\binom{D-1+M-k}{M+1-k} = \binom{M+D}{D} = N
\end{equation}
linearly independent eigenvectors. 

On the other hand, for arbitrary $\br_{\hat{\alpha},i}$, there exists
a unique $\beta$ satisfying $\beta = (i-1)e_1+\tal$, hence,
there is a one-one mapping between $\br_{\hat{\alpha},i}$ and
$\alpha$ with $|\alpha|\le M$. So we can also get $\btAM$ has $N$ linearly
independent eigenvectors. This completes the proof.
\end{proof}

With the help of Lemma \ref{lem:numberofeigenvectors}, it is
not difficult to get the following result:
\begin{lemma}\label{lem:eigenpolynomial}
Let 
\begin{align}
\mathcal{\tilde{P}}_{1,m} &= \He_{m+1}(\lambda),\quad m\in\bbN,\\
\mathcal{\tilde{P}}_{D,m} &=
\prod_{k=0}^{m}\mathcal{\tilde{P}}_{D-1,k}, \quad 1<D\in\bbN^+.\label{eq:eigenpolynomials}
\end{align}
$\mathcal{\tilde{P}}_{D,M}$ is the characteristic polynomial of
 $\btAM$.
\end{lemma}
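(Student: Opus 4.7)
The plan is to read off the characteristic polynomial directly from the eigenvalue/eigenvector enumeration already established in Lemma \ref{lem:numberofeigenvectors}, and then match the resulting product against the recursive definition of $\mathcal{\tilde{P}}_{D,M}$.

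First I would collect what has been proved so far: for each $\hat{\alpha}\in\bbN^{D-1}$ with $|\hat{\alpha}|\le M$, set $k=M-|\hat{\alpha}|+1$; then by Lemma \ref{lem:numberofeigenvectors} the matrix $\btAM$ possesses $k$ linearly independent eigenvectors $\br_{\hat{\alpha},1},\ldots,\br_{\hat{\alpha},k}$ with associated eigenvalues $\rC{1}{k},\ldots,\rC{k}{k}$, namely the $k$ distinct roots of $\He_{k}(\lambda)$. The one-to-one correspondence $\beta=(i-1)e_1+\widetilde{\alpha}\leftrightarrow(\hat{\alpha},i)$ established at the end of that proof shows these eigenvectors, taken over all admissible $(\hat{\alpha},i)$, exhaust a basis of $\bbR^N$, so $\btAM$ is diagonalizable and the algebraic and geometric multiplicities of each eigenvalue coincide.

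Consequently the characteristic polynomial of $\btAM$ must equal the product, counted with the above multiplicities, of the linear factors $(\lambda-\rC{i}{k})$. Since $\He_k$ is monic with simple zeros $\rC{1}{k},\ldots,\rC{k}{k}$, grouping the factors by $\hat{\alpha}$ gives
\begin{equation}
\det(\lambda\boldsymbol{\mathrm{I}}-\btAM)
=\prod_{\substack{\hat{\alpha}\in\bbN^{D-1}\\ |\hat{\alpha}|\le M}}
 \He_{M-|\hat{\alpha}|+1}(\lambda).
\end{equation}
This identifies the characteristic polynomial as a concrete product of Hermite polynomials; it remains only to verify that it coincides with $\mathcal{\tilde{P}}_{D,M}$.

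This last step I would do by induction on $D$. The base case $D=1$ has empty $\hat{\alpha}$, so the product collapses to $\He_{M+1}(\lambda)=\mathcal{\tilde{P}}_{1,M}$. For the inductive step, split $\hat{\alpha}=(\alpha_2,\hat{\alpha}')$ with $\hat{\alpha}'\in\bbN^{D-2}$, and stratify the product according to the value $j=\alpha_2\in\{0,1,\ldots,M\}$: for fixed $j$, the remaining indices $\hat{\alpha}'$ range over $|\hat{\alpha}'|\le M-j$, so the inner product becomes $\prod_{|\hat{\alpha}'|\le M-j}\He_{(M-j)-|\hat{\alpha}'|+1}(\lambda)=\mathcal{\tilde{P}}_{D-1,M-j}$ by the inductive hypothesis. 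Reindexing $k=M-j$ then yields $\prod_{k=0}^{M}\mathcal{\tilde{P}}_{D-1,k}=\mathcal{\tilde{P}}_{D,M}$, which is precisely \eqref{eq:eigenpolynomials}, completing the argument.

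The bookkeeping of indices in the stratification is the only delicate point — in particular one must be careful that the factor $\He_{M-|\hat{\alpha}|+1}$ attached to a given $\hat{\alpha}$ in the product really does come from the $k=\alpha_1+1$ assigned in Lemma \ref{lem:numberofeigenvectors} via $\alpha_1=M-|\hat{\alpha}|$, and that no eigenvector is overcounted or missed. Once this correspondence is verified, no further computation is needed; the result is essentially a reorganization of the information already contained in Lemmas \ref{lem:eigenvectors} and \ref{lem:numberofeigenvectors}.
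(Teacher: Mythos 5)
Your proposal is correct and follows essentially the same route as the paper: both read the characteristic polynomial off the complete eigenvector enumeration of Lemma \ref{lem:numberofeigenvectors} (using that diagonalizability forces algebraic multiplicities to equal the counted geometric ones, giving $\prod_{k=1}^{M+1}\He_k(\lambda)^{\binom{D-1+M-k}{D-2}}$), and both then verify the recursion \eqref{eq:eigenpolynomials} by induction on $D$ — your stratification of $\{\hat{\alpha}:|\hat{\alpha}|\le M\}$ by the value of $\alpha_2$ is just the combinatorial restatement of the paper's binomial identity $\sum_{m=k-1}^{M}\binom{D-2+m-k}{D-3}=\binom{D-1+M-k}{D-2}$. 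The only cosmetic difference is that the paper treats $D=1$ by citing \cite{Fan} rather than as the degenerate case of the product formula.
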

\begin{proof}
  In case of $D=1$, the result has been proved in \cite{Fan}. Here we give the
  proof for $D\ge 2$.  In the proof of lemma
  \ref{lem:numberofeigenvectors}, \eqref{eq:eigenvector_value} shows
  that the characteristic polynomial of $\btAM$ is
\begin{equation}\label{eq:eigenpolynomialD}
\mathcal{\tilde{P}}_{D,M}=\prod_{k=1}^{M+1}\He_k(\lambda)^{\binom{D-1+M-k}{M+1-k}}
   =\prod_{k=1}^{M+1}\He_k(\lambda)^{\binom{D-1+M-k}{D-2}}.
\end{equation}
Now we need only to prove that $\mathcal{\tilde{P}}_{D,M}$ satisfies
\eqref{eq:eigenpolynomials}. Here we use induction argument on $D$. As
$D=2$, \eqref{eq:eigenpolynomialD} can be written as
\begin{equation}
\mathcal{\tilde{P}}_{2,M}=
\prod_{k=1}^{M+1}\He_k(\lambda) =
\prod_{m=0}^{M}\mathcal{\tilde{P}}_{1,m}.
\end{equation}
We assume that \eqref{eq:eigenpolynomials} holds for $D-1, D\ge3$.
With the induction hypothesis, we have
\begin{equation}
\begin{aligned}
\prod_{m=0}^{M} \mathcal{\tilde{P}}_{D-1,m}
   &=\prod_{m=0}^{M}\left(\prod_{k=1}^{m+1}\He_k(\lambda)^{\binom{D-1-1+m-k}{D-1-2}}\right)\\
    &=\prod_{k=1}^{M+1}\He_k(\lambda)^{\sum_{m=k-1}^M{\binom{D-2+m-k}{D-3}}}\\
    &=\prod_{k=1}^{M+1}\He_k(\lambda)^{\binom{D-1+M-k}{D-2}}\\
    &=\mathcal{\tilde{P}}_{D,M}.
\end{aligned}
\end{equation}
This completes the proof.
\end{proof}
With the relation of $\btAM$ and $\bhAM$ \eqref{eq:dimensionless}, we
have the following theorem.
\begin{theorem}\label{thm:eigen}
Let 
\begin{align}\label{eq:def_eigenpolynomial}
\mathcal{{P}}_{1,m} &=
\He_{m+1}\left(\frac{\lambda-u_1}{\sqrt{\theta}}\right)\theta^{(m+1)/2},\quad m\in\bbN,\\
\mathcal{{P}}_{D,m} &=
\prod_{k=0}^{m}\mathcal{{P}}_{D-1,k}, \quad 1<D\in\bbN^+.
\end{align}
$\mathcal{{P}}_{D,M}$ is the characteristic polynomial of $\bhAM$.
And $\bhAM$ has N linearly independent eigenvectors, which read
\begin{equation}\label{eq:eigenvector_hA}
\hat{\br}_{\hat{\alpha},i} =
    \boldsymbol{\Lambda}^{-1}\br_{\hat{\alpha},i}, 
    \mbox{ for eigenvalue }\lambda_{i,k}=u_1+\rC{i}{k}\sqrt{\theta}
\end{equation}
for all $|\alpha|= M$, $i = 1, \cdots, k$, whereas $k=\alpha_1+1$.
\end{theorem}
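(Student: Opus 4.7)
The plan is to lift everything through the relation \eqref{eq:dimensionless}, which expresses $\bhAM$ as a similarity transform of $\sqrt{\theta}\,\btAM$ shifted by $u_1\boldsymbol{\mathrm{I}}$. Since $\rho,\theta>0$ for admissible $\bw$, the diagonal matrix $\boldsymbol{\Lambda}$ defined in \eqref{eq:def_Lambda} is invertible, so spectral data transfers cleanly from $\btAM$ (already fully analyzed in Lemmas \ref{lem:numberofeigenvectors} and \ref{lem:eigenpolynomial}) to $\bhAM$.

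First I would compute the characteristic polynomial directly. From \eqref{eq:dimensionless},
\[
\det(\lambda\boldsymbol{\mathrm{I}}-\bhAM)
=\det\bigl((\lambda-u_1)\boldsymbol{\mathrm{I}}-\sqrt{\theta}\,\boldsymbol{\Lambda}^{-1}\btAM\boldsymbol{\Lambda}\bigr)
=\det\bigl((\lambda-u_1)\boldsymbol{\mathrm{I}}-\sqrt{\theta}\,\btAM\bigr),
\]
the last equality because the two matrices in the middle are related by a similarity via the diagonal $\boldsymbol{\Lambda}$. Pulling a factor of $\sqrt{\theta}$ out of each of the $N$ columns and invoking Lemma \ref{lem:eigenpolynomial} gives
\[
\det(\lambda\boldsymbol{\mathrm{I}}-\bhAM)=\theta^{N/2}\,\mathcal{\tilde P}_{D,M}\!\left(\tfrac{\lambda-u_1}{\sqrt{\theta}}\right).
\]
It then remains to verify that this equals $\mathcal{P}_{D,M}(\lambda)$. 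I would do this by induction on $D$. The base case $D=1$ is immediate: $\theta^{(m+1)/2}\He_{m+1}((\lambda-u_1)/\sqrt{\theta})=\mathcal{P}_{1,m}$ matches $\mathcal{\tilde P}_{1,m}=\He_{m+1}(\lambda)$. For the inductive step, the parallel product rules $\mathcal{\tilde P}_{D,m}=\prod_{k=0}^{m}\mathcal{\tilde P}_{D-1,k}$ and $\mathcal{P}_{D,m}=\prod_{k=0}^{m}\mathcal{P}_{D-1,k}$ transport the identity from dimension $D-1$ to $D$, provided the $\theta$-exponents accumulated by the factors on the right total $N/2$. This last bookkeeping reduces to the hockey-stick identity $\sum_{k=0}^{M}\binom{k+D-1}{D-1}=\binom{M+D}{D}=N$, which is the same counting already used at the end of the proof of Lemma \ref{lem:numberofeigenvectors}.

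For the eigenvectors, by Lemma \ref{lem:numberofeigenvectors} each $\br_{\hat{\alpha},i}$ satisfies $\btAM\br_{\hat{\alpha},i}=\rC{i}{k}\br_{\hat{\alpha},i}$ with $k=\alpha_1+1$. Substituting into \eqref{eq:dimensionless},
\[
\bhAM\bigl(\boldsymbol{\Lambda}^{-1}\br_{\hat{\alpha},i}\bigr)
=u_1\boldsymbol{\Lambda}^{-1}\br_{\hat{\alpha},i}+\sqrt{\theta}\,\boldsymbol{\Lambda}^{-1}\btAM\br_{\hat{\alpha},i}
=\bigl(u_1+\rC{i}{k}\sqrt{\theta}\bigr)\boldsymbol{\Lambda}^{-1}\br_{\hat{\alpha},i},
\]
which identifies both the eigenvalue $\lambda_{i,k}$ and the eigenvector $\hat{\br}_{\hat{\alpha},i}=\boldsymbol{\Lambda}^{-1}\br_{\hat{\alpha},i}$ claimed in \eqref{eq:eigenvector_hA}. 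Since $\boldsymbol{\Lambda}^{-1}$ is an invertible linear map, the $N$ vectors produced in this way remain linearly independent, as the $\br_{\hat{\alpha},i}$ are by Lemma \ref{lem:numberofeigenvectors}.

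There is no genuine conceptual obstacle remaining, because the spectral structure of $\btAM$ has already been settled. The only delicate point is the combinatorial bookkeeping of $\theta$-powers when matching $\theta^{N/2}\mathcal{\tilde P}_{D,M}((\lambda-u_1)/\sqrt{\theta})$ to the nested product defining $\mathcal{P}_{D,M}$; handling the exponents carefully in the induction step is where care is needed, but this is entirely mechanical.
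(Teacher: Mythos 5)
Your proposal is correct and follows essentially the same route as the paper: both transfer the spectral data of $\btAM$ to $\bhAM$ through the similarity relation \eqref{eq:dimensionless} (invertibility of $\boldsymbol{\Lambda}$ preserving linear independence), and both identify the characteristic polynomial by rescaling the product of Hermite factors from Lemma \ref{lem:eigenpolynomial}. Your explicit determinant computation and the $\theta^{N/2}$ bookkeeping via the hockey-stick identity merely spell out what the paper compresses into ``similar as in the proof of Lemma \ref{lem:eigenpolynomial}.''
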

\begin{proof}
Since 
\begin{equation}
\bhAM = u_1\boldsymbol{\mathrm{I}} +
    \sqrt{\theta}\boldsymbol{\Lambda}^{-1}\btAM\boldsymbol{\Lambda},
\end{equation}
and $\boldsymbol{\Lambda}$ is nonsingular, so any
$\br_{\hat{\alpha},i}\in\bbR^{N}$ is the eigenvector of $\btAM$ for the eigenvalue
$\rC{i}{k}$, then
$\hat{\br}_{\hat{\alpha},i}=\boldsymbol{\Lambda}^{-1}\br_{\hat{\alpha},i}$
is the eigenvector of $\bhAM$ for the eigenvalue
$u_1+\rC{i}{k}\sqrt{\theta}$. Using Lemma \ref{lem:eigenvectors}
and discussion in Lemma \ref{lem:numberofeigenvectors},
we obtain \eqref{eq:eigenvector_hA}. Lemma
\ref{lem:numberofeigenvectors} shows $\btAM$ has $N$ linearly
independent eigenvectors, so $\bhAM$ also has $N$ linearly
independent eigenvectors and \eqref{eq:eigenvector_hA} gives a set of
basis.

Lemma \ref{lem:numberofeigenvectors} and \eqref{eq:eigenvector_hA}
show that the characteristic polynomial of $\bhAM$ is
\begin{equation}
\mathcal{P}_{D,M}=\prod_{k=1}^{M+1}
    \left(\He_k\left(\frac{\lambda-u_1}{\sqrt{\theta}}\right)\theta^{k/2}\right)
    ^{\binom{D-1+M-k}{D-2}}.
\end{equation}
Similar as that in the proof of Lemma \ref{lem:eigenpolynomial}, 
$\mathcal{{P}}_{D,M}$ is thus the characteristic polynomial
of $\bhAM$.
\end{proof}
Theorem \ref{thm:diagonalisable} is now straightforward:
{\renewcommand\proofname{Proof of Theorem \ref{thm:diagonalisable}}
\begin{proof}
  With Theorem \ref{thm:eigen}, we declare that $\bhAM$ is
  diagonalisable with real eigenvalues directly, that is, the moment
  system \eqref{eq:hyperbolicsystem1d} is hyperbolic.
\end{proof}}


\section{System in Multi-dimensional Spatial Space} 
\label{sec:hyperbolicitymd} As the main result of this
paper, here we give the general hyperbolic moment system containing
all moments with orders lower than $M$. Without the assumption that
the dependence of $f$ on $x_2$, $\cdots$, $x_D$ is homogeneous,
according to the discussions in Section \ref{sec:momentsystem}, Grad's
moment system can be written in the following form:
\begin{equation}
\pd{\bw}{t} +
  \sum_{j = 1}^D {\bf M}_j(\bw) \pd{\bw}{x_j} = 0,
\end{equation}
where $\bw$ remains the same definition as the one-dimensional case
\eqref{eq:basic_moments}, and ${\bf M}_j$, $j = 1,\cdots,D$ are square
matrices depending on $\bw$. Comparing with \eqref{eq:1d}, one
immediately has ${\bf M}_1 = {\bf A}_M$. Similar as Definition
\ref{def:bhAM}, we give the following definition:
\begin{definition}
For $j = 1,\cdots,D$, $\hat{\bf M}_j$ is called the regularized matrix
of the matrix ${\bf M}_j$,  if it satisfies that for any admissible
$\bw$, 
\begin{equation} \label{eq:Grad_hme}
\hat{\bf M}_j \pd{\bw}{x_j} =
  {\bf M}_j \pd{\bw}{x_j} -
  \sum_{|\alpha|=M} \RM^j(\alpha) I_{\mN(\alpha)},
\end{equation}
where $I_k$ is the $k$-th column of the $N \times N$ identity matrix.
\end{definition}
Now the multi-dimensional regularized moment equations can be written
as
\begin{equation} \label{eq:hme}
  \pd{\bw}{t} + \sum_{j = 1}^D \hat{\bf M}_j(\bw)
  \pd{\bw}{ x_j} = 0.
\end{equation}
Recalling that
\begin{equation}\label{defRM2}
\RM(\alpha) = \sum_{j=1}^D \RM^j(\alpha),
\end{equation}
one finds that the multi-dimensional regularized moment system is
obtained by subtracting $\RM(\alpha)$ from \eqref{eq:Grad_hme}
for all $|\alpha| = M$. Applying such an operation on \eqref
{eq:momentseqs1}, we can reformulate the regularized moment system as
\begin{equation} \label{eq:moment_eqs}
\begin{split}
& \left( \pd{f_{\alpha}}{t} +
  \sum_{d=1}^D \pd{u_d}{t} f_{\alpha-e_{d}}
  + \frac{1}{2} \pd{\theta}{t}
    \sum_{d =1}^D f_{\alpha-2e_d}
\right) \\
& \qquad + \sum_{j=1}^D \left(
  \theta \pd{f_{\alpha-e_j}}{x_j} +
  u_j \pd{f_{\alpha}}{x_j} +
  (1 - \delta_{|\alpha|,M}) (\alpha_j+1)
    \pd{f_{\alpha+e_j}}{x_j}
\right) \\
& \qquad {} + \sum_{j=1}^D \sum_{d=1}^D
\pd{u_d}{x_j} \left(
  \theta f_{\alpha-e_d-e_j} + u_j f_{\alpha-e_d}
  + (1 - \delta_{|\alpha|,M}) (\alpha_j+1) f_{\alpha-e_d+e_j}
\right) \\
& \qquad {} + \frac{1}{2} \sum_{j=1}^D \sum_{d=1}^D
  \pd{\theta}{x_j}
  \left( \theta f_{\alpha-2e_d-e_j} +
  u_j f_{\alpha-2e_d} +
  (1 - \delta_{|\alpha|,M}) (\alpha_j+1) f_{\alpha-2e_d+e_j}
\right) = 0, \\
& \hspace{360pt} |\alpha| \leqslant M.
\end{split}
\end{equation}
Actually, \eqref{eq:moment_eqs} is away from \eqref{eq:hme} only by a
linear transformation due to \eqref{eq:conservation_laws} to eliminate
the time derivatives of $u_d$ and $\theta$. Precisely speaking, there
exists an invertible matrix ${\bf T}(\bw)$ depending on $\bw$ such
that \eqref{eq:moment_eqs} is identical to the following system:
\begin{equation} \label{eq:var_hme}
{\bf T}(\bw) \pd{\bw}{t} +
  \sum_{j = 1}^D {\bf T}(\bw) \hat{\bf M}_j(\bw)
    \pd{\bw}{x_j} = 0,
\end{equation}
If we let all partial derivatives with respect to $x_j$ with $j > 1$
to be zero, \eqref{eq:moment_eqs} reduces to the one-dimensional
hyperbolic moment system \eqref{eq:hyperbolicsystem1d} in Section
\ref{sec:hyperbolicms}. Comparison of \eqref{eq:hme} and
\eqref{eq:hyperbolicsystem1d} clearly shows that $\hat{\bf M}_1 =
\hat{\bf A}_M$.

\comment{
The system \eqref{eq:moment_eqs} reveals that the hyperbolic moment
system can be obtained using the procedure below:
\begin{enumerate}
\item Approximate the distribution function $f(t, \bx, \bxi)$ by
  \begin{equation} \label{eq:cut_off}
  f_h(t,\bx,\bxi) = \sum_{|\alpha| \leqslant M} f_{\alpha}(t,\bx)
    \mathcal{H}_{\theta(t,\bx),\alpha}
    \left( \frac{\bxi - \bu(t,\bx)}{\sqrt{\theta(t,\bx)}} \right).
  \end{equation}
\item Take time and spatial derivatives on both sides of
  \eqref{eq:cut_off}, and get
  \begin{gather}
  \partial_t f_h(t,\bx,\bxi) =
  \sum_{|\alpha| \leqslant M + 2} [\partial_t f]_{\alpha}(t,\bx)
    \mathcal{H}_{\theta(t,\bx),\alpha}
    \left( \frac{\bxi - \bu(t,\bx)}{\sqrt{\theta(t,\bx)}} \right), \\
  \nabla_{\bx} f_h(t,\bx,\bxi) =
  \sum_{|\alpha| \leqslant M + 2} [\nabla_{\bx} f]_{\alpha}(t,\bx)
    \mathcal{H}_{\theta(t,\bx),\alpha}
    \left( \frac{\bxi - \bu(t,\bx)}{\sqrt{\theta(t,\bx)}} \right).
  \end{gather}
\item Approximate $\nabla_{\bx} f$ by
  \begin{equation}
  [\nabla_{\bx} f]_h(t,\bx,\bxi) =
  \sum_{|\alpha| \leqslant M} [\nabla_{\bx} f]_{\alpha}(t,\bx)
    \mathcal{H}_{\theta(t,\bx),\alpha}
    \left( \frac{\bxi - \bu(t,\bx)}{\sqrt{\theta(t,\bx)}} \right),
  \end{equation}
  and then calculate $\bxi \cdot [\nabla_{\bx} f]_h$. The result is in
  the following form:
  \begin{equation}
  \bxi \cdot [\nabla_{\bx} f]_h(t,\bx,\bxi) =
  \sum_{|\alpha| \leqslant M + 1}
    [\bxi \cdot \nabla_{\bx} f]_{\alpha}(t,\bx)
    \mathcal{H}_{\theta(t,\bx),\alpha}
    \left( \frac{\bxi - \bu(t,\bx)}{\sqrt{\theta(t,\bx)}} \right),
  \end{equation}
\item The equation $[\partial_t f]_{\alpha} + [\bxi \cdot \nabla_{\bx}
  f]_{\alpha} = 0$ with $|\alpha| \leqslant M$ will be the same as
  \eqref{eq:moment_eqs}.
\end{enumerate}
The details are omitted in this paper to avoid distraction. The
complete procedure can be obtained by a slight modification on the
deduction of Grad's moment system, for which we refer the readers to
\cite{NRxx_new}.
}

The following theorem declares the hyperbolicity\footnote{For
  multi-dimensional quasi-linear systems, we refer the readers to
  \cite{FVM} for the definition of hyperbolicity.} of the
multi-dimensional regularized moment system \eqref{eq:hme}:
\begin{theorem} \label{thm:rot_inv} The regularized moment system
\eqref{eq:hme} is hyperbolic for any admissible $\bw$. Precisely, for
a given unit vector $\bn = (n_1, \cdots, n_D)$, there exists a constant
matrix $\bf R$ partially depending on $\bn$ that
\begin{equation} \label{eq:rot_inv}
\sum_{j=1}^D n_j \hat{\bf M}_j(\bw) =
  {\bf R}^{-1} \hat{\bf A}_M ({\bf R} \bw) {\bf R},
\end{equation}
and this matrix is diagonalizable with eigenvalues as
\begin{equation}
\bu \cdot \bn + \rC{n}{m} \sqrt{\theta},
  \qquad 1 \leqslant n \leqslant m \leqslant M+1.
\end{equation}
\end{theorem}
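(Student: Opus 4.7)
The plan is to derive the identity \eqref{eq:rot_inv} from rotational invariance of the whole regularized system \eqref{eq:hme}, and then read off the spectrum from Theorem \ref{thm:eigen}. Concretely, given a unit vector $\bn$, I pick an orthogonal matrix $O\in SO(D)$ with first row equal to $\bn^\top$ (so that $O\bn=e_1$), change variables $\bx'=O\bx$, and show that \eqref{eq:hme} transforms into an identical regularized system in the primed frame. The $k=1$ component of the resulting transformation law for the coefficient matrices is precisely \eqref{eq:rot_inv}.

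To carry this out I first construct the representation ${\bf R}={\bf R}(O)$ of rotations on $\bw$. Substituting $\bxi=O^\top\bxi'$ into the Hermite expansion \eqref{eq:expansion}, one reads off that $\rho$ and $\theta$ are invariants, $\bu\mapsto O\bu$, and each order-$k$ block $\{f_\alpha\}_{|\alpha|=k}$ transforms under the induced symmetric-tensor representation of $O$ on degree-$k$ polynomials. Collecting these blocks gives a block-diagonal, invertible, $\bw$-independent matrix ${\bf R}$ with $\bw'={\bf R}\bw$ the moments in the rotated frame.

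The next step is to verify the covariance law
\[
\hat{\bf M}_k(\bw')={\bf R}\Bigl(\sum_{j=1}^D O_{kj}\hat{\bf M}_j({\bf R}^{-1}\bw')\Bigr){\bf R}^{-1},\qquad k=1,\dots,D.
\]
The unregularized part \eqref{eq:momentseqs1} inherits invariance from the Boltzmann operator, so the issue reduces to showing that the correction $\sum_{|\alpha|=M}\RM^j(\alpha)\, I_{\mN(\alpha)}$ transforms as a vector in the index $j$. Granted this law, the chain rule $\partial_{x_j}=\sum_k O_{kj}\partial_{x'_k}$ together with $\hat{\bf M}_1=\bhAM$ yields \eqref{eq:rot_inv} by setting $k=1$. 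Theorem \ref{thm:eigen} applied to $\bhAM({\bf R}\bw)$ then supplies $N$ linearly independent eigenvectors and eigenvalues $u'_1+\rC{n}{m}\sqrt{\theta}$; since $u'_1=(O\bu)_1=\bn\cdot\bu$, similarity by ${\bf R}^{-1}$ preserves both and delivers the claimed spectrum and hyperbolicity of $\sum_j n_j\hat{\bf M}_j(\bw)$.

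The main obstacle is checking covariance of the regularization. Whereas Grad's unregularized equations are manifestly rotation-invariant, the correction $\RM^j(\alpha)$ is written component-by-component in \eqref{defRMj} and its tensorial character is not immediately visible. The cleanest route, I expect, is to identify the family $\{\RM^j(\alpha)\}_{|\alpha|=M}$, viewed as a $j$-indexed vector of order-$M$ symmetric tensors, with the Hermite projection that removes the order-$(M+1)$ content from the transport term $\sum_j\xi_j\partial_{x_j}f$; since the Hermite decomposition is intertwined by the symmetric-tensor representation ${\bf R}$, the projection commutes with rotations and covariance follows at once. A brute-force alternative is to substitute $\bw'={\bf R}\bw$ directly into \eqref{defRMj} and verify the transformation law using orthogonality of $O$ together with standard contraction identities in the symmetric tensor algebra.
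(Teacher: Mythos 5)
Your plan is correct and follows essentially the same route as the paper: construct the induced rotation $\bf R$ on the moment vector from the Hermite expansion (the paper's \eqref{eq:tilde_f_alpha} and \eqref{eq:tilde_theta_u}), reduce \eqref{eq:rot_inv} to rotational covariance of the regularized system with $\hat{\bf M}_1 = \hat{\bf A}_M$, and read off diagonalizability and the spectrum from Theorem \ref{thm:eigen} using $\tilde{u}_1 = \bu\cdot\bn$. The covariance check you defer as the ``main obstacle'' --- including the bookkeeping from eliminating $\partial u_d/\partial t$ and $\partial\theta/\partial t$, which the paper handles through the matrix ${\bf T}(\bw)$ and ${\bf H}(\bw)={\bf T}(\tilde{\bw}){\bf R}{\bf T}(\bw)^{-1}$ --- is precisely what the paper carries out by your ``brute-force alternative'' via the multi-index Lemmas \ref{lem:sum}, \ref{lem:ind} and \ref{lem:eqs1}--\ref{lem:eqs4}, working with the reformulation \eqref{eq:moment_eqs} in which the regularization appears as the rotation-invariant factor $(1-\delta_{|\alpha|,M})$; your observation that the degree-based truncation commutes with the rotation representation is the same point the paper makes only informally when calling the regularization ``isotropic.''
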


Actually, this theorem gives the rotation invariance of the
regularized moment system and its globally hyperbolicity. Since the
translation invariance of the system is apparent, it is concluded that
the regularized system is Galilean invariant. Precisely, if another
coordinates $(\tilde{x}_1, \dots, \tilde{x}_D)$ are chosen and the
vector $\bn$ is along the $\tilde{x}_1$-axis, then the rotated moment
system is equivalent to the original one. This result is easy to
understand: on one hand, Grad's moment system is rotationally
invariant, since the full $M$-degree polynomials are used in the
truncated Hermite expansion; on the other hand, our regularization is
symmetric in every direction, which can be considered as ``isotropic''
in some sense. However, a rigorous proof of this theorem is rather
tedious.

In the literature, two types of indices have been used in the moment
methods. In Grad's paper \cite{Grad}, indices such as
\begin{equation} \label{eq:ind_Grad}
\vartheta = (\vartheta_1, \cdots, \vartheta_m) \in \mathcal{D}^m
\end{equation}
is used to denote the $m$-th order moments, while in \cite{NRxx}, the
symbols
\begin{equation} \label{eq:ind_NRxx}
\alpha = (\alpha_1, \cdots, \alpha_D) \in \bbN^D
\end{equation}
is used as the subscripts of $|\alpha|$-th order moments. The former
is convenient for mathematical proofs, while the latter is easier to
use in the numerical implementation, since for \eqref{eq:ind_NRxx},
the map from the index set to the moment set is a bijection, while
this is not true for \eqref{eq:ind_Grad}. If \eqref{eq:ind_NRxx} and
\eqref{eq:ind_Grad} represent the same moment, then one has
\begin{equation} \label{eq:ind_rel}
\alpha = e_{\vartheta_1} + \cdots + e_{\vartheta_m}, \quad
  m = |\alpha|.
\end{equation}
Below, both types of indices are needed in the proof of rotation
invariance, and we will always use the variant forms of Greek letters
such as $\vartheta$ and $\varphi$ to denote the Grad-type indices, and
normal Greek letters such as $\alpha$ and $\beta$ will be used to
denote indices like \eqref{eq:ind_NRxx}. The Greek letter ``sigma''
denotes the conversion between them. Supposing \eqref{eq:ind_rel}
holds, we write
\begin{equation}
\alpha = \sigma(\vartheta), \quad \vartheta = \varsigma(\alpha).
\end{equation}
That is, the normal form of sigma $\sigma(\cdot)$ converts indices
like \eqref{eq:ind_Grad} to indices like \eqref{eq:ind_NRxx}, and the
variant form of sigma $\varsigma(\cdot)$ does the inverse conversion.
Note that for a given $\alpha$, the Grad-type index $\vartheta$
satisfying \eqref{eq:ind_rel} is not uniquely determined. Define
\begin{equation} \label{eq:bbSigma}
\bbSigma(\alpha) = \{
  \vartheta \in \mathcal{D}^{|\alpha|}
    \mid \sigma(\vartheta) = \alpha
\},
\end{equation}
and then in most cases, $\bbSigma(\alpha)$ has more than one element.
For example, if $D = 2$ and $\alpha = (2, 2)$, then
\begin{equation}
\bbSigma(\alpha) = \{(1, 1, 2, 2), (1, 2, 1, 2), (1, 2, 2, 1),
  (2, 2, 1, 1), (2, 1, 2, 1), (2, 1, 1, 2) \}.
\end{equation}
Thus $\varsigma(\alpha)$ has multiple values. However, there is always
one special element $\vartheta \in \bbSigma(\alpha)$ satisfying
\begin{equation}
\vartheta_1 \leqslant \cdots \leqslant \vartheta_{|\alpha|},
\end{equation}
and we use this element as the value of $\varsigma(\alpha)$. It is
easy to find
\begin{equation}
\sigma(\varsigma(\alpha)) = \alpha.
\end{equation}
Additionally, we use $\sigma_i(\vartheta)$ to denote the $i$-th
component of $\sigma(\vartheta)$.

Based on these symbols, we have the following lemma:
\begin{lemma} \label{lem:sum}
Suppose $\alpha \in \bbN^D$ and $F(\cdot)$ is a function on
$\mathcal{D}^m$. If $F$ satisfies that $F(\varphi)$ is zero when
$\sigma_i(\varphi) < \alpha_i$ for some $i \in \mathcal{D}$, then the
following equality holds:
\begin{equation}
\sum_{\varphi \in \mathcal{D}^m} F(\varphi)
  = \sum_{\substack{\beta \in \bbN^D\\|\beta| = m - |\alpha|}}
    \sum_{\varphi \in \bbSigma(\beta + \alpha)} F(\varphi).
\end{equation}
\end{lemma}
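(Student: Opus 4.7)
The plan is to prove the identity by a disjoint-union partition of the index set $\mathcal{D}^m$ according to the value of $\sigma$, followed by a simple change of summation variable.

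First I would observe that for every $\varphi \in \mathcal{D}^m$, there is a unique $\gamma \in \bbN^D$ with $|\gamma| = m$ such that $\sigma(\varphi) = \gamma$, namely $\gamma_i = \#\{k : \varphi_k = i\}$. Hence by the definition \eqref{eq:bbSigma} of $\bbSigma$, the index set $\mathcal{D}^m$ decomposes as the disjoint union
\begin{equation}
\mathcal{D}^m = \bigsqcup_{\substack{\gamma \in \bbN^D \\ |\gamma| = m}} \bbSigma(\gamma),
\end{equation}
and consequently
\begin{equation}
\sum_{\varphi \in \mathcal{D}^m} F(\varphi) = \sum_{\substack{\gamma \in \bbN^D \\ |\gamma| = m}} \sum_{\varphi \in \bbSigma(\gamma)} F(\varphi).
\end{equation}

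Next I would invoke the vanishing hypothesis on $F$. Since every $\varphi \in \bbSigma(\gamma)$ satisfies $\sigma_i(\varphi) = \gamma_i$, whenever $\gamma_i < \alpha_i$ for some $i \in \mathcal{D}$ the entire inner sum vanishes. Therefore the outer sum can be restricted to those $\gamma$ with $\gamma_i \geqslant \alpha_i$ for all $i$, i.e.\ $\gamma \geqslant \alpha$ componentwise. Making the substitution $\beta = \gamma - \alpha$, which is a bijection between $\{\gamma \in \bbN^D : \gamma \geqslant \alpha,\ |\gamma| = m\}$ and $\{\beta \in \bbN^D : |\beta| = m - |\alpha|\}$, yields exactly the right-hand side of the claimed identity.

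There is really no obstacle here; the statement is a bookkeeping lemma that turns a sum over ordered tuples into a sum over unordered multisets, with the constraint $\gamma \geqslant \alpha$ absorbed into the vanishing hypothesis. The only point that warrants explicit verification is the disjointness of the partition and the fact that the change of variables $\beta = \gamma - \alpha$ preserves membership in $\bbN^D$ (which it does because of the restriction $\gamma \geqslant \alpha$ componentwise). I would therefore keep the proof to a few lines, presenting the partition and the substitution as above.
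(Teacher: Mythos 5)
Your proof is correct and follows essentially the same route as the paper's: both rest on the disjointness of the fibers $\bbSigma(\gamma)$ of $\sigma$, the vanishing hypothesis to discard the fibers with $\gamma_i<\alpha_i$ for some $i$, and the reindexing $\beta=\gamma-\alpha$. The paper merely phrases it by building the union $\bigcup_{|\beta|=m-|\alpha|}\bbSigma(\beta+\alpha)$ directly and checking its complement in $\mathcal{D}^m$ contributes nothing, which is the same argument in a slightly different order.
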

\begin{proof}
It is obvious that
\begin{equation}
\mathcal{I} \triangleq
  \bigcup_{\substack{\beta \in \bbN^D\\ |\beta| = m - |\alpha|}}
  \bbSigma(\beta + \alpha) \subset \mathcal{D}^m,
\end{equation}
and there are no duplicate elements in the union since $\bbSigma(\beta
+ \alpha) \cap \bbSigma(\tilde{\beta} + \alpha) = \emptyset$ if $\beta
\neq \tilde{\beta}$. Thus it only remains to prove that $\varphi \in
\mathcal{I}$ if
\begin{equation}
\varphi \in \mathcal{D}^m, \qquad \text{and}
  \qquad \sigma_i(\varphi) \geqslant \alpha_i,
  \quad \forall i \in \mathcal{D}.
\end{equation}
This is true since $\varphi \in \bbSigma(\beta + \alpha)$ for $\beta =
\sigma(\varphi) - \alpha$.
\end{proof}
As a special case of Lemma \ref{lem:sum}, we set $\alpha = 0$ and
have
\begin{equation} \label{eq:sum}
\sum_{\varphi \in \mathcal{D}^m} F(\varphi)
  = \sum_{\substack{\beta \in \bbN^D\\|\beta| = m}}
    \sum_{\varphi \in \bbSigma(\beta)} F(\varphi).
\end{equation}
Here $F(\cdot)$ is an arbitrary function on $\mathcal{D}^m$.

Some more symbols are introduced as follows. All $m$-permutations of
the set $\{1, \cdots, n\}$ form the following set:
\begin{equation} \label{eq:cal_A}
\mathcal{A}_n^m = \{
  \varpi = (\varpi_1, \cdots, \varpi_m) \in \{1, \cdots, n\}^m
  \mid \varpi_i \neq \varpi_j \text{ if } i \neq j
\}, \quad \forall m, n \in \bbN, \quad n \geqslant m,
\end{equation}
which contains $n! / m!$ elements. Thus when we want to construct a
short vector using the components of a long vector, we will use the
following notation:
\begin{equation}
\vartheta_{\varpi} =
  (\vartheta_{\varpi_1}, \cdots, \vartheta_{\varpi_m})
  \in \mathcal{D}^m, \quad \forall \vartheta \in \mathcal{D}^n,
  \quad \varpi \in \mathcal{A}_n^m.
\end{equation}
The remaining part is denoted as $\vartheta \backslash
\vartheta_{\varpi}$. For example, if $\vartheta = (1, 3, 2, 3, 1, 2,
1)$ and $\varpi = (5, 2, 4)$, then
\begin{equation}
\vartheta_{\varpi} = (1, 3, 3), \quad
  \vartheta \backslash \vartheta_{\varpi} = (1, 2, 2, 1).
\end{equation}

Below, ${\bf G} = (g_{ij})_{D\times D}$ stands for the rotation
matrix, and we suppose ${\bf G}$ is orthogonal and its the determinant
is $1$. Define
\begin{equation} \label{eq:Pi_g}
\Pi_g(\vartheta, \varphi) = \prod_{i=1}^n g_{\vartheta_i \varphi_i},
  \quad \forall \vartheta, \varphi \in \mathcal{D}^n,
\end{equation}
and then we have the following lemma:
\begin{lemma} \label{lem:ind}
For a given matrix $\bf G$ and multi-indices $\alpha, \beta \in
\bbN^D$, the following equality holds for arbitrary $\vartheta \in
\mathcal{D}^{|\alpha| + |\beta|}$:
\begin{equation} \label{eq:count}
\sum_{\varphi \in \bbSigma(\alpha + \beta)}
  \frac{\sigma(\varphi)!}{\sigma(\vartheta)!} \Pi_g(\vartheta,\varphi)
= \frac{\alpha!}{\sigma(\vartheta)!}
  \sum_{\varpi \in \mathcal{A}_{|\alpha| + |\beta|}^{|\beta|}}
    \Pi_g(\vartheta_{\varpi}, \varsigma(\beta))
  \sum_{\varphi \in \bbSigma(\alpha)}
    \Pi_g(\vartheta \backslash \vartheta_{\varpi}, \varphi).
\end{equation}
\end{lemma}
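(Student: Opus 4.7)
The plan is to show both sides equal the same scalar multiple of $\sum_{\tilde\varphi\in\bbSigma(\alpha+\beta)}\Pi_g(\vartheta,\tilde\varphi)$, by exploiting an easy observation on the left and performing a combinatorial reassembly on the right.

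First, on the left-hand side, every $\varphi\in\bbSigma(\alpha+\beta)$ satisfies $\sigma(\varphi)=\alpha+\beta$, so the ratio $\sigma(\varphi)!/\sigma(\vartheta)!=(\alpha+\beta)!/\sigma(\vartheta)!$ is independent of $\varphi$ and factors out of the summation. The identity therefore reduces to showing that the double sum on the right equals $\frac{(\alpha+\beta)!}{\alpha!}\sum_{\tilde\varphi\in\bbSigma(\alpha+\beta)}\Pi_g(\vartheta,\tilde\varphi)$.

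Next, I would introduce a reassembly map $(\varpi,\varphi)\mapsto\tilde\varphi$: given $\varpi\in\mathcal{A}_{|\alpha|+|\beta|}^{|\beta|}$ and $\varphi\in\bbSigma(\alpha)$, define $\tilde\varphi\in\mathcal{D}^{|\alpha|+|\beta|}$ by inserting $\varsigma(\beta)_j$ at position $\varpi_j$ and filling the remaining positions, left to right, with the components of $\varphi$. By construction $\sigma(\tilde\varphi)=\alpha+\beta$, so $\tilde\varphi\in\bbSigma(\alpha+\beta)$, and regrouping the $|\alpha|+|\beta|$ factors $g_{\vartheta_i,\tilde\varphi_i}$ according to whether $i$ appears in $\varpi$ yields the factorization $\Pi_g(\vartheta_\varpi,\varsigma(\beta))\,\Pi_g(\vartheta\backslash\vartheta_\varpi,\varphi)=\Pi_g(\vartheta,\tilde\varphi)$. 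The right-hand side then rewrites as $\frac{\alpha!}{\sigma(\vartheta)!}\sum_{\tilde\varphi\in\bbSigma(\alpha+\beta)}N(\tilde\varphi)\,\Pi_g(\vartheta,\tilde\varphi)$, where $N(\tilde\varphi)$ counts the preimages of $\tilde\varphi$ under the reassembly map.

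The main obstacle is the combinatorial count $N(\tilde\varphi)$. Since $\varphi$ is forced to be the complementary subsequence of $\tilde\varphi$ once $\varpi$ is chosen, only the number of valid $\varpi$'s matters. As $\varsigma(\beta)$ is the sorted representative of $\beta$, the $\beta_k$ slots where $\varsigma(\beta)$ equals $k$ must be filled, in some order, with $\beta_k$ of the $(\alpha+\beta)_k$ positions of $\tilde\varphi$ that carry value $k$; the delicate point is keeping the choice of positions separate from their ordering within each value class. The resulting per-value factor is $\binom{(\alpha+\beta)_k}{\beta_k}\beta_k!=(\alpha+\beta)_k!/\alpha_k!$, and taking the product over $k\in\mathcal{D}$ yields $N(\tilde\varphi)=(\alpha+\beta)!/\alpha!$. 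Substituting this constant back and cancelling the $\alpha!$ against the prefactor recovers exactly the simplified form of the left-hand side, completing the argument.
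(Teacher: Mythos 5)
Your proof is correct, and it takes a genuinely different route from the paper's. The paper argues by induction on $|\beta|$: it first establishes the case $\beta = e_d$ by noting that $\sigma(\varphi)! = (\alpha_d+1)\,\alpha!$ for $\varphi \in \bbSigma(\alpha+e_d)$ and that each product $\Pi_g(\vartheta,\varphi)$ arises exactly $(\alpha_d+1)$ times on the right-hand side (once for each index $i$ with $\varphi_i = d$), and then peels one $e_d$ off $\beta$ at a time, applying the base case followed by the inductive hypothesis with $\hat{\vartheta}^i = \vartheta\backslash\vartheta_i$. You instead handle all of $\beta$ in one step: after factoring the constant $(\alpha+\beta)!/\sigma(\vartheta)!$ out of the left-hand side, you set up the reassembly map $(\varpi,\varphi)\mapsto\tilde{\varphi}$, verify the factorization $\Pi_g(\vartheta_\varpi,\varsigma(\beta))\,\Pi_g(\vartheta\backslash\vartheta_\varpi,\varphi)=\Pi_g(\vartheta,\tilde{\varphi})$ (which holds because both $\vartheta\backslash\vartheta_\varpi$ and your filling rule preserve left-to-right order), and compute the fibre size $N(\tilde{\varphi}) = \prod_{k\in\mathcal{D}} \binom{(\alpha+\beta)_k}{\beta_k}\,\beta_k! = (\alpha+\beta)!/\alpha!$ directly; the complementary subsequence is indeed forced to lie in $\bbSigma(\alpha)$ since it has $(\alpha+\beta)_k-\beta_k=\alpha_k$ entries equal to $k$. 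At bottom both arguments rest on the same multiplicity count --- the paper's base case is your computation specialised to $|\beta|=1$ --- but yours dispenses with the induction and exposes the combinatorial meaning of the factor $(\alpha+\beta)!/\alpha!$ in a single counting step, whereas the paper trades that global count for a chain of easier single-index counts at the price of some index bookkeeping in the inductive step. Both are valid; I would only ask you to spell out the order-preservation point in the factorization a little more explicitly, since that is the one place where a sloppy definition of the reassembly map could break the bijection between fibres and choices of $\varpi$.
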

\begin{proof}
We first consider the case $|\beta| = 1$. Suppose $\beta = e_d$, and
then \eqref{eq:count} becomes
\begin{equation} \label{eq:start_case}
\sum_{\varphi \in \bbSigma(\alpha + e_d)}
  \frac{\sigma(\varphi)!}{\sigma(\vartheta)!} \Pi_g(\vartheta,\varphi) =
\frac{\alpha!}{\sigma(\vartheta)!}
  \sum_{i=1}^{|\alpha| + 1} g_{\vartheta_i d}
  \sum_{\varphi \in \bbSigma(\alpha)}
    \Pi_g(\vartheta \backslash \vartheta_i, \varphi).
\end{equation}
For $\varphi \in \bbSigma(\alpha + e_d)$, one has $\sigma(\varphi)! =
(\alpha_d + 1) \alpha!$. Thus \eqref{eq:start_case} is equivalent to
\begin{equation} \label{eq:prod_frac}
(\alpha_d + 1) \sum_{\varphi \in \bbSigma(\alpha + e_d)}
  \Pi_g(\vartheta, \varphi) =
\sum_{i=1}^{|\alpha| + 1} g_{\vartheta_i d}
  \sum_{\varphi \in \bbSigma(\alpha)}
    \Pi_g(\vartheta \backslash \vartheta_i, \varphi).
\end{equation}
For an arbitrary $\varphi \in \bbSigma(\alpha + e_d)$, if $\varphi_i =
d$, then $\Pi_g(\vartheta, \varphi) = g_{\vartheta_i d}
\Pi_g(\vartheta \backslash \vartheta_i, \varphi \backslash
\varphi_i)$, and $\varphi \backslash \varphi_i \in \bbSigma(\alpha)$.
Since there are $(\alpha_d + 1)$ choices of $i$ such that $\varphi_i =
d$, the product $\Pi_g(\vartheta, \varphi)$ appears $(\alpha_d + 1)$
times in the right hand side of \eqref{eq:prod_frac}. This proves
\eqref{eq:start_case}.

Suppose the lemma holds for $|\beta| = m - 1$, and we are going to
prove the case $|\beta| = m$. In order to use the
technique of induction, we choose $d \in \{1,\cdots,D\}$ such that
$\beta_d > 0$, and let $\beta' = \beta - e_d$. Thus $|\beta'| = m -
1$. Applying \eqref{eq:start_case}, one has
\begin{equation}
\begin{split}
\sum_{\varphi \in \bbSigma(\alpha + \beta)}
  \frac{\sigma(\varphi)!}{\sigma(\vartheta)!} \Pi_g(\vartheta,\varphi)
& = \sum_{\varphi \in \bbSigma(\alpha + \beta' + e_d)}
  \frac{\sigma(\varphi)!}{\sigma(\vartheta)!}
  \Pi_g(\vartheta,\varphi) \\
& = \frac{(\alpha + \beta')!}{\sigma(\vartheta)!}
  \sum_{i=1}^{|\alpha| + |\beta|} g_{\vartheta_i d}
  \sum_{\varphi \in \bbSigma(\alpha + \beta')}
    \Pi_g(\vartheta \backslash \vartheta_i, \varphi) \\
& = \sum_{i=1}^{|\alpha| + |\beta|}
  \frac{\sigma(\vartheta \backslash \vartheta_i)!}{\sigma(\vartheta)!}
  g_{\vartheta_i d} \sum_{\varphi \in \bbSigma(\alpha + \beta')}
    \frac{\sigma(\varphi)!}{\sigma(\vartheta \backslash \vartheta_i)!}
    \Pi_g(\vartheta \backslash \vartheta_i, \varphi)
\end{split}
\end{equation}
Defining $\hat{\vartheta}^i = \vartheta \backslash \vartheta_i$, and
using the inductive assumption, one obtains
\begin{equation} \label{eq:induction}
\begin{split}
\sum_{\varphi \in \bbSigma(\alpha + \beta)}
  \frac{\sigma(\varphi)!}{\sigma(\vartheta)!} \Pi_g(\vartheta,\varphi)
= \frac{\alpha!}{\sigma(\vartheta)!}
  \sum_{i=1}^{|\alpha| + |\beta|} g_{\vartheta_i d}
  \sum_{\varpi \in \mathcal{A}_{|\alpha|+|\beta|-1}^{|\beta|-1}}
    \Pi_g(\hat{\vartheta}_{\varpi}^i, \sigma(\beta'))
  \sum_{\varphi \in \bbSigma(\alpha)}
    \Pi_g(\hat{\vartheta}^i \backslash \hat{\vartheta}_{\varpi}^i, \varphi)
\end{split}
\end{equation}
It is evident that the right hand sides of \eqref{eq:count} and
\eqref{eq:induction} are the same. Thus the lemma is proved.
\end{proof}

Now let us start the rotation. We first define
\begin{equation} \label{eq:tilde_x}
\tilde{x}_i = \sum_{j=1}^D g_{ij} x_j, \quad i = 1,\cdots,D,
\end{equation}
and denote by $\tilde{\rho}$, $\tilde{\bu}$, $\tilde{\theta}$ the
density, macroscopic velocity and temperature in the new coordinates
$\tilde{\bx} = (\tilde{x}_1, \cdots, \tilde{x}_D)$. If we define
$\tilde{\bxi} = {\bf G} \bxi$, then the orthogonality of ${\bf G}$
shows
\begin{equation}
\begin{gathered}
\tilde{\rho} = \int_{\bbR^D} f(\bxi) \dd \tilde{\bxi}
  = \int_{\bbR^D} f(\bxi) \dd \bxi = \rho, \\
\tilde{\rho} \tilde{\bu}
  = \int_{\bbR^D} \tilde{\bxi} f(\bxi) \dd \tilde{\bxi}
  = \int_{\bbR^D} {\bf G} \bxi f(\bxi) \dd \bxi
  = \rho {\bf G} \bu, \\
\tilde{\rho} \tilde{\theta}
  = \frac{1}{D} \int_{\bbR^D}
    |\tilde{\bxi} - \tilde{\bu}|^2 f(\bxi) \dd \tilde{\bxi}
  = \frac{1}{D} \int_{\bbR^D}
    |\bxi - \bu|^2 f(\bxi) \dd \bxi
  = \rho \theta,
\end{gathered}
\end{equation}
and it follows immediately that
\begin{equation} \label{eq:tilde_theta_u}
\tilde{\theta} = \theta, \qquad
  \tilde{u}_i = \sum_{j=1}^D g_{ij} u_j, \quad i = 1,\cdots,D.
\end{equation}
Now we consider the general moments $\tilde{f}_{\alpha}$ in the
coordinates $\tilde{\bx}$. Define $\bz = (\bxi - \bu) / \sqrt{\theta}$
and $\tilde{\bz} = (\tilde{\bxi} - \tilde{\bu}) /
\sqrt{\tilde{\theta}}$. Then $\tilde{\bz} = {\bf G} \bz$. The
orthogonality of Hermite polynomials gives
\begin{equation} \label{eq:f_alpha_tilde_f_alpha}
\begin{gathered}
f_{\alpha} = \frac{(2\pi)^D \theta^{|\alpha| + D}}{\alpha!}
  \int_{\bbR^D} f(\bu + \sqrt{\theta} \bz)
    \mathcal{H}_{\theta,\alpha}(\bz)
    \exp \left( -\frac{|\bz|^2}{2} \right) \dd \bz, \\
\tilde{f}_{\alpha} =
  \frac{(2\pi)^D \tilde{\theta}^{|\alpha| + D}}{\alpha!}
  \int_{\bbR^D} f(\bu + \sqrt{\theta} \bz)
    \mathcal{H}_{\theta,\alpha}(\tilde{\bz})
    \exp \left( -\frac{|\tilde{\bz}|^2}{2} \right) \dd \tilde{\bz},
\end{gathered}
\end{equation}
From the definition of Hermite polynomials \eqref{eq:He}, it is easy
to find that \eqref{eq:cal_H} can be rewritten as
\begin{equation}
\mathcal{H}_{\theta,\alpha}(\bz) =
  (-1)^{|\alpha|} (2\pi)^{-\frac{D}{2}}
  \theta^{-\frac{{D}+|\alpha|}{2}}
  \frac{\partial^{|\alpha|}}{\partial^{\alpha} \bz}
  \exp \left( -\frac{|\bz|^2}{2} \right).
\end{equation}
Applying the chain rule of differentiation, we obtain
\begin{equation} \label{eq:H_tilde_v}
\begin{split}
\mathcal{H}_{\theta,\alpha}(\tilde{\bz}) & =
  (-1)^{|\alpha|} (2\pi)^{-\frac{D}{2}}
  \theta^{-\frac{{D}+|\alpha|}{2}}
  \sum_{\varphi \in \mathcal{D}^{|\alpha|}}
    \Pi_g(\varsigma(\alpha), \varphi)
  \frac{\partial^{|\alpha|}}{\partial^{\sigma(\varphi)} \bz}
  \exp \left( -\frac{|\bz|^2}{2} \right) \\
& = \sum_{\varphi \in \mathcal{D}^{|\alpha|}}
  \Pi_g(\varsigma(\alpha), \varphi)
  \mathcal{H}_{\theta, \sigma(\varphi)} (\bz).
\end{split}
\end{equation}
Collecting \eqref{eq:f_alpha_tilde_f_alpha} and \eqref{eq:H_tilde_v},
one has
\begin{equation} \label{eq:tilde_f_alpha}
\begin{split}
\tilde{f}_{\alpha} &= \frac{(2\pi)^D \theta^{|\alpha| + D}}{\alpha!}
  \sum_{\varphi \in \mathcal{D}^{|\alpha|}}
    \Pi_g(\varsigma(\alpha), \varphi)
    \int_{\bbR^D} f(\bu + \sqrt{\theta} \bz)
      \mathcal{H}_{\theta, \sigma(\varphi)} (\bz)
      \exp \left( -\frac{|\bz|^2}{2} \right)
    \dd \bz \\
&= \sum_{\varphi \in \mathcal{D}^{|\alpha|}}
  \frac{\sigma(\varphi)!}{\alpha!}
  \Pi_g(\varsigma(\alpha), \varphi) f_{\sigma(\varphi)}.
\end{split}
\end{equation}
As in \eqref{eq:basic_moments}, all the rotated moments can also be
collected into a vector denoted as $\tilde{\bw}$. The equations
\eqref{eq:tilde_f_alpha} and \eqref{eq:tilde_theta_u} directly give
the following result:
\begin{lemma}
Based on the expressions of the rotated moments
\eqref{eq:tilde_f_alpha} and \eqref{eq:tilde_theta_u}, the following
equalities hold for arbitrary $\alpha \in \bbN^D$:
\begin{gather}
\label{eq:u_dfdx}
\sum_{d=1}^D \tilde{u}_d
  \frac{\partial \tilde{f}_{\alpha}}{\partial \tilde{x}_d} =
\sum_{d=1}^D \sum_{\varphi \in \mathcal{D}^{|\alpha|}}
  \frac{\sigma(\varphi)!}{\alpha!} \Pi_g(\varsigma(\alpha), \varphi)
  \cdot u_d \frac{\partial f_{\sigma(\varphi)}}{\partial x_d}, \\
\label{eq:alpha_dfdx}
\sum_{d=1}^D (\alpha_d + 1)
  \frac{\partial \tilde{f}_{\alpha + e_d}}{\partial \tilde{x}_d} =
\sum_{d=1}^D \sum_{\varphi \in \mathcal{D}^{|\alpha|}}
  \frac{\sigma(\varphi)!}{\alpha!} \Pi_g(\varsigma(\alpha), \varphi)
  \cdot (\sigma_d(\varphi) + 1)
  \frac{\partial f_{\sigma(\varphi) + e_d}}{\partial x_d},
\end{gather}
where $\sigma_d(\varphi)$ is the $d$-th component of
$\sigma(\varphi)$.
\end{lemma}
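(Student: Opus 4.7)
The plan is to combine three ingredients: the chain rule for the orthogonal change of variables $\tilde{\bx} = {\bf G}\bx$, the rotation formula \eqref{eq:tilde_theta_u} for $\tilde{u}_d$, and the closed-form expression \eqref{eq:tilde_f_alpha} for $\tilde{f}_\alpha$. Repeated use of the orthogonality identity $\sum_d g_{di}g_{dj} = \delta_{ij}$ will supply every cancellation. The first identity \eqref{eq:u_dfdx} should fall out almost immediately, while the second identity \eqref{eq:alpha_dfdx} will need one small combinatorial reshuffle that is the main point.

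For \eqref{eq:u_dfdx}, I would first apply the chain rule to write $\partial/\partial\tilde{x}_d = \sum_k g_{dk}\,\partial/\partial x_k$ and substitute $\tilde{u}_d = \sum_j g_{dj} u_j$ from \eqref{eq:tilde_theta_u}. Multiplying and summing over $d$, the orthogonality relation collapses the double sum, so that $\sum_d \tilde{u}_d\,\partial/\partial\tilde{x}_d$ reduces to $\sum_j u_j\,\partial/\partial x_j$; the transport operator is rotation invariant. Inserting \eqref{eq:tilde_f_alpha} into this and noting that the coefficients $\sigma(\varphi)!\,\Pi_g(\varsigma(\alpha),\varphi)/\alpha!$ do not depend on $\bx$, the $x_j$-derivative passes through, and after a trivial relabeling the right-hand side of \eqref{eq:u_dfdx} appears.

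For \eqref{eq:alpha_dfdx}, the same chain rule produces
\[
\sum_d (\alpha_d+1)\frac{\partial \tilde{f}_{\alpha+e_d}}{\partial \tilde{x}_d} = \sum_{d,k} g_{dk}\,(\alpha_d+1)\frac{\partial \tilde{f}_{\alpha+e_d}}{\partial x_k}.
\]
I would expand $\tilde{f}_{\alpha+e_d}$ by \eqref{eq:tilde_f_alpha} and absorb $(\alpha_d+1)$ through $(\alpha+e_d)!=(\alpha_d+1)\alpha!$, turning the expression into a triple sum over $d$, $k$ and $\psi\in\mathcal{D}^{|\alpha|+1}$ with weight $g_{dk}\,\sigma(\psi)!\,\Pi_g(\varsigma(\alpha+e_d),\psi)/\alpha!$ acting on $\partial f_{\sigma(\psi)}/\partial x_k$. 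The key observation is that for any $\beta\in\bbN^D$ and any $\vartheta,\vartheta'\in\bbSigma(\beta)$,
\[
\sum_{\psi\in\mathcal{D}^{|\beta|}}\Pi_g(\vartheta,\psi)\,h(\sigma(\psi)) = \sum_{\psi\in\mathcal{D}^{|\beta|}}\Pi_g(\vartheta',\psi)\,h(\sigma(\psi))
\]
for any function $h$, since a permutation of the entries of $\vartheta$ is absorbed by applying the same permutation to $\psi$, and $\sigma(\psi)$ is insensitive to ordering. This allows me to replace $\varsigma(\alpha+e_d)$ by the unsorted element $(\varsigma(\alpha),d)\in\bbSigma(\alpha+e_d)$. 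Writing $\psi=(\varphi,l)$ with $\varphi\in\mathcal{D}^{|\alpha|}$ and $l\in\mathcal{D}$, the product factors as $g_{dl}\,\Pi_g(\varsigma(\alpha),\varphi)$, while $\sigma(\psi)=\sigma(\varphi)+e_l$. After this substitution, the orthogonality $\sum_d g_{dk}g_{dl}=\delta_{kl}$ collapses the pair $(d,l)$ to $l=k$, leaving $\sum_{k,\varphi}\frac{(\sigma(\varphi)+e_k)!}{\alpha!}\,\Pi_g(\varsigma(\alpha),\varphi)\,\partial f_{\sigma(\varphi)+e_k}/\partial x_k$. Finally, rewriting $(\sigma(\varphi)+e_k)! = (\sigma_k(\varphi)+1)\,\sigma(\varphi)!$ and renaming $k\to d$ recovers the right-hand side of \eqref{eq:alpha_dfdx}.

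The main obstacle is the permutation-invariance step: one must recognize that $\varsigma(\alpha+e_d)$ can be replaced by the non-sorted tuple $(\varsigma(\alpha),d)$ inside the sum over $\psi$. Once this is accepted, both identities reduce to bookkeeping with the orthogonality of ${\bf G}$ and the factorial identity $(\alpha+e_d)!=(\alpha_d+1)\alpha!$. Lemma \ref{lem:ind} is not required for this particular lemma, but is presumably the tool that will handle the more intricate terms (the ones involving $\theta$- and $\rho$-dependent products) when assembling the full proof of Theorem \ref{thm:rot_inv}.
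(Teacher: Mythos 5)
Your proof is correct and follows essentially the same route as the paper: expand $\tilde{f}$ via \eqref{eq:tilde_f_alpha}, absorb $(\alpha_d+1)$ through $(\alpha+e_d)!=(\alpha_d+1)\alpha!$, split off the index paired with the extra entry $d$, and convert derivatives using the orthogonality of ${\bf G}$ (the paper applies \eqref{eq:partial_x} at the end instead of the chain rule at the start, which is the same computation). Your explicit permutation-invariance argument for replacing $\varsigma(\alpha+e_d)$ by $(\varsigma(\alpha),d)$ merely spells out a step the paper performs implicitly, so no substantive difference remains.
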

\begin{proof}
Using \eqref{eq:tilde_f_alpha} and \eqref{eq:tilde_theta_u} directly,
we get
\begin{equation} \label{eq:u_dfdx_mid}
\sum_{d=1}^D \tilde{u}_d
  \frac{\partial \tilde{f}_{\alpha}}{\partial \tilde{x}_d} =
\sum_{d=1}^D \sum_{\varphi \in \mathcal{D}^{|\alpha|}}
  \frac{\sigma(\varphi)!}{\alpha!} \Pi_g(\varsigma(\alpha), \varphi)
  \cdot \sum_{j=1}^D g_{dj} u_j
  \frac{\partial f_{\sigma(\varphi)}}{\partial \tilde{x}_d}.
\end{equation}
Equation \eqref{eq:tilde_x} shows that
\begin{equation} \label{eq:partial_x}
\frac{\partial}{\partial x_j} =
  \sum_{d=1}^D g_{dj} \frac{\partial}{\partial \tilde{x}_d}.
\end{equation}
Thus \eqref{eq:u_dfdx} is the direct result of \eqref{eq:u_dfdx_mid}
and \eqref{eq:partial_x}.

The proof of \eqref{eq:alpha_dfdx} is also straightforward:
\begin{equation}
\begin{split}
\sum_{d=1}^D (\alpha_d + 1)
  \frac{\partial \tilde{f}_{\alpha + e_d}}{\partial \tilde{x}_d}
&= \sum_{d=1}^D (\alpha_d + 1)
  \sum_{\varphi \in \mathcal{D}^{|\alpha|+1}}
  \frac{\sigma(\varphi)!}{(\alpha + e_d)!}
  \Pi_g (\varsigma(\alpha + e_d), \varphi)
  \frac{\partial f_{\sigma(\varphi)}}{\partial \tilde{x}_d}. \\
&= \sum_{d=1}^D \sum_{j=1}^D g_{dj}
  \sum_{\varphi \in \mathcal{D}^{|\alpha|}}
    \frac{(\sigma(\varphi) + e_j)!}{\alpha!}
    \Pi_g(\varsigma(\alpha), \varphi)
  \frac{\partial f_{\sigma(\varphi) + e_j}}{\partial \tilde{x}_d} \\
&= \sum_{j=1}^D \sum_{\varphi \in \mathcal{D}^{|\alpha|}}
  \frac{\sigma(\varphi)!}{\alpha!} \Pi_g(\varsigma(\alpha), \varphi)
  \cdot (\sigma_j(\varphi) + 1)
  \frac{\partial f_{\sigma(\varphi) + e_j}}{\partial x_j}.
\end{split}
\end{equation}
This equality is identical to \eqref{eq:alpha_dfdx}.
\end{proof}

Using Lemma \ref{lem:ind}, it is not difficult to prove the following
lemma:
\begin{lemma} \label{lem:eqs1}
The following equalities hold for arbitrary $\alpha \in \bbN^D$:
\begin{equation}
\label{eq:dudt_f}
\sum_{d=1}^D \frac{\partial \tilde{u}_d}{\partial t}
  \tilde{f}_{\alpha - e_d} =
\sum_{d=1}^D \sum_{\varphi \in \mathcal{D}^{|\alpha|}}
  \frac{\sigma(\varphi)!}{\alpha!} \Pi_g(\varsigma(\alpha), \varphi)
  \frac{\partial u_d}{\partial t} f_{\sigma(\varphi) - e_d}, 
\end{equation}
\begin{equation}
\label{eq:dthetadt_f}
\sum_{d=1}^D \frac{\partial \tilde{\theta}}{\partial t}
  \tilde{f}_{\alpha - 2e_d} =
\sum_{d=1}^D \sum_{\varphi \in \mathcal{D}^{|\alpha|}}
  \frac{\sigma(\varphi)!}{\alpha!} \Pi_g(\varsigma(\alpha), \varphi)
  \frac{\partial \theta}{\partial t} f_{\sigma(\varphi) - 2e_d},
\end{equation}
\begin{equation}
\label{eq:dfdx}
\sum_{d=1}^D
  \frac{\partial \tilde{f}_{\alpha - e_d}}{\partial \tilde{x}_d} =
\sum_{d=1}^D \sum_{\varphi \in \mathcal{D}^{|\alpha|}}
  \frac{\sigma(\varphi)!}{\alpha!} \Pi_g(\varsigma(\alpha), \varphi)
  \frac{\partial f_{\sigma(\varphi) - e_d}}{\partial x_d},
\end{equation}
\begin{equation}
\label{eq:dudx_f}
\sum_{j=1}^D \sum_{d=1}^D
  \frac{\partial \tilde{u}_d}{\partial \tilde{x}_j}
  \tilde{f}_{\alpha - e_d - e_j} =
\sum_{j=1}^D \sum_{d=1}^D \sum_{\varphi \in \mathcal{D}^{|\alpha|}}
  \frac{\sigma(\varphi)!}{\alpha!} \Pi_g(\varsigma(\alpha), \varphi)
  \frac{\partial u_d}{\partial x_j}
  f_{\sigma(\varphi) - e_d - e_j},
\end{equation}
\begin{equation}
\label{eq:dthetadx_f}
\sum_{j=1}^D \sum_{d=1}^D
  \frac{\partial \tilde{\theta}}{\partial \tilde{x}_j}
  \tilde{f}_{\alpha - 2e_d - e_j} =
\sum_{j=1}^D \sum_{d=1}^D \sum_{\varphi \in \mathcal{D}^{|\alpha|}}
  \frac{\sigma(\varphi)!}{\alpha!} \Pi_g(\varsigma(\alpha), \varphi)
  \frac{\partial \theta}{\partial x_j} f_{\sigma(\varphi) - 2e_d - e_j}.
\end{equation}
\end{lemma}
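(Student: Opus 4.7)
Plan: All five identities have the same shape — a sum of rotated coefficients times rotated moments on the left, matched against a multi-indexed sum of original coefficients times $f$-moments on the right — and I would prove them all by the uniform template used already for \eqref{eq:u_dfdx} and \eqref{eq:alpha_dfdx}. The first step, for each identity, is to expand the tilded moment on the left-hand side via \eqref{eq:tilde_f_alpha}, producing a sum over $\psi\in\mathcal{D}^{|\alpha|-k}$ where $k=1,2,$ or $3$ according to whether $e_d$, $2e_d$, $e_d+e_j$, or $2e_d+e_j$ is subtracted from $\alpha$. Simultaneously, I would rewrite each rotated derivative using \eqref{eq:tilde_theta_u} and the chain rule \eqref{eq:partial_x}, so that $\partial\tilde{u}_d/\partial t$, $\partial/\partial\tilde{x}_j$, etc., become linear combinations of untilded derivatives with extra $g_{dj}$-factors. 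The identity $\tilde\theta=\theta$ eliminates one potential $g$-factor in the $\theta$-derivative identities \eqref{eq:dthetadt_f} and \eqref{eq:dthetadx_f}.

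After this substitution, each left-hand side is a multiple sum in which one or two loose $g_{dj}$-factors sit outside the $\Pi_g(\varsigma(\alpha-\cdot),\psi)$-string that came from \eqref{eq:tilde_f_alpha}. The key step is to absorb those loose $g$-factors back into the $\Pi_g$-string by invoking Lemma \ref{lem:ind} with $\vartheta=\varsigma(\alpha)$ and $\beta$ chosen to match the gap between the truncated index and $\alpha$: $\beta=e_j$ for \eqref{eq:dudt_f} and \eqref{eq:dfdx}, $\beta=2e_j$ for \eqref{eq:dthetadt_f}, $\beta=e_d+e_j$ for \eqref{eq:dudx_f}, and $\beta=2e_d+e_j$ for \eqref{eq:dthetadx_f}. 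On the right-hand side, the special case \eqref{eq:sum} of Lemma \ref{lem:sum} reorganizes the sum over $\varphi\in\mathcal{D}^{|\alpha|}$ as a double sum over $|\gamma|=|\alpha|-k$ and $\varphi\in\bbSigma(\gamma+\beta)$. Matching coefficients of $f_\gamma$ on both sides and checking the factorial bookkeeping then completes the proof.

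The only real obstacle is combinatorial bookkeeping, and it is most involved in \eqref{eq:dthetadx_f}, where three units ($2e_d+e_j$) must be reinserted and the subcases $d=j$ and $d\neq j$ produce slightly different $\varsigma(\beta)$ patterns that must both match. The sum over $\varpi$ in Lemma \ref{lem:ind} automatically supplies the correct multiplicities, which after simplification boil down to the factorial identities $\alpha_d(\alpha-e_d)!=\alpha!$, $\alpha_d(\alpha_d-1)(\alpha-2e_d)!=\alpha!$, and their counterparts for mixed truncations — each of which is elementary. No new conceptual ingredient beyond Lemma \ref{lem:ind} is needed; the proof of \eqref{eq:alpha_dfdx} in the preceding lemma already demonstrates exactly this mechanism for the simpler case $\beta=e_d$, and Lemma \ref{lem:eqs1} is a systematic enumeration of the remaining analogous cases that will then feed into the rotation-invariance proof of Theorem \ref{thm:rot_inv}.
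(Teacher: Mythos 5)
Your plan is essentially the paper's own argument run in the opposite direction: the paper starts from the right-hand sides, groups the sum over $\varphi\in\mathcal{D}^{|\alpha|}$ by Lemma~\ref{lem:sum}, applies Lemma~\ref{lem:ind} (its base case \eqref{eq:start_case} for the one-unit gaps, the $|\beta|=2$ case for the two-unit gaps), rejoins the sums via \eqref{eq:sum}, and then recognizes the rotated quantities through \eqref{eq:tilde_f_alpha}, \eqref{eq:tilde_theta_u} and \eqref{eq:partial_x}; you propose to expand the tilded side first and absorb the loose $g$-factors by the same Lemma~\ref{lem:ind}. That direction of travel is immaterial, and the paper itself only writes out \eqref{eq:dudt_f} and \eqref{eq:dthetadt_f}, leaving the remaining three to ``the same technique,'' exactly as you enumerate them.

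There is, however, one ingredient you omit and even explicitly disclaim: for the identities with a $2e_d$ gap, \eqref{eq:dthetadt_f} and \eqref{eq:dthetadx_f}, Lemma~\ref{lem:ind} together with factorial bookkeeping is \emph{not} enough. After applying Lemma~\ref{lem:ind} with $\beta=2e_d$ (your ``$2e_j$''), each term carries the pair $g_{\vartheta_i d}\,g_{\vartheta_j d}$ with $\vartheta=\varsigma(\alpha)$, including cross terms with $\vartheta_i\neq\vartheta_j$ that do not cancel term by term; only the outer sum over $d$ kills them, via the orthogonality of ${\bf G}$, $\sum_{d=1}^D g_{\vartheta_i d}g_{\vartheta_j d}=\delta_{\vartheta_i\vartheta_j}$. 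It is precisely this collapse that leaves the $\alpha_c(\alpha_c-1)$ equal-index pairs whose multiplicity your factorial identity $\alpha_c(\alpha_c-1)(\alpha-2e_c)!=\alpha!$ then absorbs; the paper invokes this orthogonality step explicitly in its proof of \eqref{eq:dthetadt_f}. So your remark that $\tilde\theta=\theta$ makes the $\theta$-identities easier locates the difficulty in the wrong place: it removes the derivative's $g$-factor, but the two-unit gap is where the extra (orthogonality) argument is needed, and your claim that ``no new conceptual ingredient beyond Lemma~\ref{lem:ind} is needed'' should be amended accordingly. (Minor: in \eqref{eq:dudt_f}, \eqref{eq:dfdx}, \eqref{eq:dthetadt_f} the gap indices should read $e_d$, $e_d$, $2e_d$ --- there is no $j$ in those identities.) With that correction, your outline closes and matches the paper's proof.
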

\begin{proof}
Recalling that $f_{\beta} = 0$ if $\beta$ has a negative component and
using Lemma \ref{lem:sum}, we have the following equality:
\begin{equation}
\begin{split}
I & \triangleq\sum_{d=1}^D \sum_{\varphi \in \mathcal{D}^{|\alpha|}}
  \frac{\sigma(\varphi)!}{\alpha!} \Pi_g(\varsigma(\alpha), \varphi)
  \frac{\partial u_d}{\partial t} f_{\sigma(\varphi) - e_d} \\
& = \sum_{d=1}^D
  \sum_{\substack{\beta \in \bbN^D\\|\beta| = |\alpha| - 1}}
  \sum_{\varphi \in \bbSigma(\beta + e_d)}
  \frac{\sigma(\varphi)!}{\alpha!} \Pi_g(\varsigma(\alpha), \varphi)
  \frac{\partial u_d}{\partial t} f_{\beta}.
\end{split}
\end{equation}
Let $\vartheta = \varsigma(\alpha)$ and use \eqref{eq:start_case}, and
we have
\begin{equation}
I = \sum_{d=1}^D
  \sum_{\substack{\beta \in \bbN^D\\|\beta| = |\alpha| - 1}}
  \frac{\beta!}{\alpha!} \sum_{i=1}^{|\alpha|} g_{\vartheta_i d}
  \sum_{\varphi \in \bbSigma(\beta)}
    \Pi_g(\vartheta \backslash \vartheta_i, \varphi)
    \frac{\partial u_d}{\partial t} f_{\beta}.
\end{equation}
Now we employ \eqref{eq:sum} to join two of the summation symbols in
the above equation:
\begin{equation}
I = \sum_{d=1}^D \frac{\partial u_d}{\partial t}
  \sum_{i=1}^{|\alpha|} g_{\vartheta_i d}
  \sum_{\varphi \in \mathcal{D}^{|\alpha|-1}}
    \frac{\sigma(\varphi)!}{\alpha!}
    \Pi_g(\vartheta \backslash \vartheta_i, \varphi)
    f_{\sigma(\varphi)}.
\end{equation}
Using \eqref{eq:tilde_f_alpha} and \eqref{eq:tilde_theta_u} again, we
get
\begin{equation}
I = \sum_{d=1}^D \sum_{i=1}^{|\alpha|}
  \frac{1}{\alpha_{\vartheta_i}} g_{\vartheta_i d}
  \frac{\partial u_d}{\partial t} \tilde{f}_{\alpha - e_{\vartheta_i}}
= \sum_{i=1}^{|\alpha|} \frac{1}{\alpha_{\vartheta_i}}
  \frac{\partial \tilde{u}_{\vartheta_i}}{\partial t}
  \tilde{f}_{\alpha - e_{\vartheta_i}}
= \sum_{d=1}^D \frac{\partial \tilde{u}_d}{\partial t}
  \tilde{f}_{\alpha - e_d}.
\end{equation}
Thus \eqref{eq:dudt_f} is proved. The equation \eqref{eq:dthetadt_f}
can be proved in a similar way. Setting $\vartheta =
\varsigma(\alpha)$ and using Lemma \ref{lem:sum}, Lemma \ref{lem:ind}
and \eqref{eq:tilde_f_alpha}, we obtain
\begin{equation} \label{eq:II}
\begin{split}
\mathit{II} & \triangleq
  \sum_{d=1}^D \sum_{\varphi \in \mathcal{D}^{|\alpha|}}
  \frac{\sigma(\varphi)!}{\alpha!} \Pi_g(\varsigma(\alpha), \varphi)
  \frac{\partial \theta}{\partial t} f_{\sigma(\varphi) - 2e_d} \\
&= \sum_{d=1}^D
  \sum_{\substack{\beta \in \bbN^D\\|\beta| = |\alpha| - 2}}
  \sum_{\varphi \in \bbSigma(\beta + 2e_d)}
  \frac{\sigma(\varphi)!}{\sigma(\vartheta)!} \Pi_g(\vartheta,\varphi)
  \frac{\partial \theta}{\partial t} f_{\beta} \\
&= \sum_{d=1}^D
  \sum_{\substack{\beta \in \bbN^D\\|\beta| = |\alpha| - 2}}
    \frac{\beta!}{\alpha!}
  \sum_{\substack{i,j = 1,\cdots,|\alpha| \\ i\neq j}}
    g_{\vartheta_i d} g_{\vartheta_j d}
  \sum_{\varphi \in \bbSigma(\beta)}
    \Pi_g(\vartheta \backslash \vartheta_{(i,j)}, \varphi)
    \frac{\partial \theta}{\partial t} f_{\beta} \\
&= \sum_{d=1}^D \sum_{\substack{i,j = 1,\cdots,|\alpha| \\ i\neq j}}
  \frac{(\alpha - e_{\vartheta_i} - e_{\vartheta_j})!}{\alpha!}
  g_{\vartheta_i d} g_{\vartheta_j d}
  \frac{\partial \tilde{\theta}}{\partial t}
  \tilde{f}_{\alpha - e_{\vartheta_i} - e_{\vartheta_j}}.
\end{split}
\end{equation}
Since $\bf G$ is an orthogonal matrix, one has
\begin{equation}
\sum_{d=1}^D g_{\vartheta_i d} g_{\vartheta_j d} =
  \delta_{\vartheta_i \vartheta_j}.
\end{equation}
Thus \eqref{eq:II} can be further simplified as
\begin{equation}
\mathit{II} =
  \sum_{\substack{i,j = 1,\cdots,|\alpha| \\
    i\neq j, \, \vartheta_i = \vartheta_j}}
    \frac{1}{\alpha_{\vartheta_i} (\alpha_{\vartheta_i} - 1)}
    \frac{\partial \tilde{\theta}}{\partial t}
    \tilde{f}_{\alpha - 2e_{\vartheta_i}}
= \sum_{d=1}^D \frac{\partial \tilde{\theta}}{\partial t}
  \tilde{f}_{\alpha - 2e_d},
\end{equation}
which completes the proof of \eqref{eq:dthetadt_f}. The equations
\eqref{eq:dfdx}\eqref{eq:dudx_f}\eqref{eq:dthetadx_f} can be proved
using exactly the same technique. The detailed proofs are omitted here
to avoid redundancy.
\end{proof}

It is not difficult to find that \eqref{eq:dudt_f} and
\eqref{eq:dthetadt_f} still hold if we replace $t$ with $x_j$ or
$\tilde{x}_j$ for any $j = 1, \cdots, D$. Such observation leads to
the following two lemmas:
\begin{lemma}
The following equalities hold for arbitrary $\alpha \in \bbN^D$:
\begin{gather}
\label{eq:dudx_u_f}
\sum_{j=1}^D \sum_{d=1}^D
  \frac{\partial \tilde{u}_d}{\partial \tilde{x}_j}
  \tilde{u}_j \tilde{f}_{\alpha - e_d} =
\sum_{j=1}^D \sum_{d=1}^D \sum_{\varphi \in \mathcal{D}^{|\alpha|}}
  \frac{\sigma(\varphi)!}{\alpha!} \Pi_g(\varsigma(\alpha), \varphi)
  \frac{\partial u_d}{\partial x_j} u_j f_{\sigma(\varphi) - e_d}, \\
\label{eq:dthetadx_u_f}
\sum_{j=1}^D \sum_{d=1}^D
  \frac{\partial \tilde{\theta}}{\partial \tilde{x}_j}
  \tilde{u}_j \tilde{f}_{\alpha - 2e_d} =
\sum_{j=1}^D \sum_{d=1}^D \sum_{\varphi \in \mathcal{D}^{|\alpha|}}
  \frac{\sigma(\varphi)!}{\alpha!} \Pi_g(\varsigma(\alpha), \varphi)
  \frac{\partial \theta}{\partial x_j} u_j f_{\sigma(\varphi) - 2e_d}.
\end{gather}
\end{lemma}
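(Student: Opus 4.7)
The plan is to reduce both identities to the already-proved Lemma~\ref{lem:eqs1} via the chain rule and the orthogonality of the rotation matrix ${\bf G}$, exactly following the hint made immediately before the statement.

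First I would observe that the proofs of \eqref{eq:dudt_f} and \eqref{eq:dthetadt_f} in Lemma~\ref{lem:eqs1} use $\partial/\partial t$ purely as a linear operator, invoking nothing beyond the constant-coefficient relations $\tilde{u}_d = \sum_j g_{dj} u_j$ from \eqref{eq:tilde_theta_u} and $\tilde{\theta}=\theta$. Since the entries $g_{dj}$ are constants, those proofs carry over verbatim with $\partial/\partial t$ replaced by any $\partial/\partial x_k$. I would state this generalisation explicitly at the start: for every $k \in \mathcal{D}$,
\[
\sum_{d=1}^D \frac{\partial \tilde{u}_d}{\partial x_k}\,\tilde{f}_{\alpha-e_d}
= \sum_{d=1}^D \sum_{\varphi \in \mathcal{D}^{|\alpha|}}
\frac{\sigma(\varphi)!}{\alpha!}\,\Pi_g(\varsigma(\alpha),\varphi)\,
\frac{\partial u_d}{\partial x_k}\, f_{\sigma(\varphi)-e_d},
\]
and the analogous identity with $(\tilde{u}_d, u_d, \alpha-e_d)$ replaced by $(\tilde{\theta}, \theta, \alpha-2e_d)$.

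Next, to prove \eqref{eq:dudx_u_f}, I would rewrite the LHS using \eqref{eq:partial_x} to express $\partial/\partial\tilde{x}_j = \sum_k g_{jk}\,\partial/\partial x_k$ and \eqref{eq:tilde_theta_u} to express $\tilde{u}_j = \sum_l g_{jl}u_l$. The resulting expression contains the factor $\sum_j g_{jk}g_{jl}$, which collapses to $\delta_{kl}$ by orthogonality of ${\bf G}$, producing $\sum_{k,d} u_k\,(\partial \tilde{u}_d/\partial x_k)\,\tilde{f}_{\alpha - e_d}$. Applying the generalised identity above at each $k$ and re-indexing yields the RHS of \eqref{eq:dudx_u_f}. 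The proof of \eqref{eq:dthetadx_u_f} proceeds along the identical template, with the $\theta$-analogue used in the last step; the expansion is even cleaner because $\partial\tilde{\theta}/\partial\tilde{x}_j = \sum_k g_{jk}\,\partial\theta/\partial x_k$ directly, using $\tilde{\theta}=\theta$.

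The main subtlety, and arguably the only one, is recognising that generalising Lemma~\ref{lem:eqs1} from $\partial/\partial t$ to $\partial/\partial x_k$ comes for free from its proof. Once this step is made, everything else is a mechanical bookkeeping exercise in which the rotation factors introduced by expanding $\partial/\partial\tilde{x}_j$ and $\tilde{u}_j$ are precisely annihilated by the orthogonality of ${\bf G}$. No new combinatorial identities beyond those in Lemmas~\ref{lem:sum} and~\ref{lem:ind} are required.
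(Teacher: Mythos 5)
Your proposal is correct and follows essentially the same route as the paper: the paper likewise starts from the observation that \eqref{eq:dudt_f} and \eqref{eq:dthetadt_f} remain valid with $t$ replaced by a spatial variable, then converts between $\partial/\partial x_j$ and $\partial/\partial\tilde{x}_j$ via \eqref{eq:partial_x} and absorbs the rotation factors into $\tilde{u}_j$. The only (immaterial) difference is direction: you expand the left-hand side and use $\sum_j g_{jk}g_{jl}=\delta_{kl}$, while the paper transforms the right-hand side and recognises $\sum_j g_{ij}u_j=\tilde{u}_i$ directly.
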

\begin{proof}
Replacing $t$ with $x_j$ in \eqref{eq:dudt_f}, we obtain
\begin{equation}
\begin{split}
& \sum_{j=1}^D \sum_{d=1}^D \sum_{\varphi \in \mathcal{D}^{|\alpha|}}
  \frac{\sigma(\varphi)!}{\alpha!} \Pi_g(\varsigma(\alpha), \varphi)
  \frac{\partial u_d}{\partial x_j} u_j f_{\sigma(\varphi) - e_d}
= \sum_{j=1}^D \sum_{d=1}^D
  \frac{\partial \tilde{u}_d}{\partial x_j}
  u_j \tilde{f}_{\alpha - e_d} \\
={} & \sum_{j=1}^D \sum_{d=1}^D \sum_{i=1}^D 
  g_{ij} \frac{\partial \tilde{u}_d}{\partial \tilde{x}_i}
  u_j \tilde{f}_{\alpha - e_d}
= \sum_{d=1}^D \sum_{i=1}^D 
  \frac{\partial \tilde{u}_d}{\partial \tilde{x}_i}
  \tilde{u}_i \tilde{f}_{\alpha - e_d}.
\end{split}
\end{equation}
This equation is the same as \eqref{eq:dudx_u_f}. The proof of
\eqref{eq:dthetadx_u_f} is almost the same.
\end{proof}

\begin{lemma} \label{lem:eqs4}
The following equalities hold for arbitrary $\alpha \in \bbN^D$:
\begin{gather*}
\sum_{j=1}^D \sum_{d=1}^D (\alpha_j + 1)
  \frac{\partial \tilde{u}_d}{\partial \tilde{x}_j}
  \tilde{f}_{\alpha - e_d + e_j} =
\sum_{j=1}^D \sum_{d=1}^D \sum_{\varphi \in \mathcal{D}^{|\alpha|}}
  \frac{\sigma(\varphi)!}{\alpha!} \Pi_g(\varsigma(\alpha), \varphi)
  \cdot (\sigma_j(\varphi) + 1) \frac{\partial u_d}{\partial x_j}
  f_{\sigma(\varphi) - e_d + e_j}, \\
\sum_{j=1}^D \sum_{d=1}^D (\alpha_j + 1)
  \frac{\partial \tilde{\theta}}{\partial \tilde{x}_j}
  \tilde{f}_{\alpha - 2e_d + e_j} =
\sum_{j=1}^D \sum_{d=1}^D \sum_{\varphi \in \mathcal{D}^{|\alpha|}}
  \frac{\sigma(\varphi)!}{\alpha!} \Pi_g(\varsigma(\alpha), \varphi)
  \cdot (\sigma_j(\varphi) + 1) \frac{\partial \theta}{\partial x_j}
  f_{\sigma(\varphi) - 2e_d + e_j}.
\end{gather*}
\end{lemma}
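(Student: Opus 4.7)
The proof plan for Lemma~\ref{lem:eqs4} is to combine the techniques already used for \eqref{eq:dudt_f} in Lemma~\ref{lem:eqs1} (handling the $u_d$ factor with the $-e_d$ shift and a rotated derivative) with the technique used to derive \eqref{eq:alpha_dfdx} (handling the factor $(\sigma_j(\varphi)+1)$ together with the $+e_j$ shift on the moment index). Only the first equality needs to be worked out; the second follows verbatim with $u_d$ replaced by $\theta$ and with the $-e_d$ shift replaced by $-2e_d$, exactly mirroring the passage from \eqref{eq:dudt_f} to \eqref{eq:dthetadt_f}.

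First, I would start from the right-hand side and set $\vartheta=\varsigma(\alpha)$. Terms with $\sigma(\varphi)-e_d+e_j$ having a negative component vanish, so by Lemma~\ref{lem:sum} (applied with offset $\alpha\to e_d$) one can rewrite the sum over $\varphi\in\mathcal{D}^{|\alpha|}$ as a sum over $\beta\in\bbN^D$ with $|\beta|=|\alpha|-1$ and $\varphi\in\bbSigma(\beta+e_d)$. Then Lemma~\ref{lem:ind} applied with the roles $\alpha\to\beta$, $\beta\to e_d$ extracts one factor $g_{\vartheta_i d}$ and leaves an inner sum over $\bbSigma(\beta)$, which by \eqref{eq:sum} can be re-packaged as a sum over $\varphi\in\mathcal{D}^{|\alpha|-1}$. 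This step exactly parallels what is done for \eqref{eq:dudt_f} in the proof of Lemma~\ref{lem:eqs1}.

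Next, the factor $(\sigma_j(\varphi)+1)\,f_{\sigma(\varphi)-e_d+e_j}$ must be processed. Using the identity $(\sigma_j(\varphi)+1)(\sigma(\varphi)-e_d+e_j)!=(\sigma(\varphi)-e_d+e_j+\ldots)$ carefully, this factor corresponds (after reindexing over $\mathcal{D}^{|\alpha|}$) to injecting an extra slot of value $j$ into the Grad-type index, which is precisely the combinatorial move used to derive \eqref{eq:alpha_dfdx}. Concretely, this raises the ambient sum from $\mathcal{D}^{|\alpha|-1}$ back to $\mathcal{D}^{|\alpha|}$, reassembles the moment $\tilde{f}_{\alpha-e_{\vartheta_i}+e_{\ldots}}$ via \eqref{eq:tilde_f_alpha}, and yields an $(\alpha_j+1)$-type coefficient after summing over permutations. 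Summing over $d$ and invoking the orthogonality $\sum_d g_{\vartheta_i d}g_{ld}=\delta_{\vartheta_i l}$ collapses the $u_d$ into $\tilde{u}_{\vartheta_i}$ with the correct $-e_{\vartheta_i}$ shift, while the chain rule \eqref{eq:partial_x} converts $\partial/\partial x_j$ into $\partial/\partial \tilde{x}_j$.

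The main obstacle is bookkeeping: unlike \eqref{eq:dudt_f}, where the shift $-e_d$ produces a moment of strictly lower order and Lemma~\ref{lem:ind} is invoked once cleanly, here the simultaneous shifts $-e_d$ and $+e_j$ keep the moment at order $|\alpha|$, so the two combinatorial manipulations (the one that rotates $u_d$ and the one that produces the $(\alpha_j+1)$ coefficient) act on different ``slots'' of $\varphi$ and must be commuted with care. Once the slot used for the $-e_d$ reduction is fixed by Lemma~\ref{lem:ind} and the remaining slots are treated by the $+e_j$-raising argument of \eqref{eq:alpha_dfdx}, everything factorises and the identity follows; the author's statement that the detailed proofs are ``omitted to avoid redundancy'' is exactly this observation.
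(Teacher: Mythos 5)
Your proposal is correct and takes essentially the same route as the paper: the paper likewise combines the already-proven identity \eqref{eq:dudt_f} (resp.\ \eqref{eq:dthetadt_f}) with the index-raising manipulation used for \eqref{eq:alpha_dfdx} and the chain rule \eqref{eq:partial_x}, and handles the second equality mutatis mutandis. The only difference is organizational: the paper starts from the left-hand side and simply substitutes $\alpha+e_j$ for $\alpha$ and $\tilde{x}_j$ for $t$ in \eqref{eq:dudt_f}, so the two combinatorial moves act sequentially (first on a sum over $\mathcal{D}^{|\alpha|+1}$, then lowering to $\mathcal{D}^{|\alpha|}$) and the slot-commuting bookkeeping you flag as the main obstacle never arises.
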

\begin{proof}
Replacing $t$ with $\tilde{x}_j$ and substituting $\alpha + e_j$
for $\alpha$ in \eqref{eq:dudt_f}, one has
\begin{equation}
\begin{split}
& \sum_{j=1}^D \sum_{d=1}^D (\alpha_j + 1)
  \frac{\partial \tilde{u}_d}{\partial \tilde{x}_j}
  \tilde{f}_{\alpha - e_d + e_j} \\
={} & \sum_{j=1}^D \sum_{d=1}^D (\alpha_j + 1)
  \sum_{\varphi \in \mathcal{D}^{|\alpha| + 1}}
  \frac{\sigma(\varphi)!}{(\alpha + e_j)!}
  \Pi_g(\varsigma(\alpha + e_j), \varphi)
  \frac{\partial u_d}{\partial \tilde{x}_j}
  f_{\sigma(\varphi) - e_d} \\
={} & \sum_{j=1}^D \sum_{d=1}^D \sum_{i=1}^D g_{ji}
  \sum_{\varphi \in \mathcal{D}^{|\alpha|}}
  \frac{(\sigma(\varphi) + e_i)!}{\alpha!}
  \Pi_g(\varsigma(\alpha), \varphi)
  \frac{\partial u_d}{\partial \tilde{x}_j}
  f_{\sigma(\varphi) + e_i - e_d} \\
={} & \sum_{d=1}^D \sum_{i=1}^D
  \sum_{\varphi \in \mathcal{D}^{|\alpha|}}
  \frac{\sigma(\varphi)!}{\alpha!} \Pi_g(\varsigma(\alpha),\varphi)
  \cdot (\sigma_i(\varphi) + 1) \frac{\partial u_d}{\partial x_i}
  f_{\sigma(\varphi) + e_i - e_d}.
\end{split}
\end{equation}
This proves the first equality. The second equality can be similarly
proved, and the details are omitted.
\end{proof}

Now the proof of Theorem \ref{thm:rot_inv} is given as follows:

{\renewcommand\proofname{Proof of Theorem \ref{thm:rot_inv}}
\begin{proof}
  Since $(n_1, \cdots, n_D)$ is a unit vector, we let ${\bf G} =
  (g_{ij})_{D \times D}$ be an orthogonal matrix with its first row as
  $(n_1, \cdots, n_D)$. Now we use this matrix as the rotation matrix
  and define $\tilde{\bw}$ as \eqref{eq:tilde_f_alpha} and
  \eqref{eq:tilde_theta_u}. It is obvious that the relation between
  $\tilde{\bw}$ and $\bw$ is linear.  Therefore, there exists a
  constant matrix $\bf R$ (see \eqref{eq:tilde_f_alpha}) depending on
  ${\bf G}$ such that
\begin{equation}
\tilde{\bw} = {\bf R} \bw,
\end{equation}
and $\bf R$ is invertible since $\bw$ can be obtained from
$\tilde{\bw}$ by applying the rotation matrix ${\bf G}^{-1}$. Lemma
\ref{lem:eqs1}--\ref{lem:eqs4} have clearly shown that the ``rotated
equations''
\begin{equation} \label{eq:rotated_hme}
{\bf T}(\tilde{\bw}) \frac{\partial \tilde{\bw}}{\partial t} +
  \sum_{j = 1}^D {\bf T}(\tilde{\bw}) \hat{\bf M}_j(\tilde{\bw})
    \frac{\partial \tilde{\bw}}{\partial \tilde{x}_j} = 0
\end{equation}
can be deduced from \eqref{eq:var_hme} by linear operations. Thus
there exists a square matrix ${\bf H}(\bw)$ such that
\begin{equation} \label{eq:lt_hme}
{\bf H}(\bw) {\bf T}(\bw) \frac{\partial \bw}{\partial t} +
  \sum_{j = 1}^D {\bf H}(\bw) {\bf T}(\bw) \hat{\bf M}_j(\bw)
    \frac{\partial \bw}{\partial x_j} = 0
\end{equation}
is identical to \eqref{eq:rotated_hme}. Matching the terms with time
derivatives, one finds ${\bf H}(\bw) = {\bf T}(\tilde{\bw}) {\bf R}
{\bf T}(\bw)^{-1}$. Thus \eqref{eq:lt_hme} becomes
\begin{equation}
{\bf T}(\tilde{\bw}) \frac{\partial \tilde{\bw}}{\partial t}
  + \sum_{j=1}^D {\bf T}(\tilde{\bw}) {\bf R} \hat{\bf M}_j(\bw)
    \frac{\partial \bw}{\partial x_j} = 0.
\end{equation}
Using \eqref{eq:partial_x}, the above equation can be rewritten as
\begin{equation}
{\bf T}(\tilde{\bw}) {\bf R} \frac{\partial \bw}{\partial t}
  + \sum_{j=1}^D \sum_{d=1}^D
    g_{dj} {\bf T}(\tilde{\bw}) {\bf R} \hat{\bf M}_j(\bw)
    \frac{\partial \bw}{\partial \tilde{x}_d} = 0.
\end{equation}
Compared with \eqref{eq:rotated_hme}, one concludes
\begin{equation}
\sum_{j=1}^D g_{1j} {\bf T}(\tilde{\bw}) {\bf R} \hat{\bf M}_j(\bw)
  = {\bf T}(\tilde{\bw}) \hat{\bf M}_1({\bf R} \bw) {\bf R}.
\end{equation}
Multiplying both sides by ${\bf R}^{-1} {\bf T}(\tilde{\bw})^{-1}$,
\eqref{eq:rot_inv} is attained. Recalling $\hat{\bf M}_1 = \hat{\bf
A}_M$ and that the first component of the macroscopic velocity after
the rotation is $\bu \cdot \bn$ (see \eqref{eq:tilde_theta_u}), the
diagonalizability and the eigenvalues of the matrix \eqref{eq:rot_inv}
are naturally obtained using Theorem \ref{thm:eigen}.
\end{proof}}


\section{Riemann Problem} \label{sec:rp} 

Though the regularized moment system \eqref{eq:hme} is given by the
moment expansion up to an arbitrary order $M$ thus extremely complex, we
can clarify appreciably the structures of the elementary waves of this
system with Riemann initial value, including the rarefaction wave,
contact discontinuity and shock wave. Definitely, the structure of
the elementary wave is fundamental for further investigation into the
behavior of the solution of the system. Furthermore, the solution
structure of the Riemann problem is instructional for studying the
approximate Riemann solver, which is the basis of the numerical
methods using Godunov type schemes. The analysis below shows that
the structure of the elementary wave of the Riemann problem is quite
natural an extension of that of Euler equations, which indicates that
the regularized moment system \eqref{eq:hme} is actually a very
reasonable high order moment approximation of Boltzmann equation.
Following \cite{Toro} where the multi-dimensional Euler equations are
studied, we consider the $x_1$-split, $D$-dimensional Riemann
problem as below:
\begin{equation}\label{eq:splitRiemannProblem}
\left\{
\begin{aligned}
&\pd{\bw}{t}+\bhAM\pd{\bw}{x_1}=0,\\
&\bw(x_1,t=0)=\left\{\begin{array}{ll}
                \bw_L   &   \text{if } x_1<0,\\ [2mm]
                \bw_R   &   \text{if } x_1>0.
                \end{array}\right.
\end{aligned}
\right.
\end{equation}
The Riemann problem with 1D velocity space has been studied in
\cite{Fan} in detail. Here we focus on the case of $D \ge 2$.

Let us first recall the definition of the notations $\btAM$, $\bhAM$,
$\bB$, $\hat{\bB}$, $\bw$, $\br_{\hat{\alpha},i}$,
$\hat{\br}_{\hat{\alpha},i}$, $\bv^{(j)}$ and $\lambda_{i,k}$ in
Section \ref{sec:hyperbolicms}. In particular, we need the expressions
of $\hat{\bB}\bB^{-1}$ and $\bv^{(j)}$, which read
\begin{displaymath}\tag{\ref{eq:hbb_bbinv}'}\label{eq:hbb_bbinv2}
 \hat{\bB} \bB^{-1}=\left[\begin{array}{ccc}
            \boldsymbol{\rm I} & 0       &  \quad 0\quad \\
            -\bB_{21}          & {\bf I} &  \quad 0\quad \\
            *                  & *       &  \quad \boldsymbol{\rm I}\quad
        \end{array}\right]
\end{displaymath}
and for any $|\alpha|=M$,
    $k=\alpha_1+1,~j=\mathcal{N}_{D-1}(\hat{\alpha})$,
\begin{displaymath}\tag{\ref{eq:eigenvector_value2}'}\label{eq:eigenvector_value22}
\boldsymbol{v}^{(j)}=\hat{\bB}\bB^{-1}I_j,\quad \He_{k}({\rm C})=0,
\end{displaymath}
where $I_j$ is the $j$-th column of the $N_v \times N_v$ identity
matrix. $\hat{\br}_{\hat{\alpha},i}$, which depends on $\bv^{(j)}$ and
$\lambda_{i,k}=u_1+\rC{i}{k}\sqrt{\theta}$, is the eigenvector of
$\bhAM$ for the eigenvalue $\lambda_{i,k}=u_1+\rC{i}{k}\sqrt{\theta}$,
where $j=\mathcal{N}_{D-1}(\hat{\alpha}),~k=M+1-|\hat{\alpha}|$.  As
the first conclusion on the Riemann problem
\eqref{eq:splitRiemannProblem}, we have the following theorem:
\begin{theorem}\label{thm:elementarywave}
Each characteristic field of \eqref{eq:splitRiemannProblem} is either
genuinely nonlinear or linearly degenerate. And one characteristic
field is genuinely nonlinear if and only if $\bv$ (determined by the
right eigenvector through \eqref{eq:def_v}) and the eigenvalue
$\lambda=u_1+{\rm C}\sqrt{\theta}$ satisfy one of the following two
conditions:
\begin{enumerate}
\item $\bv=\bv^{(1)}$, and ${\rm C}$ subject to $\He_{M+1}({\rm C}) =
  0$ and ${\rm C} \neq 0$;
\item $\bv=\bv^{(j)}$, $j=\mathcal{N}_{D-1}(2\hat{e}_k)$,
  $k\in\mathcal{D}\backslash\{1\}$, and ${\rm C}$ subject to
  $\He_{M-1}({\rm C})=0$ and ${\rm C}\neq 0$.
\end{enumerate}
\end{theorem}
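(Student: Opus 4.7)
The plan is to verify, for each eigenpair $(\lambda_{i,k}, \hat{\br}_{\hat{\alpha},i})$ produced by Theorem \ref{thm:eigen}, whether the directional derivative $\nabla_\bw \lambda_{i,k} \cdot \hat{\br}_{\hat{\alpha},i}$ vanishes identically (linearly degenerate) or is nowhere zero (genuinely nonlinear). Since $\lambda_{i,k} = u_1 + C\sqrt{\theta}$ with $C = \rC{i}{k}$ depends on $\bw$ only through $u_1$ and $\theta$, and $\theta = \frac{2}{D\rho}\sum_{d=1}^{D} w_{\mN(2e_d)}$, the only nonzero components of $\nabla_\bw \lambda_{i,k}$ sit at the slots for $\rho$, $u_1$, and $w_{\mN(2e_d)}$ for $d = 1, \ldots, D$, and a one-line chain-rule calculation gives $\partial_\rho \lambda = -C\sqrt{\theta}/(2\rho)$, $\partial_{u_1}\lambda = 1$, and $\partial_{w_{\mN(2e_d)}}\lambda = C/(D\rho\sqrt{\theta})$.

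Next I would extract the corresponding components of the eigenvector. Using $\hat{\br}_{\hat{\alpha},i} = \boldsymbol{\Lambda}^{-1}\br_{\hat{\alpha},i}$ together with the defining relations in \eqref{eq:def_v}, the only components entering the pairing are $(\hat{\br})_\rho = \rho v_1$, $(\hat{\br})_{u_1} = C\sqrt{\theta}\,v_1$, $(\hat{\br})_{w_{\mN(2e_1)}} = \rho\theta \, C^2 v_1 / 2$, and $(\hat{\br})_{w_{\mN(2e_d)}} = \rho\theta\, v_{\mathcal{N}_{D-1}(2\hat{e}_d)}$ for $d \ge 2$. Substituting and simplifying collapses the directional derivative to the clean closed form
\[
\nabla_\bw \lambda_{i,k}\cdot\hat{\br}_{\hat{\alpha},i}
= \frac{C\sqrt{\theta}}{2D}\left[(D+C^2)\,v_1 + 2\sum_{d=2}^D v_{\mathcal{N}_{D-1}(2\hat{e}_d)}\right],
\]
in which the bracket depends only on $\bv^{(j)}$ (through $D$ specific entries) and on $C$, and is otherwise independent of $\bw$. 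Consequently each characteristic field is either identically zero or nowhere zero in the pairing, which is exactly the dichotomy required by the theorem.

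Finally I would run the case analysis on $\bv^{(j)} = \hat{\bB}\bB^{-1} I_j$ using the block structure recorded in \eqref{eq:hbb_bbinv} and \eqref{eq:riemann_b21}. When $j = 1$ (i.e.\ $\hat{\alpha} = 0$, so $C$ solves $\He_{M+1}(C)=0$), the first column of $\hat{\bB}\bB^{-1}$ reads $v_1 = 1$ and $v_{\mathcal{N}_{D-1}(2\hat{e}_d)} = \frac{1}{2} - \frac{C^2}{2D}$, and the bracket simplifies to $(2D-1) + C^2/D > 0$; this gives condition~(1). When $j = \mathcal{N}_{D-1}(2\hat{e}_k)$ for some $k\in\{2,\ldots,D\}$ (so $C$ solves $\He_{M-1}(C)=0$), the corresponding column of $\hat{\bB}\bB^{-1}$ is the identity column in the second diagonal block, so $v_1 = 0$ and only $v_{\mathcal{N}_{D-1}(2\hat{e}_k)} = 1$, making the bracket equal $2$; this gives condition~(2). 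For every remaining $j$, both $v_1 = 0$ and $v_{\mathcal{N}_{D-1}(2\hat{e}_d)} = 0$ for all $d\ge 2$ (since $\bB_{21}$ is supported only in its first column and the third block column of $\hat{\bB}\bB^{-1}$ is zero in the first two row blocks), so the bracket vanishes and the field is linearly degenerate. I do not expect a conceptually hard step: everything reduces to the sparsity already recorded in \eqref{eq:riemann_b21} together with the two diagonal rescalings $\boldsymbol{\Lambda}$ and $\hat{\bB}$. The only place to stay alert is keeping the two meanings of $\lambda$ straight—eigenvalue of $\btAM$ (equal to $C$) when unpacking $\br$, and eigenvalue of $\bhAM$ (equal to $u_1 + C\sqrt{\theta}$) when taking $\nabla_\bw$.
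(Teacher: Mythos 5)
Your proposal is correct and follows essentially the same route as the paper's own proof: the same computation of $\nabla_{\bw}\lambda\cdot\hat{\br}$ using that $\lambda$ depends only on $\rho$, $u_1$ and $p_{2e_d}/2$ (your closed form coincides with \eqref{eq:condition}), followed by the same case analysis on $\bv^{(j)}=\hat{\bB}\bB^{-1}I_j$ via the block structure \eqref{eq:hbb_bbinv}. The only deviation is in case 1, where you take $v_{\mathcal{N}_{D-1}(2\hat{e}_d)}=\tfrac12-\tfrac{{\rm C}^2}{2D}$ while the paper's proof writes $\tfrac{{\rm C}^2}{2D}-\tfrac12$; your sign is the one actually consistent with \eqref{eq:hbb_bbinv} and \eqref{eq:riemann_b21}, and since the resulting bracket is strictly positive for ${\rm C}\neq 0$ under either sign, the genuinely nonlinear/linearly degenerate classification is unaffected.
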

\begin{proof}
Let $\hat{\br}$ denote an eigenvector of $\bhAM$ with the eigenvalue
$\lambda=u_1+\rmC\sqrt{\theta}$ and $\bv$ is the corresponding vector
determined by \eqref{eq:def_v}. Since 
\begin{equation}
\lambda=u_1+\rmC\sqrt{\theta}=
  u_1+\rmC\sqrt{\frac{\sum_{d=1}^Dp_{2e_d}}{D\rho}}
\end{equation}
depends only on $\rho$, $u_1$ and $p_{2e_d}/2$, $d\in\mathcal{D}$, we
have
\begin{equation}\label{eq:condition}
\begin{split}
\nabla_{\bw}\lambda\cdot \hat{\br} &= 
    -\frac{\rmC\sqrt{\theta}}{2\rho}\cdot \rho r_{\rho}
    +1\cdot{\rmC\sqrt{\theta}}r_{\rho}
    +\frac{\rmC}{D\rho\sqrt{\theta}}\cdot\frac{\rmC^2\theta}{2}\rho r_{\rho}
    +\sum_{d=2}^D\frac{\rmC}{D\rho\sqrt{\theta}}\cdot\rho\theta r_{p_{2e_d}/2}\\
&=\frac{\sqrt{\theta}\rmC}{2}\left[\left(1+\frac{\rmC^2}{D}\right)v_1
    +\sum_{d=2}^D\frac{2}{D}v_{\mathcal{N}_{D-1}(2\hat{e}_d)}\right].
\end{split}
\end{equation}
\begin{itemize}
\item
If $\bv=\bv^{(1)}$, then \eqref{eq:eigenvector_value22} shows
$\bv^{(1)}=\hat{\bB}\bB^{-1}I_1$ and $\He_{M+1}({\rm C})=0$. From
\eqref{eq:hbb_bbinv2}, we get
\begin{equation}
v_1=1\text{ and }
v_{\mathcal{N}_{D-1}(2\hat{e}_2)}=\cdots=v_{\mathcal{N}_{D-1}(2\hat{e}_D)}
= \frac{\rmC^2}{2D}-\frac{1}{2}.
\end{equation}
Thus \eqref{eq:condition} can be written as
\begin{equation}
\nabla_{\bw}\lambda\cdot \hat{\br}=  
\frac{(D+1)\sqrt{\theta}\rmC}{4D^2}\left(\rmC^2+D\right).
\end{equation}
Hence,
\[
\left\{
\begin{array}{ll}
  \nabla_{\bw}\lambda\cdot \hat{\br} \equiv 0,     & \rm{if~} \rmC = 0, \\ [2mm]
  \nabla_{\bw}\lambda\cdot \hat{\br} \not\equiv 0, & \rm{otherwise.}
\end{array}\right.
\]
\item
If $\bv=\bv^{(j)},~j=\mathcal{N}_{D-1}(2\hat{e}_k)$ for any
$k\in\mathcal{D}\backslash\{1\}$, \eqref{eq:eigenvector_value22}
shows $\bv=\hat{\bB}\bB^{-1}I_j$ and $\He_{M-1}(\rmC)=0$. 
From \eqref{eq:hbb_bbinv2}, we can get
\[
v_j=1 \text{ and } 
v_1=v_{l}=0, ~l=\mathcal{N}_{D-1}(2\hat{e}_d)\text{ for any
    $d\in\mathcal{D}\backslash\{1,k\}$}.
\]
Then \eqref{eq:condition} is simplified as
\begin{equation}
\nabla_{\bw}\lambda\cdot \hat{\br}_{\hat{\alpha},i} =  
    \frac{\sqrt{\theta}}{D}\rmC.
\end{equation}
Again, we have
\[
\left\{
\begin{array}{ll}
  \nabla_{\bw}\lambda\cdot \hat{\br} \equiv 0,     & \rm{if~} \rmC = 0, \\ [2mm]
  \nabla_{\bw}\lambda\cdot \hat{\br} \not\equiv 0, & \rm{otherwise.}
\end{array}\right.
\]
\item
Otherwise, \eqref{eq:hbb_bbinv2} indicates
$v_1=v_{\mathcal{N}_{D-1}(2\hat{e}_k)}=0$ for each
$k\in\mathcal{D}\backslash\{1\}$.
Hence $\nabla_{\bw}\lambda\cdot \hat{\br}\equiv0$
always holds.
\end{itemize}
This completes the proof.
\end{proof}
This theorem reveals that for each characteristic field, the
eigenvalue is constant or varies monotonically along the integral
curve, resulting in simple wave structures. Below, some elementary
waves including the rarefaction waves, contact discontinuities and
shock waves are studied in detail, and the basic relations across
these waves are established.

The analysis below is based on the fact that an eigenvector
$\hat{\br}$ of $\bhAM$ for the eigenvalue $\lambda = u_1 +
\rmC\sqrt{\theta}$ depends only on $\bv$ and $\rmC$. With Theorem
\ref{thm:elementarywave} and the forms of $\bB$ and $\hat{\bB}\bB$
in Lemma \ref{lem:numberofeigenvectors}, we can divide characteristic
fields into three cases:
\begin{description}
\item[Case 1:]\label{enu:case1}
\begin{math}
\bv=\bv^{(1)},\text{ and ${\rm C}$ subject to }\He_{M+1}({\rm C}) = 0,
    \text{ and } {\rm C}\neq 0.
\end{math}
\item[Case 2:]\label{enu:case2}
\begin{math}
\bv=\bv^{(j)},~ j=\mathcal{N}_{D-1}(2\hat{e}_k)\text{ for any
}k\in\mathcal{D}\backslash\{1\}, \text{ and $\rmC$ subject to }
\He_{M-1}({\rm C})=0, \text{ and } {\rm C}\neq 0.
\end{math}
\item[Case 3:]\label{enu:case3} otherwise.
\end{description}

For convenience, let characteristic field $\alpha$ denote the
characteristic field corresponding to the eigenvector
$\hat{\br}_{\hat{\alpha},i}$ for the eigenvalue
$\lambda_{i,k}=u_1+\rC{i}{k}\sqrt{\theta}$ with
$i=\alpha_1$, $k=M+1-|\hat{\alpha}|$. Below, the rarefaction waves,
contact discontinuities and shock waves will be studied respectively.
\subsection{Rarefaction waves}\label{sec:rarefactionwave}
For the regularized moment system, if two states $\bw^L$ and
$\bw^R$ are connected by a rarefaction wave in a genuinely
nonlinear field $\alpha$, then the following two conditions must be
met:
\begin{itemize}
\item constancy of the \emph{generalised Riemann invariants} across
  the wave, saying the integral curve $\tilde{\bw}(\zeta) =
  (\tilde{w}_1(\zeta), \tilde{w}_2(\zeta), \cdots,
  \tilde{w}_N(\zeta))$ in the $N$-dimensional phase space satisfies
\begin{equation}\label{eq:RiemannInvariant}
\tilde{\bw}'(\zeta)=\hat{\br}_{\hat{\alpha}, i}(\tilde{\bw}),
\end{equation}
with $i=\alpha_1$.
\item
divergence of characteristics
\begin{equation}\label{eq:rarefaction_characteristics}
\lambda_{i,k}({\bw^L})<\lambda_{i,k}({\bw^R}).
\end{equation}
\end{itemize}
Fortunately, for a given point $\bw^0=(\rho^0, u_1^0, \cdots, w_j^0,
\cdots, w_N^0)$ in the phase space, the integral curve across $\bw^0$
can be given. Since $p=\displaystyle \frac{1}{D} \sum_{d=1}^D
p_{2e_d}$, we let $p^0=\displaystyle \frac{1}{D} \sum_{d=1}^D
p^0_{2e_d}$. The results are rather tedious, and here the integral
curves are only partially given in three cases as below:
\begin{itemize}
\item
If $\bv=\bv^{(1)}$, we have
\[
r_{\rho}=\rho,\quad r_{u_1}=\rC{i}{k}\sqrt{\theta},\quad
r_{u_d}=0,
\]\[
r_{p_{2e_1}/2}=\frac{\rC{i}{k}^2}{2}\rho\theta,\quad
r_{p_{2e_d}/2}=\frac{\rC{i}{k}^2-D}{2D}\rho\theta,
    \quad d\in\mathcal{D}\backslash\{1\}.
\]
Let $\Gamma=\dfrac{D-1+\rC{i}{k}^2}{2D-1}$, and then we have
\begin{subequations}\label{eq:rarefactionequation1}
\begin{align}
\tilde{\rho}(\zeta)&=\rho^0\exp(\zeta),\\
\tilde{u}_1(\zeta)&=u_1^0+\frac{2\rC{i}{k}\sqrt{\theta^0}}{\Gamma-1}
    \left[\exp\left(\frac{\Gamma-1}{2}\zeta\right)-1\right],\\
\tilde{u}_d(\zeta)&=u_d^0,\quad d=2,\cdots,D,\\
\tilde{p}_{2e_1}(\zeta)&=p_{2e_1}^0+\frac{\rC{i}{k}^2-D}{D\Gamma}p^0
    \left[\exp\left(\Gamma\zeta\right)-1\right],\\
\tilde{p}(\zeta)&=p^0\exp\left(\Gamma\zeta\right).\label{eq:rarefactionequation1_p}
\end{align}
\end{subequations}
\item
If $\bv=\bv^{(j)}$, $j=\mathcal{N}_{D-1}(2\hat{e}_k)$,
   $k\in\mathcal{D}\backslash\{1\}$, we have
\[
r_{\rho}=0,\quad, r_{u_d}=0,~d\in\mathcal{D},\quad
r_{p_{2e_k}/2} = \rho\theta,\quad
r_{p_{2e_d}/2}=0,~d\in\mathcal{D}\backslash\{k\}.
\]
Hence, the integral curve satisfies
\begin{subequations}\label{eq:rarefactionequation2}
\begin{align}
\tilde{\rho}(\zeta)&=\rho^0,\\
\tilde{u}_d(\zeta)&=u_d^0,\quad d=1,\cdots,D,\\
\tilde{p}_{2e_1}(\zeta)&=p_{2e_1}^0,\\
\tilde{p}(\zeta)&=p^0\exp\left(\frac{2\zeta}{D}\right).\label{eq:rarefactionequation2_p}
\end{align}
\end{subequations}
\item
Otherwise ($\bv=\bv^{(j)}$, $j\neq \mathcal{N}_{D-1}(2\hat{e}_k)$ for
any $k\in\mathcal{D}$),
\[
r_{\rho}=r_{u_1}=r_{p_{2e_d}/2}=0,~~d\in\mathcal{D}.
\]
Hence, we have
\begin{subequations}\label{eq:rarefactionequation3}
\begin{align}
\tilde{\rho}(\zeta)&=\rho^0,    &   \tilde{u}_1(\zeta)&=u_1^0,\\
\tilde{p}_{2e_1}(\zeta)&=p_{2e_1}^0,    &\tilde{p}(\zeta)&=p^0.
\end{align}
\end{subequations}
\end{itemize}
One can check that \eqref{eq:rarefactionequation1},
\eqref{eq:rarefactionequation3} and \eqref{eq:rarefactionequation2}
satisfy \eqref{eq:RiemannInvariant}. And an eigenvalue, which
satisfies \eqref{eq:eigenvector_hA}, of $\bhAM(\tilde{\bw}(\zeta))$ is
as
\begin{equation*}
\begin{split}
&\qquad s_{i,k}(\tilde{\bw}(\zeta))=\tilde{u}_1(\zeta)+
 \rC{i}{k}\sqrt{\tilde{p}(\zeta)/\tilde{\rho}(\zeta)} \\ 
&=\left\{\begin{array}{l}
    u_1^0+\rC{i}{k}\sqrt{\theta^0} +
\dfrac{\Gamma+1}{\Gamma-1}\rC{i}{k}\sqrt{\theta^0}\left[\exp\left(\dfrac{\Gamma-1}{2}\zeta\right)-1\right],
    \quad \text{for $\bv^{(1)}$},\\ [4mm]
    u_1^0+\rC{i}{k}\sqrt{\theta^0}\exp\left(\dfrac{\zeta}{D}\right),
    \quad ~{\text{for $\bv^{(j)}$, $j=\mathcal{N}_{D-1}(2\hat{e}_k)$,
   $k\in\mathcal{D}\backslash\{1\}$}},\\ [4mm]
    u_1^0+\rC{i}{k}\sqrt{\theta^0}, \qquad\qquad\qquad \text{otherwise}.
    \end{array}\right.
\end{split}
\end{equation*}
It is convenient to verify that $s_{i,k}(\tilde{\bw}(\zeta))\gtrless
s_{i,k}(\bw^0)$ if and only if $\rC{i}{k}\zeta\gtrless 0$, and $\bv$
and $\rC{i}{k}$ satisfy case 1 or case 2. Therefore, if the left state
$\bw^L$ and the right state $\bw^R$ are connected by a rarefaction
wave and let $\bw^0=\bw^L$, \eqref{eq:rarefaction_characteristics}
indicates $s_{i,k}(\bw^L)<s_{i,k}(\bw^R)$, hence $\rC{i}{k}\zeta>0$
and $\bv, \rC{i}{k}$ satisfies case 1 or case 2.
Therefore, we have that
\begin{itemize}
\item
for case 1:
$$u_d^L=u_d^R,\quad d=2,\cdots,D,$$
and 
\begin{subequations}
\begin{align}
\text{if } \rC{i}{k}>0, \text{ then } u_1^L<u_1^R, \quad
    p^L<p^R,\\
\text{if } \rC{i}{k}<0, \text{ then } u_1^L<u_1^R, \quad
    p^L>p^R.
\end{align}
\end{subequations}
\item
for case 2:
$$u_d^L=u_d^R,\quad d=2,\cdots,D,$$
and 
\begin{subequations}
\begin{align}
\text{if } \rC{i}{k}>0, \text{ then } u_1^L=u_1^R, \quad
    p^L<p^R,\\
\text{if } \rC{i}{k}<0, \text{ then } u_1^L=u_1^R, \quad
    p^L>p^R.
\end{align}
\end{subequations}
\end{itemize}
\subsection{Contact discontinuities}
For a contact discontinuities, \eqref{eq:RiemannInvariant} is still valid, and
the divergence of characteristics is replaced by
\begin{equation}\label{eq:contact_characteristics}
\lambda_{i,k}(\bw^L)=\lambda_{i,k}(\bw^R).
\end{equation}
According to Theorem \ref{thm:elementarywave} and analysis in
Section \ref{sec:rarefactionwave}, the contact discontinuities can be
founded if and only if $\bv$ and $\rC{i}{k}$ satisfy case 3.
\begin{itemize}
\item For $\bv^{(1)}$, \eqref{eq:contact_characteristics} means
  $\rC{i}{k}=0$. Substituting it into \eqref{eq:rarefactionequation1},
  we can get $u_d$, $d\in\mathcal{D}$ are invariant, while $p$,
  $p_{2e_1}$ are not (otherwise, \eqref{eq:rarefactionequation1_p}
  gives us $\zeta=0$, thus $\bw^L=\bw^R$).
\item For $\bv^{(j)}$, $j=\mathcal{N}_{D-1}(2\hat{e}_k)$,
  $k\in\mathcal{D}\backslash\{1\}$, \eqref{eq:contact_characteristics}
  means $\rC{i}{k} =0$ again. \eqref{eq:rarefactionequation2} shows
  $\rho$, $u_d$, $d\in\mathcal{D}$, $p_{2e_1}$ are invariant, while
  $p$ is not (otherwise, \eqref{eq:rarefactionequation2_p} gives us
  $\zeta=0$, which results $\bw^L=\bw^R$).
\item Otherwise, \eqref{eq:rarefactionequation3} shows $u_1$, $p$,
  $p_{2e_1}$ are all invariant.
\end{itemize}
Summarizing the discussion above, we conclude that if
$\rC{i}{k}\neq0$, then $u_1$, $p$, $p_{2e_1}$ are invariant across the
contact discontinuities, while if $\rC{i}{k}=0$, $u_1$ is invariant
and $p$ is not. However, $u_d$, $d=2, \cdots, D$ may change
discontinuously across a contact discontinuity. In fact, the case $\bv
= \bB^{-1}I_d$, $d=2, \cdots, D$ corresponds to a contact discontinuity
where $u_d$ is discontinuous. This is similar as the Euler equations.
\subsection{Shock waves}
The discussion of the shock wave needs some more scrupulosity. As is
well known, the jump condition on the shock wave is sensitive to the
form of the hyperbolic equations. Thus, before we give the
Rankine-Hugoniot condition, it is necessary to rewrite
\eqref{eq:splitRiemannProblem} in an appropriate form. However,
\eqref{eq:splitRiemannProblem} cannot be written as conservation
laws due to the presence of $\RM^1(\alpha)$. Nevertheless,
\eqref{eq:splitRiemannProblem} can still keep the conservation of the
conservative moments with orders from $0$ to $M-1$. Therefore,
\eqref{eq:splitRiemannProblem} can be reformulated by $\mN((M-1)e_D)$
conservation laws and $N-\mN((M-1)e_D)$ non-conservative equations.

Let
\begin{equation}
\bF=(F_{0},
            F_{e_1},F_{e_2},\cdots,F_{Me_D}),\quad 
F_{\alpha}=\frac{1}{\alpha!}\int_{\bbR^D}\bxi^\alpha f\dd\bxi,
    \quad |\alpha|\le M,
\end{equation}
where $\bxi^\alpha=\prod_{d=1}^D\xi_d^{\alpha_d}$ and $F_{0}$ stands
for $F_{\alpha}|_{\alpha=\bf{0}}$. Then \eqref{eq:splitRiemannProblem}
can be written as
\begin{equation}\label{eq:conservation_F}
\begin{aligned}
\pd{F_{\alpha}}{t}&+(\alpha_1+1)\pd{F_{\alpha+e_1}}{x_1}=0,\quad
    |\alpha|<M,\\
\pd{F_{\alpha}}{t}&+(\alpha_1+1)\pd{\hat{F}_{\alpha}}{x_1}-\RM^1(\alpha)=0,
    \quad|\alpha|=M.
\end{aligned}
\end{equation}
The relation between $\bF$ and $\bw$ is
\begin{equation}
\begin{aligned}
f_{\alpha}=\sum_{|\beta|\le|\alpha|}(-1)^{|\alpha-\beta|}
    F_{\beta}\frac{\He_{\alpha-\beta}\left(\dfrac{\bu}{\sqrt{\theta}}\right)}{(\alpha-\beta)!}\theta^{|\alpha-\beta|/2},\\
    u_i=\frac{F_{e_i}}{F_0},\quad
    p_{2e_i}=2F_{2e_i}-\frac{F_{e_i}^2}{F_0},\quad
    i\in\mathcal{D},
\end{aligned}
\end{equation}
where
$\He_{\alpha}\left(\dfrac{\bu}{\sqrt{\theta}}\right)=\prod_{d=1}^D
\He_{\alpha_d}\left(\dfrac{u_d}{\sqrt{\theta}}\right)$ and
$\He_{\alpha}(\bx)=0$ if at least one $\alpha_j$ is negative. In
addition, $\hat{F}_{\alpha}$ is as
\begin{equation}
\hat{F}_{\alpha}=\sum_{|\beta|\le|\alpha|}
(-1)^{|\alpha-\beta|}
    F_{\beta}\frac{\He_{\alpha+e_1-\beta}\left(\dfrac{\bu}{\sqrt{\theta}}\right)}{(\alpha+e_1-\beta)!}\theta^{|\alpha+e_1-\beta|/2}.
\end{equation}
For convenience, the quasi-linear form of \eqref{eq:conservation_F}
is written as
\begin{equation}\label{eq:halfconservationsystem}
\pd{\bF}{t}+\boldsymbol{\Gamma}(\bF)\pd{\bF}{x_1}=0,
\end{equation}
where $\boldsymbol{\Gamma}(\bF)$ is an $N\times N$ matrix
and depends on \eqref{eq:conservation_F}.

Since \eqref{eq:halfconservationsystem} is not a conservative system,
we have to adopt the DLM theory \cite{Maso} to study the shock wave.
For a shock wave the two constant states $\bF^L$ and $\bF^R$ are
connected through a single jump discontinuity in a genuinely
non-linear field $\alpha$ travelling at the speed $S_{\alpha}$, and
the following two conditions apply
\begin{itemize}
\item
Generalized Rankine-Hugoniot condition:
\begin{equation}\label{eq:RHcondition}
\int_0^1 \left[
  S_{\alpha} \boldsymbol{\rm I} - \boldsymbol{\Gamma} \left( \bPhi(\nu;
                  \bF^L, \bF^R) \right)
\right] \frac{\partial \bPhi}{\partial
            \nu}(\nu;\bF^L,\bF^R) \dd \nu = 0,
\end{equation}
where $\bf I$ is the $N\times N$ identity matrix, and $\bPhi(\nu;
\bF^L, \bF^R)$ is a locally Lipschitz mapping satisfying
\begin{equation}
\bPhi(0; \bF^L, \bF^R) = \bF^L, \quad \bPhi(1; \bF^L, \bF^R) = \bF^R.
\end{equation}
We refer the readers to \cite{Maso} for details. 
\item
Entropy condition:
\begin{equation}\label{eq:shockwave_entropy}
\lambda_{i,k}(\bF^L) > S_{\alpha} > \lambda_{i,k}(\bF^R),
\end{equation}
where $i=\alpha_1$ and $k=M+1-|\hat{\alpha}|$.
\end{itemize}
For conservation laws, \eqref{eq:RHcondition} is the same as the
classical Rankine-Hugoniot condition. Thus the first $\mN((M-1)e_D)$
rows of \eqref{eq:RHcondition} are independent of $\bPhi$. This allows
us to analyze the properties of the shock waves without regarding the
form of $\bPhi$.

The first equation and the $(D+1)$-th equation of
\eqref{eq:RHcondition} are precisely as
\begin{align}
\rho^L u_1^L - \rho^R u_1^R &= S_{\alpha}(\rho^L - \rho^R),
    \label{eq:shockwave_first}\\
\rho^L (u_1^L)^2 + p_{2e_1}^L - \rho^R (u_1^R)^2 - p_{2e_1}^R &=
    S_\alpha(\rho^L u_1^L - \rho^R u_1^R).\label{eq:shockwave_second}
\end{align}
\begin{itemize}
\item If $\rho^L\neq\rho^R$, \eqref{eq:shockwave_first} and
  \eqref{eq:shockwave_second} give
\begin{subequations}\label{eq:shockwave_S}
\begin{align}
S_{\alpha}&=\frac{\rho^L u_1^L - \rho^R u_1^R}{\rho^L -
    \rho^R}\label{eq:shockwave_S1}\\
        &=\frac{\rho^L (u_1^L)^2 + p_{2e_1}^L - \rho^R (u_1^R)^2 -
            p_{2e_1}^R} {\rho^L u_1^L - \rho^R
                u_1^R}\label{eq:shockwave_S2}.
\end{align}
\end{subequations}
Substituting \eqref{eq:shockwave_S1} into \eqref{eq:shockwave_entropy} and
multiplying both sides with $(\rho^L-\rho^R)^2$, we get
\begin{subequations}\label{eq:shockwave_entropy_RL}
\begin{align}
\rho^L(u_1^L-u_1^R)(\rho^L-\rho^R) &>
\rC{i}{k}(\rho^L-\rho^R)^2\sqrt{\theta^R},\label{eq:shockwave_entropy_R}\\
\rho^R(u_1^L-u_1^R)(\rho^L-\rho^R) &<
\rC{i}{k}(\rho^L-\rho^R)^2\sqrt{\theta^L}\label{eq:shockwave_entropy_L}.
\end{align}
\end{subequations}
If $\rC{i}{k}>0$, \eqref{eq:shockwave_entropy_R} gives 
\begin{equation}\label{eq:shockwave_rhou}
(u_1^L-u_1^R)(\rho^L-\rho^R) > 0.
\end{equation} 
Therefore, we can divide \eqref{eq:shockwave_entropy_RL} by 
$(u_1^L-u_1^R)(\rho^L-\rho^R)$ and obtain
\begin{equation}
\frac{\rho^L}{\sqrt{\theta^R}} > 
    \frac{\rC{i}{k} (\rho^L - \rho^R)}{u_1^L - u_1^R} >
    \frac{\rho^R}{\sqrt{\theta^L}}.
\end{equation}
Thus we have
\begin{equation}\label{eq:shockwave_rhop}
(\rho^L)^2\theta^L>(\rho^R)^2\theta^R.
\end{equation}
Furthermore, \eqref{eq:shockwave_S} gives us the relation
\begin{equation}\label{eq:shockwave_rhopu}
(\rho^L-\rho^R)(p_{2e_1}^L-p_{2e_1}^R)=\rho^L\rho^R(u_1^L-u_1^R)^2.
\end{equation}
If $\rho^L<\rho^R$, \eqref{eq:shockwave_rhou} indicates $u_1^L<u_1^R$.
\eqref{eq:shockwave_rhop} can be written as $\rho^Lp^L>\rho^Rp^R$,
so we have $p^L>p^R$.
If $\rho^L>\rho^R$, \eqref{eq:shockwave_rhou} indicates $u_1^L>u_1^R$.
Summarizing these results, we get
\begin{equation}
\begin{aligned}
&\text{if } \rho^L<\rho^R, \text{ then } u_1^L<u_1^R \text{ and }
p^L>p^R,\\
&\text{if } \rho^L>\rho^R, \text{ then } u_1^L>u_1^R. 
\end{aligned}
\end{equation}

Analogously, if $\rC{i}{k}<0$, we have
\begin{equation}
(u_1^L-u_1^R)(\rho^L-\rho^R) < 0,\quad
(\rho^L)^2\theta^L<(\rho^R)^2\theta^R,
\end{equation}
and \eqref{eq:shockwave_rhopu} still holds. Hence we get that
\begin{equation}
\begin{aligned}
&\text{if } \rho^L>\rho^R, \text{ then } u_1^L<u_1^R \text{ and }
p^L<p^R,\\
&\text{if } \rho^L<\rho^R, \text{ then } u_1^L>u_1^R. 
\end{aligned}
\end{equation}
\item
If $\rho^L=\rho^R$, \eqref{eq:shockwave_first} and
\eqref{eq:shockwave_second} make that $u_1^L=u_1^R$ and
$p_{2e_1}^L=p_{2e_1}^R$, respectively.
Therefore, \eqref{eq:shockwave_entropy} is turned into
\begin{equation}\label{eq:shockwave_c2_p}
\rC{i}{k}\sqrt{\theta^L}>\rC{i}{k}\sqrt{\theta^R}.
\end{equation}
The following result is then attained
\begin{subequations}
\begin{align}
&\text{if } \rC{i}{k}>0, \text{ then } u_1^L=u_1^R,\quad
    p^L>p^R,\\
&\text{if } \rC{i}{k}<0, \text{ then } u_1^L=u_1^R,\quad
    p^L<p^R.
\end{align}
\end{subequations}
\end{itemize}

Now we summarize all our discussion on the entropy conditions of the
three types of elementary waves in the following theorem: 
\begin{theorem}
  For the Riemann problem \eqref{eq:splitRiemannProblem}, for the
  wave of the $\alpha$-th family, $\rC{i}{k}$, the
  macroscopic velocities and pressures on both sides of the wave have
  the relation with the type of the wave as in
  Table \ref{wave_type}, where $\rC{i}{k}$ corresponds to the
  eigenvalue $\lambda_{i,k}=u_1+\rC{i}{k}\sqrt{\theta}$, $i=\alpha_1$
  and $k=M+1-|\hat{\alpha}|$.
\begin{table}[ht]
\centering
\begin{tabular}{|l|c||c|}
\hline
Wave type   &   Eigenvalue & Velocity and Pressure    \\ [2mm]\hline
\multirow{2}*{Rarefaction wave} & 
    $\frac{\mathstrut}{\mathstrut}$$\rC{i}{k}>0$ & $u_1^L \le u_1^R$,
    $p^L<p^R$  \\[2mm]\cline{2-3}
    &   $\frac{\mathstrut}{\mathstrut}$$\rC{i}{k}<0$ & $u_1^L \le u_1^R$, $p^L>p^R$  \\  [2mm] \cline{2-3}
\hline\hline
\multirow{3}*{Shock wave}   &
    $\frac{\mathstrut}{\mathstrut}$$\rC{i}{k}>0$ & $u_1^L\le u_1^R$,
    $p^L>p^R$   \\[2mm]\cline{2-3}
    &   $\frac{\mathstrut}{\mathstrut}$$\rC{i}{k}<0$ & $u_1^L\le u_1^R$, $p^L<p^R$   \\  [2mm] \cline{2-3}
    &   $\frac{\mathstrut}{\mathstrut}$$\rC{i}{k}\neq0$              & $u_1^L>u_1^R$    \\  \hline\hline
\multirow{2}*{Contact discontinuity}    &
    $\frac{\mathstrut}{\mathstrut}$$\rC{i}{k}=0$              &
    $u_1^L=u_1^R$  \\[2mm] \cline{2-3}
    &   $\frac{\mathstrut}{\mathstrut}$$\rC{i}{k}\neq0$       & $u_1^L=u_1^R$, $p^L=p^R$ \\\hline
\end{tabular}
\caption{The relation between the type classification of elementary
    wave and the eigenvalue, macroscopic velocity and
        pressure.}\label{wave_type}
\end{table}
\end{theorem}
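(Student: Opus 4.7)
The plan is to consolidate the three case analyses carried out in the preceding subsections into the single table statement. The starting point is Theorem~\ref{thm:elementarywave}, which partitions the characteristic fields at a fixed state into three groups according to the form of $\bv$ and the value of $\rC{i}{k}$: the two genuinely nonlinear cases ($\bv = \bv^{(1)}$, or $\bv$ of the form associated with $2\hat{e}_k$ for some $k \in \mathcal{D}\setminus\{1\}$, each with $\rC{i}{k} \ne 0$), and the linearly degenerate Case~3. Since rarefaction waves and shocks can only occur in genuinely nonlinear fields, while contact discontinuities arise either in Case~3 or in Cases~1--2 at $\rC{i}{k} = 0$, each row of Table~\ref{wave_type} corresponds to exactly one of these situations and can be proved separately.

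For the rarefaction-wave rows, I would fix $\bw^0 = \bw^L$ and read off from the explicit integral curves \eqref{eq:rarefactionequation1}--\eqref{eq:rarefactionequation2}, together with the eigenvalue formula for $s_{i,k}(\tilde{\bw}(\zeta))$ displayed in the excerpt, that the divergence condition \eqref{eq:rarefaction_characteristics} forces $\rC{i}{k}\,\zeta^{*} > 0$ at the endpoint $\tilde{\bw}(\zeta^{*}) = \bw^R$. Substituting this sign of $\zeta^{*}$ into the integral curves yields $u_1^L \le u_1^R$ in Case~1 and $u_1^L = u_1^R$ in Case~2, with the sign of $p^L - p^R$ governed by $\operatorname{sign}(\rC{i}{k})$ in both cases. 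The contact-discontinuity rows then follow from \eqref{eq:contact_characteristics}: in Case~3 the curve \eqref{eq:rarefactionequation3} gives $u_1^L = u_1^R$ and $p^L = p^R$ directly, while in Cases~1--2 constancy of $\lambda_{i,k}$ forces $\rC{i}{k} = 0$, whence \eqref{eq:rarefactionequation1_p} and \eqref{eq:rarefactionequation2_p} give $u_1^L = u_1^R$ but leave $p$ free to jump.

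For the shock-wave rows, although \eqref{eq:halfconservationsystem} is genuinely non-conservative, its first $\mN((M-1)e_D)$ rows are classical conservation laws, so the corresponding rows of the DLM condition \eqref{eq:RHcondition} are independent of the path $\bPhi$. In particular the mass and $x_1$-momentum jumps \eqref{eq:shockwave_first}--\eqref{eq:shockwave_second} hold classically; combining them with the entropy inequality \eqref{eq:shockwave_entropy} and the identity \eqref{eq:shockwave_rhopu}, and splitting on whether $\rho^L = \rho^R$, gives \eqref{eq:shockwave_c2_p} in the degenerate case and \eqref{eq:shockwave_entropy_RL} otherwise. A short sign chase in the subcases $\rC{i}{k} \gtrless 0$ and $\rho^L \gtrless \rho^R$ then produces every entry of the shock-wave rows of Table~\ref{wave_type}.

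The principal obstacle is exactly the non-conservativity of \eqref{eq:halfconservationsystem}, since in full generality the DLM jump condition depends on the choice of regularising path $\bPhi$. The saving observation, which I would emphasise in the proof, is that only the last $N - \mN((M-1)e_D)$ rows of \eqref{eq:RHcondition} actually involve $\RM^1$, while every entry of Table~\ref{wave_type} refers solely to the classical conservative variables $\rho$, $\rho u_1$ and $p_{2e_1}$. Hence the table is completely pinned down by the classical sub-block, and the remaining work reduces to the elementary sign analysis sketched above.
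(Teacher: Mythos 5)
Your proposal is correct and follows essentially the same route as the paper: the theorem is proved there precisely by consolidating the case analyses of Sections 5.1--5.3, using the integral curves together with the divergence/coalescence of characteristics for rarefactions and contacts, and the path-independent (conservative) rows of the DLM jump condition \eqref{eq:RHcondition} plus the entropy inequality \eqref{eq:shockwave_entropy} for shocks. Your emphasized observation---that every table entry involves only $\rho$, $\rho u_1$ and $p_{2e_1}$, so the non-conservative rows carrying $\RM^1$ never enter---is exactly the point the paper makes after \eqref{eq:RHcondition}.
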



\section*{Acknowledgements}
This research was supported in part by the National Basic Research
Program of China (2011CB309704) and Fok Ying Tong Education and NCET
in China.


\section*{Appendix}

\appendix

\section{Collection of Notations}
We list below some of the notations used for convenience.

\LTXtable{\textwidth}{symbols.tex}


\bibliographystyle{plain}
\bibliography{../article}
\end{document}